\documentclass[conference,letterpaper]{IEEEtran}
\IEEEoverridecommandlockouts


\usepackage{cite}
\usepackage{url}
\usepackage{ifthen}
\usepackage{algorithm, algorithmicx, algpseudocode}
\usepackage{graphicx} 
\usepackage{textcomp}
\usepackage{booktabs}
\usepackage{makecell} 
\usepackage[cmex10]{amsmath}
\usepackage{amssymb,amsfonts}
\usepackage[left=1.62cm,right=1.62cm,top=1.85cm]{geometry}
\usepackage{pifont}
\usepackage{amsthm}
\usepackage{mathrsfs}
\def\BibTeX{{\rm B\kern-.05em{\sc i\kern-.025em b}\kern-.08em T\kern-.1667em\lower.7ex\hbox{E}\kern-.125emX}}
\usepackage{tikz}
\usetikzlibrary{automata,positioning,calc}
\usetikzlibrary{intersections}
\usetikzlibrary{decorations.pathreplacing,angles,quotes}

\usepackage{bm}
\usepackage{amscd}

\algnewcommand{\Initialize}[1]{%
  \State \textbf{Initialization:}
  \Statex \hspace*{\algorithmicindent}\parbox[t]{0.8\linewidth}{\raggedright #1}
}
\makeatletter

\theoremstyle{definition}
\newtheorem{theorem}{Theorem}
\newtheorem{definition}{Definition}
\newtheorem{lemma}{Lemma}
\newtheorem{prop}{Proposition}
\newtheorem{cor}{Corollary}
\newtheorem{remark}{Remark}


\interdisplaylinepenalty=2500 




\def\bn{\mathbb N}

\def\vE{\mathbb E}

\font\b=cmr10 scaled\magstep4

\def\bigzerou{\smash{\lower1.7ex\hbox{\b 0}}}
\def\bigzerou{\smash{\lower1.7ex\hbox{\b 0}}}

\begin{document}
\title{Alternating Optimization Approach for Computing $\alpha$-Mutual Information and $\alpha$-Capacity
\thanks{This work was supported by JSPS KAKENHI Grant Number JP23K16886.}
}

\author{
\IEEEauthorblockN{Akira Kamatsuka}
\IEEEauthorblockA{Shonan Institute of Technology \\ 
Email: \text{kamatsuka@info.shonan-it.ac.jp}
 }
\and
\IEEEauthorblockN{Koki Kazama}
\IEEEauthorblockA{Shonan Institute of Technology \\ 
Email: \text{kazama@info.shonan-it.ac.jp}
 }
\and
\IEEEauthorblockN{Takahiro Yoshida}
\IEEEauthorblockA{Nihon University \\ 
Email: \text{yoshida.takahiro@nihon-u.ac.jp}
 } 
}
\maketitle
\begin{abstract}
This study presents alternating optimization (AO) algorithms for computing $\alpha$-mutual information ($\alpha$-MI) and $\alpha$-capacity based on variational characterizations of $\alpha$-MI using a reverse channel. 
Specifically, we derive several variational characterizations of Sibson, Arimoto, Augustin--Csisz{\' a}r, and Lapidoth--Pfister MI and introduce novel AO algorithms for computing $\alpha$-MI and $\alpha$-capacity; their performances for computing $\alpha$-capacity are also compared.
The comparison results show that the AO algorithm based on the Sibson MI's characterization has the fastest convergence speed. 
\end{abstract}

\section{Introduction}
Sibson mutual information (MI) $I_{\alpha}^{\text{S}}(X; Y)$ \cite{Sibson1969InformationR}, Arimoto MI $I_{\alpha}^{\text{A}}(X; Y)$ \cite{arimoto1977}, 
Augustin--Csisz{\' a}r MI $I_{\alpha}^{\text{C}}(X; Y)$ \cite{augusting_phd_thesis},\cite{370121}, and Lapidoth--Pfister MI $I_{\alpha}^{\text{LP}}(X; Y)$ \cite{e21080778},\cite{8231191} 
are well-known generalizations of the Shannon MI $I(X; Y)$ \cite{shannon} using the tunable parameter $\alpha\in (0, 1)\cup (1, \infty)$.
These types of MI are referred to as $\alpha$-mutual information ($\alpha$-MI) \cite{7308959} and defined based on the R{\' e}nyi entropy $H_{\alpha}(X)$, Arimoto conditional entropy $H_{\alpha}^{\text{A}}(X | Y)$, and R{\' e}nyi divergence $D_{\alpha}(\cdot || \cdot)$. 
The maximum value of the $\alpha$-MI for the input distribution $p_{X}$, given by $C_{\alpha}^{(\cdot)}:=\max_{p_{X}} I_{\alpha}^{(\cdot)}(X; Y)$, is referred to as \textit{$\alpha$-capacity}.
The $\alpha$-MI and $\alpha$-capacity characterize various problems in information theory, such as 
the error and correct decoding exponents in channel coding 
\cite{Gallager:1968:ITR:578869,1055007,OMURA1975148,8423117,Nakiboglu:2019aa,e23020199}, the generalized cutoff rate in channel coding \cite{370121}, 
the error and strong converse exponents in composite hypothesis testing \cite{8231191}, and the (maximal) $\alpha$-leakage in privacy-preserving data publishing \cite{8804205}. 

While the Sibson and Arimoto MI have closed-form expressions, the Augustin--Csisz{\' a}r MI, Lapidoth--Pfister MI, and $\alpha$-capacity generally do not; therefore, these types of MI and $\alpha$-capacity must be calculated by solving optimization problems. 
Algorithms for computing the $\alpha$-MI and $\alpha$-capacity have been explored in \cite{arimoto1977,BN01990060en,10619200,augusting_phd_thesis,4595361,9611513,9834648,10619174,10619657,kamatsuka2024_ISITA}.
For the Augustin--Csisz{\' a}r MI for $\alpha\in (0, 1)$, Augustin \cite{augusting_phd_thesis} proposed an iterative algorithm, which Karakos \textit{et al.} \cite{4595361} rediscovered as an alternating optimization (AO) algorithm.
Recently, other iterative algorithms have been proposed in \cite{9611513,9834648,10619174,10619657}, which are effective for $\alpha \in (1, \infty)$.
For the Sibson and Arimoto capacities, Arimoto \cite{arimoto1977,BN01990060en} proposed AO algorithms similar to the Arimoto--Blahut algorithm \cite{1054753,1054855} for calculating the channel capacity $C$. 
Recently, Kamatsuka \textit{et al.} \cite{10619200} proposed additional AO algorithms for computing these capacities and demonstrated that they are equivalent to those suggested by Arimoto.
For Augustin--Csisz{\' a}r capacity, Kamatsuka \textit{et al.} \cite{kamatsuka2024_ISITA} proposed an AO algorithm.
Table \ref{tab:various_alpha_MI} summarizes the $\alpha$-MI, $\alpha$-capacity, and their calculation methods.

\begin{table}[t]
  \caption{Examples of $\alpha$-MI and their calculation methods}
  \label{tab:various_alpha_MI}
  \centering
  \resizebox{.49\textwidth}{!}{
  \begin{tabular}{@{} ccc @{}}
    \toprule
    Name & Calculation method \\ 
    \midrule
    \begin{tabular}{c}
    Sibson MI  \\ 
    $I_{\alpha}^{\text{S}}(X; Y)$ \\ 
    Sibson capacity \\
    $C_{\alpha}^{\text{S}}$
    \end{tabular}
    & \begin{tabular}{c}
    Closed-form expression 
    \ \\ \ \\
    AO algorithms \cite{arimoto1977,10619200}, (\textbf{This work})
    \end{tabular}  
    \\ 
    \midrule
    \begin{tabular}{c}
    Arimoto MI \\ 
    $I_{\alpha}^{\text{A}}(X; Y)$ \\
    Arimoto capacity \\ 
    $C_{\alpha}^{\text{A}}$
    \end{tabular}
    & \begin{tabular}{c}
    Closed-form expression  
    \ \\ \ \\ 
    AO algorithms \cite{BN01990060en,10619200}
    \end{tabular} 
    \\ 
    \midrule
    \begin{tabular}{c}
    Augustin--Csisz{\' a}r MI \\ 
    $I_{\alpha}^{\text{C}}(X; Y)$ \\ \ \\
    Augustin--Csisz{\' a}r capacity \\ 
    $C_{\alpha}^{\text{C}}$
    \end{tabular}
    & \begin{tabular}{c}
    AO algorithm for $\alpha\in (0, 1)$ \cite{augusting_phd_thesis,4595361} \\ 
    AO algorithm for $\alpha\in (1, \infty)$ (\textbf{This work}) \\ 
    Iterative algorithms \cite{9611513,9834648,10619174,10619657}
    \ \\ \ \\
    AO algorithm for $\alpha\in (1, \infty)$ \cite{kamatsuka2024_ISITA}
    \end{tabular} 
    \\ 
    \midrule
    \begin{tabular}{c}
    Lapidoth--Pfister MI \\ 
    $I_{\alpha}^{\text{LP}}(X; Y)$ \\ 
    Lapidoth--Pfister capacity \\ 
    $C_{\alpha}^{\text{LP}}$
    \end{tabular}
    & 
    \begin{tabular}{c}
    AO algorithms (\textbf{This work}) \\ 
    \ \\ 
    AO algorithm for $\alpha\in (1, \infty)$ (\textbf{This work}) \\ 
    \end{tabular}
    \\ 
    \bottomrule
  \end{tabular}
  }
\end{table}

These AO algorithms are based on the variational characterization of $\alpha$-MI, transforming the definitions of $\alpha$-MI into optimization problems.
This study proposes novel variational characterizations of $\alpha$-MI and presents AO algorithms based on these characterizations. 
The main contributions of this paper are as follows:
\begin{itemize}
\item Extending the results of the variational characterization of the Sibson MI by Shayevitz \cite{6034266}, we provide a novel proof for Arimoto's variational characterization of the Sibson MI using the reverse channel $r_{X\mid Y}$ \cite{arimoto1977} (Theorem \ref{thm:variational_characterization_Sibson_MI}). 
As a byproduct of this proof, we also present a new characterization of the Sibson MI.
Similarly, we provide novel variational characterizations of the Augustin--Csisz{\' a}r and Lapidoth--Pfister MI (Theorems \ref{thm:variational_characterization_Csiszar_MI} and \ref{thm:variational_characterization_LP_MI}).
\item Based on the characterizations, we propose novel AO algorithms for computing Augustin--Csisz{\' a}r MI, Lapidoth--Pfister MI, and Sibson and Lapidoth--Pfister capacities (Section \ref{sec:AO}).
\item We compare the performance of the AO algorithms for computing $\alpha$-capacity through numerical examples (Section \ref{ssec:comparison}). 
\end{itemize}

\section{Preliminaries}\label{sec:preliminaries}
Let $(X, Y)$ denote random variables distributed according to a joint distribution $p_{X, Y} = p_{X}p_{Y\mid X}$ on finite alphabets $\mathcal{X}\times \mathcal{Y}$.
Let $H(X) := - \sum_{x}p_{X}(x)\log p_{X}(x)$ be the Shannon entropy, $H(X|Y) := -\sum_{y}p_{Y}(y)\sum_{x}p_{X\mid Y}(x | y)\log p_{X\mid Y}(x|y)$ the conditional entropy, 
and $I(X; Y) := H(X) - H(X | Y)$ the Shannon MI, where $p_{Y}$ is the marginal distribution of $Y$. 
We denote the expectation of $f(X)$ as $\vE_{X}^{p_{X}}\left[f(X)\right]:=\sum_{x}p_{X}(x)f(x)$.
Throughout this paper, we use $\log$ to represent the natural logarithm.

We initially review the Shannon MI and $\alpha$-MI. 
The Shannon MI $I(X; Y)=I(p_{X}, p_{Y\mid X})$ have the following variational characterizations.
\begin{prop}[\text{\cite[Cor 4.2, Thm 4.3 and 4.4]{Polyanskiy_Wu_2024}}] \label{prop:vc_Shannon_MI}
\begin{align}
I(p_{X}, p_{Y\mid X}) 
&= \min_{q_{Y}} D(p_{X}p_{Y\mid X} || p_{X}q_{Y}) \label{eq:min_characterization_Shannon_MI} \\ 
&= \min_{q_{X}}\min_{q_{Y}} D(p_{X}p_{Y\mid X} || q_{X}q_{Y}) \label{eq:minmin_characterization_Shannon_MI} \\ 
&= \max_{r_{X\mid Y}} \vE_{X, Y}^{p_{X}p_{Y\mid X}} \log \left[\frac{r_{X\mid Y}(X\mid Y)}{p_{X}(X)}\right], \label{eq:max_characterization_Shannon_MI}
\end{align}
where the minimum in \eqref{eq:min_characterization_Shannon_MI} is taken over all distributions on $\mathcal{Y}$ and is achieved at $q_{Y}^{*} = p_{Y}$. 
The minimum in \eqref{eq:minmin_characterization_Shannon_MI} is taken over all product distributions on $\mathcal{X}\times \mathcal{Y}$ and is achieved at $(q_{X}^{*}, q_{Y}^{*}) = (p_{X}, p_{Y})$.
The maximum in \eqref{eq:max_characterization_Shannon_MI} is taken over all the family of conditional distributions, i.e., reverse channels $r_{X\mid Y}:=\{r_{X\mid Y}(\cdot | y)\}_{y\in \mathcal{Y}}$
and is achieved at $r^{*}_{X\mid Y}= p_{X\mid Y}$. 
\end{prop}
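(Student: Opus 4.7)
The plan is to establish all three identities by elementary algebraic manipulation of the log-ratios appearing in each expression, reducing them to sums involving $I(p_{X}, p_{Y\mid X})$ and one or two non-negative Kullback--Leibler divergences; Gibbs' inequality then pins down both the optimal value and its unique minimizer/maximizer.

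For \eqref{eq:min_characterization_Shannon_MI}, I would expand
\begin{align*}
D(p_{X}p_{Y\mid X} || p_{X} q_{Y}) = \sum_{x, y} p_{X}(x) p_{Y\mid X}(y\mid x) \log \frac{p_{Y\mid X}(y\mid x)}{q_{Y}(y)},
\end{align*}
then multiply and divide inside the logarithm by $p_{Y}(y)$ to split the sum as $I(p_{X}, p_{Y\mid X}) + D(p_{Y} || q_{Y})$, whence $D(p_{Y} || q_{Y}) \geq 0$ with equality iff $q_{Y}=p_{Y}$ gives the result. For \eqref{eq:minmin_characterization_Shannon_MI}, the same technique---now inserting and cancelling both $p_{X}(x)$ and $p_{Y}(y)$ inside the logarithm---yields the three-term decomposition
\begin{align*}
D(p_{X}p_{Y\mid X} || q_{X} q_{Y}) = I(p_{X}, p_{Y\mid X}) + D(p_{X} || q_{X}) + D(p_{Y} || q_{Y}),
\end{align*}
whose last two summands are independently non-negative and jointly vanish only at $(q_{X}^{*}, q_{Y}^{*}) = (p_{X}, p_{Y})$.

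For \eqref{eq:max_characterization_Shannon_MI}, I would multiply and divide $r_{X\mid Y}(x\mid y)$ by $p_{X\mid Y}(x\mid y)$ inside the logarithm---which is well-defined on the support of $p_{X, Y}$---to obtain
\begin{align*}
\vE_{X, Y}^{p_{X}p_{Y\mid X}} \log \left[\frac{r_{X\mid Y}(X\mid Y)}{p_{X}(X)}\right] = I(p_{X}, p_{Y\mid X}) - \vE_{Y}^{p_{Y}}\left[D(p_{X\mid Y}(\cdot\mid Y) || r_{X\mid Y}(\cdot\mid Y))\right],
\end{align*}
whose conditional-divergence term is non-negative and vanishes iff $r_{X\mid Y}(\cdot\mid y) = p_{X\mid Y}(\cdot\mid y)$ for every $y$ in the support of $p_{Y}$, giving the maximum characterization.

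The only care required---really bookkeeping rather than a genuine obstacle---is handling of supports: any $q_{Y}$, $q_{X}$, or $r_{X\mid Y}$ assigning zero mass to a support point of the corresponding true distribution renders the relevant divergence infinite or $\log 0$ undefined, so the infimum/supremum may safely be restricted to distributions dominating the relevant marginal or conditional, on which the formal manipulations above are fully rigorous.
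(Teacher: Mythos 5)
Your proposal is correct. Note that the paper does not prove this proposition at all---it is quoted directly from Polyanskiy--Wu \cite[Cor 4.2, Thm 4.3 and 4.4]{Polyanskiy_Wu_2024}---and your three decompositions ($D(p_{X}p_{Y\mid X}\|p_{X}q_{Y})=I+D(p_{Y}\|q_{Y})$, the three-term version with $D(p_{X}\|q_{X})$ added, and the conditional-divergence identity for the reverse channel) are exactly the standard ``golden formula'' arguments underlying that reference, with the support caveat handled appropriately. There is nothing further to compare.
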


\begin{definition}
Let $\alpha\in (0, 1)\cup (1, \infty)$ and $(X, Y)\sim p_{X, Y}=p_{X}p_{Y\mid X}$. 
Then, the \textit{Sibson MI of order $\alpha$}, \textit{Arimoto MI of order $\alpha$}, 
\textit{Augustin--Csisz{\'a}r MI of order $\alpha$}, and \textit{Lapidoth--Pfister MI of order $\alpha$} are  defined as follows, respectively:
\begin{align}
I_{\alpha}^{\text{S}} (X; Y) &:= \min_{q_{Y}} D_{\alpha} (p_{X}p_{Y\mid X} || p_{X}q_{Y}), \label{eq:def_Sibson_MI} \\ 
I_{\alpha}^{\text{A}}(X; Y) &:= H_{\alpha}(X) - H_{\alpha}^{\text{A}}(X\mid Y), \label{eq:def_Arimoto_MI} \\ 
I_{\alpha}^{\text{C}}(X; Y) &:= \min_{q_{Y}} \vE_{X}^{p_{X}}\left[D_{\alpha}(p_{Y\mid X}(\cdot \mid X) || q_{Y})\right], \label{eq:def_AC_MI} \\ 
I_{\alpha}^{\text{LP}}(X; Y) &:= \min_{q_{X}}\min_{q_{Y}} D_{\alpha}(p_{X}p_{Y\mid X} || q_{X}q_{Y}), \label{eq:def_LP_MI}
\end{align}
where 
\begin{align}
D_{\alpha}(p || q) := \frac{1}{\alpha-1}\log \sum_{z} p(z)^{\alpha}q(z)^{1-\alpha} \label{eq:def_Renyi_divergence}
\end{align}
is the R{\' e}nyi divergence between $p$ and $q$ of order $\alpha$, 
$H_{\alpha}(X) := \frac{1}{1-\alpha} \log \sum_{x} p_{X}(x)^{\alpha}$ is the R{\' e}nyi entropy of order $\alpha$, 
and $H_{\alpha}^{\text{A}}(X | Y):= \frac{\alpha}{1-\alpha}\log\sum_{y} \left( \sum_{x}p_{X}(x)^{\alpha}p_{Y\mid X}(y | x)^{\alpha} \right)^{\frac{1}{\alpha}}$ is the Arimoto conditional entropy of order $\alpha$ \cite{arimoto1977}.
\end{definition}

\begin{remark}
The Sibson and Arimoto MI have the following closed-form expressions using the Gallager error exponent function $E_{0}(\rho, p_{X}):= -\log \sum_{y}\left( \sum_{x}p_{X}(x)p_{Y\mid X}(y | x)^{\frac{1}{1+\rho}} \right)^{1+\rho}$ \cite{Gallager:1968:ITR:578869}:  
\begin{align}
I_{\alpha}^{\text{S}} (X; Y) 
&=  \frac{\alpha}{1-\alpha} E_{0} \left( \frac{1}{\alpha}-1, p_{X} \right), \label{eq:Sibson_MI_Gallager} \\ 
I_{\alpha}^{\text{A}}(X; Y) 
&= \frac{\alpha}{1-\alpha} E_{0}\left( \frac{1}{\alpha}-1, p_{X_{\alpha}} \right), \label{eq:Arimoto_MI_Gallager}
\end{align}
where $p_{X_{\alpha}}(x) := \frac{p_{X}(x)^{\alpha}}{\sum_{x}p_{X}(x)^{\alpha}}$ is the $\alpha$-tilted distribution \cite{8804205} 
(also known as the escort distribution \cite{10.5555/3019383}) of $p_{X}$. 
However, the closed-form expressions of the Augustin--Csisz{\'a}r MI and Lapidoth--Pfister MI are unknown and must be computed by solving the optimization problems.
\end{remark}

Csisz{\' a}r \cite{370121}, Karakos \textit{et al.} \cite{4595361}, Shayevitz \cite{6034266}, and Lapidoth and Pfister \cite{e21080778} provided the following variational characterizations of $\alpha$-MI.

\begin{prop}[\text{\cite[p. 34]{370121}, \cite{4595361}, \cite[Thm 1]{6034266}, \cite[Lemma 8]{e21080778}}]
Let $\alpha\in (0, 1)\cup (1, \infty)$. Then, 
\begin{align}
&(1-\alpha)I_{\alpha}^{\text{C}}(X; Y) \notag \\ 
&= \min_{\tilde{q}_{Y\mid X}} \left\{\alpha D(p_{X}\tilde{q}_{Y\mid X} || p_{X}p_{Y\mid X}) + (1-\alpha)I(p_{X}, \tilde{q}_{Y\mid X})\right\}, \label{eq:vc_AC_MI_KL} \\ 
&(1-\alpha)I_{\alpha}^{\text{S}}(X; Y) \notag \\ 
&= \min_{\tilde{q}_{X, Y}} \{\alpha D(\tilde{q}_{X, Y} || \tilde{q}_{X}p_{Y\mid X}) \notag \\ 
&\qquad \qquad \qquad  + (1-\alpha)I(\tilde{q}_{X}, \tilde{q}_{Y\mid X}) + D(\tilde{q}_{X} || p_{X})\}, \label{eq:vc_Sibson_MI_KL} \\
&(1-\alpha)I_{\alpha}^{\text{LP}}(X; Y) = \min_{\tilde{q}_{X, Y}} \{\alpha D(\tilde{q}_{X, Y} || p_{X}p_{Y\mid X}) \notag \\ 
&\qquad \qquad \qquad \qquad \qquad +  (1-\alpha) I(\tilde{q}_{X}, \tilde{q}_{Y\mid X})\}. \label{eq:vc_LP_MI_KL}
\end{align}
\end{prop}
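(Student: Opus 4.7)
All three identities can be reduced to a single ``tilted'' variational representation of the R\'enyi divergence. I would first establish the master identity: for $\alpha\in(0,1)\cup(1,\infty)$ and any two distributions $p,q$,
\[
(1-\alpha)D_\alpha(p\|q)=\min_{\tilde p}\bigl\{\alpha D(\tilde p\|p)+(1-\alpha)D(\tilde p\|q)\bigr\},
\]
with minimizer $\tilde p^{*}(z)\propto p(z)^\alpha q(z)^{1-\alpha}$. This follows from the Gibbs variational principle $\log\sum_z f(z)=\max_{\tilde p}\sum_z\tilde p(z)\log[f(z)/\tilde p(z)]$ applied to $f(z)=p(z)^\alpha q(z)^{1-\alpha}$, after noting that the integrand rewrites as $-\alpha\log(\tilde p/p)-(1-\alpha)\log(\tilde p/q)$.

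Each of the three characterizations is then obtained by instantiating the master identity and performing the outer minimization. For the Augustin--Csisz\'ar identity, I would take $p=p_{Y\mid X}(\cdot\mid x)$, $q=q_Y$, $\tilde p=\tilde q_{Y\mid X}(\cdot\mid x)$, average over $p_X$ so that $\alpha D(\tilde p\|p)$ aggregates into $\alpha D(p_X\tilde q_{Y\mid X}\|p_Xp_{Y\mid X})$, and then use Proposition~\ref{prop:vc_Shannon_MI} to convert $\min_{q_Y}\vE_X^{p_X}D(\tilde q_{Y\mid X}(\cdot\mid X)\|q_Y)$ into $I(p_X,\tilde q_{Y\mid X})$. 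For the Sibson identity, I would take $p=p_Xp_{Y\mid X}$, $q=p_Xq_Y$, $\tilde p=\tilde q_{X,Y}$ and split each KL divergence via the chain rule $D(\tilde q_{X,Y}\|p_Xr_Y)=D(\tilde q_X\|p_X)+\vE_{\tilde q_X}D(\tilde q_{Y\mid X}\|r_Y)$; the $D(\tilde q_X\|p_X)$ term emerges, and the remaining $q_Y$ is handled as in the Augustin--Csisz\'ar case. For the Lapidoth--Pfister identity, I would take $q=q_Xq_Y$ and eliminate both reference marginals using \eqref{eq:minmin_characterization_Shannon_MI}, which yields $\min_{q_X,q_Y}D(\tilde q_{X,Y}\|q_Xq_Y)=I(\tilde q_X,\tilde q_{Y\mid X})$.

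The main obstacle is the sign of $(1-\alpha)$ when interchanging the outer minimization over $q_Y$ (or $(q_X,q_Y)$) with the inner minimization over $\tilde q$. For $\alpha\in(0,1)$ the two minima carry the same positive weight $1-\alpha>0$ and commute freely, so the derivation closes immediately. For $\alpha\in(1,\infty)$ the naive swap would yield a $\max$--$\min$ rather than a $\min$, and I would close this gap by a direct saddle-point argument: the master identity gives the one-sided bound ``$\text{RHS}\geq(1-\alpha)I_\alpha^{(\cdot)}(X;Y)$'' for every candidate $\tilde q$; and at the tilted candidate $\tilde q^{*}(y\mid x)\propto p_{Y\mid X}(y\mid x)^\alpha q_Y^{*}(y)^{1-\alpha}$, where $q_Y^{*}$ is the R\'enyi optimizer, the KKT stationarity of $q_Y^{*}$ forces its induced $Y$-marginal to coincide with $q_Y^{*}$, so $D(\tilde q_Y^{*}\|q_Y^{*})=0$ and the bound is attained. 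The same argument applies to the Sibson and Lapidoth--Pfister cases with $q_Y^{*}$ replaced by the appropriate optimizer. Alternatively, one may verify that the objective is convex in $\tilde q$ and concave in the reference marginals, and invoke Sion's minimax theorem.
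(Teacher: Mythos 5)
Your proposal is correct, and it supplies something the paper itself does not contain: the paper states this proposition without proof, citing Karakos \textit{et al.}, Shayevitz, and Lapidoth--Pfister. Your master identity $(1-\alpha)D_\alpha(p\|q)=\min_{\tilde p}\{\alpha D(\tilde p\|p)+(1-\alpha)D(\tilde p\|q)\}$ with the tilted minimizer is precisely Shayevitz's Theorem 1, and your Gibbs-variational derivation of it is the standard one (the objective collapses to $\sum_z \tilde p(z)\log\bigl[\tilde p(z)/(p(z)^\alpha q(z)^{1-\alpha})\bigr]$, which is a KL divergence to the normalized tilt minus a log-partition term); the three instantiations --- averaging over $p_X$, chain-rule splitting of both KL terms, and eliminating the reference marginals via Proposition \ref{prop:vc_Shannon_MI} --- are exactly the reductions in the cited works, so in substance you have reconstructed the literature proof rather than found a new route. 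Where your write-up earns its keep is the only genuine pitfall, which you correctly isolate: for $\alpha>1$ the negative weight $1-\alpha$ turns the outer minimization over $q_Y$ (or $(q_X,q_Y)$) into a maximization, and the identity needs a min--max exchange. Both of your closing arguments are sound. The Sion route works because the combined objective is convex in $\tilde q$ (by the same collapse as above) and concave in the reference marginal, since the $q_Y$-dependence is $(\alpha-1)\sum_y \tilde q_Y(y)\log q_Y(y)$ with $\alpha-1>0$; this is the same tool the paper itself invokes in Appendices A--C, run in the other direction. The direct saddle-point route also closes: the fixed-point property you invoke --- that the marginal induced by the tilt of the R\'enyi optimizer coincides with the optimizer --- is the classical property of the Augustin mean, and for Lapidoth--Pfister it is exactly the stationarity condition displayed in the paper's Corollary \ref{cor:update_formulae_LP_MI} (one checks $q^*_X(x)\propto A(x)^{1/\alpha}$ implies the tilt marginal $\propto q^*_X(x)^{1-\alpha}A(x)\propto q^*_X(x)$). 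The only technical debts are routine on finite alphabets: existence of the optimizer $q_Y^*$ and care with zero coordinates before invoking KKT, and semicontinuity at the simplex boundary for Sion.

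One discrepancy worth making explicit: your chain-rule derivation of the Sibson case yields
\begin{align}
(1-\alpha)I_{\alpha}^{\text{S}}(X;Y) = \min_{\tilde q_{X,Y}}\bigl\{\alpha D(\tilde q_{X,Y}\,\|\,\tilde q_X p_{Y\mid X}) + (1-\alpha)I(\tilde q_X,\tilde q_{Y\mid X}) + D(\tilde q_X\,\|\,p_X)\bigr\}, \notag
\end{align}
i.e., with $\tilde q_X$, not $p_X$, in the first two terms. This is the form the paper actually uses at step $(a)$ of Appendix \ref{proof:variational_characterization_Sibson_MI}, and it is the correct one: as printed in the Proposition, with $D(p_X\tilde q_{Y\mid X}\|p_Xp_{Y\mid X})$ and $I(p_X,\tilde q_{Y\mid X})$, the variable $\tilde q_X$ would enter only through $D(\tilde q_X\|p_X)$, the minimization over $\tilde q_X$ would set $\tilde q_X = p_X$ and kill that term, and the right-hand side would collapse to the Augustin--Csisz\'ar expression \eqref{eq:vc_AC_MI_KL}, contradicting $I_\alpha^{\text{S}}\neq I_\alpha^{\text{C}}$ in general. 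So your derivation silently corrects a typo in the statement rather than deviating from it.
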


\section{Variational Characterizations of $\alpha$-MI}\label{sec:main_result}

It is natural to consider variational characterizations of $\alpha$-MI similar to the Shannon MI characterization in Proposition \ref{prop:vc_Shannon_MI}, 
some of which were already obtained by Arimoto \cite{arimoto1977,BN01990060en}, Kamatsuka \textit{et al.} \cite{10619200}, and Karakos \textit{et al.} \cite{4595361}. 
In this section, we derive new characterizations of Sibson, Augustin--Csisz{\' a}r, and Lapidoth--Pfister MI through a novel proof of Arimoto's characterization.

\begin{theorem}[\text{\cite[Thm 4]{arimoto1977},\cite[Thm 1]{10619200}}] \label{thm:variational_characterization_Sibson_MI}
Let $\alpha \in (0, 1)\cup (1, \infty)$. Then, 
\begin{align}
I_{\alpha}^{\text{S}} (X; Y) &= \max_{r_{X\mid Y}} F_{\alpha}^{\text{S1}}(p_{X}, r_{X\mid Y}) \label{eq:vc_Sibson_MI_01} \\ 
&= \max_{r_{X\mid Y}} F_{\alpha}^{\text{S2}}(p_{X}, r_{X\mid Y}), \label{eq:vc_Sibson_MI_02}
\end{align}
where 
\begin{align}
&F_{\alpha}^{\text{S1}}(p_{X}, r_{X\mid Y}) \notag \\ 
&:= \frac{\alpha}{\alpha-1}\log \sum_{x, y} p_{X}(x)^{\frac{1}{\alpha}}p_{Y\mid X}(y | x)r_{X\mid Y}(x | y)^{1-\frac{1}{\alpha}}, \label{eq:def_vc_Sibson_MI_01} \\ 
&F_{\alpha}^{\text{S2}}(p_{X}, r_{X\mid Y}) := F_{\alpha}^{\text{S1}}(p_{X}, r_{X_{\alpha}\mid Y}) \notag \\ 
&= \frac{\alpha}{\alpha-1}\log \sum_{x, y} p_{X}(x)^{\frac{1}{\alpha}}p_{Y\mid X}(y | x)r_{X_{\alpha}\mid Y}(x | y)^{1-\frac{1}{\alpha}}. \label{eq:def_vc_Sibson_MI_02}
\end{align}
For $\alpha \in (1, \infty)$, 
\begin{align}
I_{\alpha}^{\text{S}} (X; Y) &= \max_{\tilde{q}_{X, Y}}\max_{r_{X\mid Y}} \tilde{F}_{\alpha}^{\text{S3}}(p_{X}, \tilde{q}_{X, Y}, r_{X\mid Y}), \label{eq:vc_Sibson_MI_03}
\end{align}
where 
\begin{align}
&\tilde{F}_{\alpha}^{\text{S3}}(p_{X}, \tilde{q}_{X, Y}, r_{X\mid Y}) :=\frac{\alpha}{1-\alpha}D(\tilde{q}_{X, Y} || \tilde{q}_{X}p_{Y\mid X}) \notag \\ 
&+ \vE_{X, Y}^{\tilde{q}_{X, Y}}\left[\log \frac{r_{X\mid Y}(X|Y)}{\tilde{q}_{X}(X)}\right] + \frac{1}{1-\alpha} D(\tilde{q}_{X} || p_{X}). \label{eq:def_vc_Sibson_MI_03}
\end{align}
\end{theorem}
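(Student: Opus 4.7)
The plan is to prove the three identities sequentially, using Hölder's inequality for the single-channel forms \eqref{eq:vc_Sibson_MI_01}--\eqref{eq:vc_Sibson_MI_02} and a two-level variational argument for the joint form \eqref{eq:vc_Sibson_MI_03}.

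For \eqref{eq:vc_Sibson_MI_01}, I would apply Hölder's inequality (reverse Hölder for $\alpha\in(0,1)$) fiber-wise in $y$ with conjugate exponents $(p,q)=(\alpha,\alpha/(\alpha-1))$. Setting $A_x=p_X(x)^{1/\alpha}p_{Y\mid X}(y\mid x)$ and $B_x=r_{X\mid Y}(x\mid y)^{1-1/\alpha}$, one has $\sum_x B_x^{q}=\sum_x r_{X\mid Y}(x\mid y)=1$ and $\sum_x A_x^{p}=\sum_x p_X(x)p_{Y\mid X}(y\mid x)^{\alpha}$, so Hölder yields
\[
\sum_x p_X(x)^{1/\alpha}p_{Y\mid X}(y\mid x)r_{X\mid Y}(x\mid y)^{1-1/\alpha}\le\Bigl(\sum_x p_X(x)p_{Y\mid X}(y\mid x)^{\alpha}\Bigr)^{1/\alpha}.
\]
Summing over $y$, applying $\frac{\alpha}{\alpha-1}\log(\cdot)$ (whose sign reversal for $\alpha<1$ is compensated by the reversed Hölder direction), and comparing with \eqref{eq:Sibson_MI_Gallager} gives $F_{\alpha}^{\text{S1}}(p_X,r_{X\mid Y})\le I_{\alpha}^{\text{S}}(X;Y)$, with equality iff $r_{X\mid Y}^{*}(x\mid y)\propto p_X(x)p_{Y\mid X}(y\mid x)^{\alpha}$. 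Identity \eqref{eq:vc_Sibson_MI_02} follows immediately because the relabeling $r_{X\mid Y}\mapsto r_{X_{\alpha}\mid Y}$ is a bijection on the set of reverse channels, so $\sup_{r}F_{\alpha}^{\text{S2}}=\sup_{r}F_{\alpha}^{\text{S1}}=I_{\alpha}^{\text{S}}(X;Y)$.

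For \eqref{eq:vc_Sibson_MI_03} (with $\alpha>1$), I would peel the two optimizations. Fix $\tilde{q}_{X,Y}$ and maximize over $r_{X\mid Y}$: by Proposition~\ref{prop:vc_Shannon_MI} (specifically \eqref{eq:max_characterization_Shannon_MI}, applied with $\tilde{q}_{X,Y}$ in place of $p_{X,Y}$ and $\tilde{q}_X$ in place of $p_X$), the second term of \eqref{eq:def_vc_Sibson_MI_03} maximizes to $I_{\tilde{q}}(X;Y):=D(\tilde{q}_{X,Y}\|\tilde{q}_X\tilde{q}_Y)$ at $r^{*}_{X\mid Y}=\tilde{q}_{X\mid Y}$. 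The reduced objective becomes
\[
\tfrac{\alpha}{1-\alpha}D(\tilde{q}_{X,Y}\|\tilde{q}_Xp_{Y\mid X})+I_{\tilde{q}}(X;Y)+\tfrac{1}{1-\alpha}D(\tilde{q}_X\|p_X).
\]
Multiplying through by $(1-\alpha)<0$ converts the outer $\max_{\tilde{q}}$ into $\min_{\tilde{q}}$ of $\alpha D(\tilde{q}_{X,Y}\|\tilde{q}_Xp_{Y\mid X})+(1-\alpha)I_{\tilde{q}}(X;Y)+D(\tilde{q}_X\|p_X)$, which matches the right-hand side of \eqref{eq:vc_Sibson_MI_KL}, identifying the double max with $I_{\alpha}^{\text{S}}(X;Y)$.

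The main obstacle is this last matching step. It relies on (i) recognizing that \eqref{eq:max_characterization_Shannon_MI} remains valid with any reference joint distribution, including the optimization variable $\tilde{q}_{X,Y}$, so that the inner max over $r_{X\mid Y}$ truly yields $I_{\tilde{q}}(X;Y)$; and (ii) the sign manipulation $(1-\alpha)<0$ that flips $\max$ to $\min$, which is the reason \eqref{eq:vc_Sibson_MI_03} is confined to $\alpha\in(1,\infty)$: for $\alpha<1$ the same peeling would yield a $\max$ of the right-hand side of \eqref{eq:vc_Sibson_MI_KL}, which is not the correct characterization. A minor technical concern is support — if $\tilde{q}_X$ vanishes where $p_X$ does not, the divergences become $+\infty$ and such $\tilde{q}$ are automatically excluded by the outer $\max$, so no approximation argument is required on the interior of the simplex.
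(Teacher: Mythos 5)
Your proposal is correct, but it reaches \eqref{eq:vc_Sibson_MI_01} and \eqref{eq:vc_Sibson_MI_02} by a genuinely different route than the paper. The paper's Appendix A proves \eqref{eq:vc_Sibson_MI_01} by starting from Shayevitz's characterization \eqref{eq:vc_Sibson_MI_KL}, rewriting the Shannon MI term via \eqref{eq:max_characterization_Shannon_MI}, and then, for $\alpha\in(0,1)$, invoking Sion's minimax theorem to swap $\min_{\tilde{q}_{X,Y}}$ with $\max_{r_{X\mid Y}}$ before collapsing the inner minimization as $D(\tilde{q}_{X,Y}\,\|\,\hat{q}_{X,Y})+F_{\alpha}^{\text{S1}}$; for $\alpha\in(1,\infty)$ the chain is max--max and \eqref{eq:vc_Sibson_MI_03} is precisely its intermediate step. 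Your fiber-wise H\"older argument (reverse H\"older for $\alpha\in(0,1)$) proves \eqref{eq:vc_Sibson_MI_01} directly in both parameter ranges and is essentially Arimoto's classical proof, which the paper explicitly set out to replace with a ``novel'' variational one: what you buy is elementarity --- no minimax theorem and no convexity/concavity verification, with the optimizer $r^{*}\propto p_{X}p_{Y\mid X}^{\alpha}$ falling out of the equality condition --- while what the paper's route buys is the distribution $\hat{q}_{X,Y}$ and the saddle structure that feed directly into the AO updating formulae of Section \ref{sec:AO}. Your peeling argument for \eqref{eq:vc_Sibson_MI_03} coincides with the paper's steps $(a)$--$(b)$ with $\min_{\tilde{q}_{X,Y}}$ replaced by $\max_{\tilde{q}_{X,Y}}$, and your observation that \eqref{eq:max_characterization_Shannon_MI} applies with the reference joint $\tilde{q}_{X,Y}$ is exactly right (it is just nonnegativity of $\vE_{Y}^{\tilde{q}_{Y}}\left[D(\tilde{q}_{X\mid Y}(\cdot\mid Y)\,\|\,r_{X\mid Y}(\cdot\mid Y))\right]$). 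Your bijection argument for \eqref{eq:vc_Sibson_MI_02} via the $\alpha$-tilting map is valid and fills in a step the paper only cites. One slip worth fixing: your support remark is inverted --- $D(\tilde{q}_{X}\,\|\,p_{X})=+\infty$ when $\tilde{q}_{X}$ puts mass where $p_{X}$ vanishes, not when $\tilde{q}_{X}$ vanishes where $p_{X}$ does not; since the coefficient $\frac{1}{1-\alpha}$ is negative for $\alpha\in(1,\infty)$, such $\tilde{q}$ drive the objective to $-\infty$ and are indeed excluded by the outer maximization, so your conclusion stands even though the stated condition is backwards.
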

\begin{proof}
See Appendix \ref{proof:variational_characterization_Sibson_MI}. 
\end{proof}

\begin{remark} \label{rem:vc_Arimoto}
From \eqref{eq:Sibson_MI_Gallager} and \eqref{eq:Arimoto_MI_Gallager}, similar characterizations of the Arimoto MI can be obtained by replacing $p_{X}$ in \eqref{eq:vc_Sibson_MI_01}, \eqref{eq:vc_Sibson_MI_02}, and \eqref{eq:vc_Sibson_MI_03} with $p_{X_{\alpha}}$, two of which were first obtained by Arimoto \cite{BN01990060en} and Kamatsuka \textit{et al.} \cite{10619200}.
\end{remark}

Similar to the proof for the Sibson MI, the following variational characterizations of the Augustin--Csisz{\'a}r MI and Lapidoth--Pfister MI are obtained.

\begin{theorem} \label{thm:variational_characterization_Csiszar_MI}
Let $\alpha \in (0, 1)\cup (1, \infty)$. Then, 
\begin{align}
&I_{\alpha}^{\text{C}}(X; Y) \notag \\
&= \begin{cases}
\min_{\tilde{q}_{Y\mid X}}\min_{q_{Y}} F_{\alpha}^{\text{C}}(\tilde{q}_{Y\mid X}, q_{Y}), & \alpha \in (0, 1), \\ 
\max_{\tilde{q}_{Y\mid X}}\max_{r_{X\mid Y}} \tilde{F}_{\alpha}^{\text{C}}(\tilde{q}_{Y\mid X}, r_{X\mid Y}), & \alpha \in (1, \infty) \label{eq:vc_AC_MI_02_03}
\end{cases} \\ 
&= H(p_{X}) \notag \\ 
&+ \max_{r_{X\mid Y}} \frac{\alpha}{\alpha-1}\sum_{x}p_{X}(x) \log \sum_{y}p_{Y\mid X}(y | x)  r_{X\mid Y}(x | y)^{1-\frac{1}{\alpha}}, \label{eq:vc_AC_MI_01} 
\end{align}
where 
\begin{align}
&{F}_{\alpha}^{\text{C}}(\tilde{q}_{Y\mid X}, q_{Y}) \notag \\ 
&= \frac{\alpha}{1-\alpha} D(p_{X}\tilde{q}_{Y\mid X} || p_{X}p_{Y\mid X}) + D(p_{X}\tilde{q}_{Y\mid X} || p_{X}q_{Y}) \label{eq:def_vc_AC_MI_02} \\
&\tilde{F}_{\alpha}^{\text{C}}(\tilde{q}_{Y\mid X}, r_{X\mid Y}) \notag \\ 
&:=\frac{\alpha}{1-\alpha} D(p_{X}\tilde{q}_{Y\mid X} || p_{X}p_{Y\mid X}) + \vE_{X, Y}^{p_{X}\tilde{q}_{Y\mid X}}\left[\frac{r_{X\mid Y}(X\mid Y)}{p_{X}(X)}\right]. \label{eq:def_vc_AC_MI_03}
\end{align}
\end{theorem}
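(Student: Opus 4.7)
My plan is to treat the two displayed equalities separately. For \eqref{eq:vc_AC_MI_02_03}, I reuse the known identity \eqref{eq:vc_AC_MI_KL} and combine it with the appropriate Shannon-MI characterization from Proposition \ref{prop:vc_Shannon_MI}. For \eqref{eq:vc_AC_MI_01}, I mimic the H\"older-based argument used for the Sibson MI in Theorem \ref{thm:variational_characterization_Sibson_MI}, but carry it out one $x$ at a time so that the outer expectation over $p_{X}$ inherent in the Augustin--Csisz\'ar MI is respected.

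For the first claim, I divide both sides of \eqref{eq:vc_AC_MI_KL} by $(1-\alpha)$. When $\alpha\in(0,1)$ the division preserves the $\min$, and coupling the result with $I(p_{X},\tilde{q}_{Y\mid X})=\min_{q_{Y}} D(p_{X}\tilde{q}_{Y\mid X}\,||\,p_{X} q_{Y})$ from \eqref{eq:min_characterization_Shannon_MI} produces the double minimum over $(\tilde{q}_{Y\mid X},q_{Y})$ whose integrand is $F_{\alpha}^{\text{C}}$ in \eqref{eq:def_vc_AC_MI_02}. When $\alpha\in(1,\infty)$ the factor $1-\alpha$ is negative, so the $\min$ flips to a $\max$, and inserting $I(p_{X},\tilde{q}_{Y\mid X})=\max_{r_{X\mid Y}}\vE_{X,Y}^{p_{X}\tilde{q}_{Y\mid X}}\left[\log\frac{r_{X\mid Y}(X\mid Y)}{p_{X}(X)}\right]$ from \eqref{eq:max_characterization_Shannon_MI} yields the double maximum over $(\tilde{q}_{Y\mid X},r_{X\mid Y})$ whose integrand is $\tilde{F}_{\alpha}^{\text{C}}$ in \eqref{eq:def_vc_AC_MI_03}. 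Merging the two extrema is legitimate because the inner optimization (over $q_{Y}$, respectively $r_{X\mid Y}$) is an unconstrained variational problem for each fixed $\tilde{q}_{Y\mid X}$.

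For \eqref{eq:vc_AC_MI_01}, I fix $x$, an auxiliary $q_{Y}$, and a reverse channel $r_{X\mid Y}$, and apply H\"older's inequality (ordinary H\"older when $\alpha>1$, reverse H\"older when $0<\alpha<1$) to the decomposition $p_{Y\mid X}(y\mid x)\,r_{X\mid Y}(x\mid y)^{1-1/\alpha}=[p_{Y\mid X}(y\mid x)\,q_{Y}(y)^{(1-\alpha)/\alpha}]\cdot[r_{X\mid Y}(x\mid y)\,q_{Y}(y)]^{(\alpha-1)/\alpha}$ summed over $y$. Taking logs and dividing by $\alpha-1$ (whose sign, combined with the flipped inequality in the reverse-H\"older case, preserves the direction of the resulting bound), both regimes yield the uniform inequality
\begin{align*}
&\frac{\alpha}{\alpha-1}\log\sum_{y}p_{Y\mid X}(y\mid x)\,r_{X\mid Y}(x\mid y)^{1-\frac{1}{\alpha}} \\
&\qquad\leq D_{\alpha}\!\left(p_{Y\mid X}(\cdot\mid x)\,||\,q_{Y}\right)+\log\tilde{q}_{X}(x),
\end{align*}
where $\tilde{q}_{X}(x):=\sum_{y} q_{Y}(y)\,r_{X\mid Y}(x\mid y)$. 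Averaging against $p_{X}$ and invoking Gibbs' inequality $\sum_{x} p_{X}(x)\log\tilde{q}_{X}(x)\leq -H(p_{X})$ delivers the $\leq$ direction of \eqref{eq:vc_AC_MI_01} after taking $\min_{q_{Y}}$ on the right and $\max_{r_{X\mid Y}}$ on the left.

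The matching $\geq$ direction follows by exhibiting an $r^{*}_{X\mid Y}$ that saturates both inequalities simultaneously. Writing $q_{Y}^{*}$ for the Augustin--Csisz\'ar minimizer in \eqref{eq:def_AC_MI}, I set $r^{*}_{X\mid Y}(x\mid y)\propto p_{Y\mid X}(y\mid x)^{\alpha}\,q_{Y}^{*}(y)^{-\alpha}$ (normalized in $x$); this turns H\"older into an equality for every $x$, and the standard fixed-point characterization of $q_{Y}^{*}$ \cite{augusting_phd_thesis,4595361} then forces $\tilde{q}_{X}=p_{X}$, which makes the Gibbs step tight as well. The main technical obstacle is maintaining consistent inequality directions across the two regimes of $\alpha$: H\"older versus reverse H\"older flips the sum inequality, and the signs of $1/(\alpha-1)$ and $1-1/\alpha$ also flip, but the flips conspire to produce the same final bound, and verifying this cancellation carefully is the step that demands the most attention.
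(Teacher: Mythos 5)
Your handling of \eqref{eq:vc_AC_MI_02_03} is correct and is exactly the paper's route: divide the Shayevitz--Karakos identity \eqref{eq:vc_AC_MI_KL} by $1-\alpha$ (flipping $\min$ to $\max$ when $\alpha>1$) and substitute the Shannon-MI characterizations \eqref{eq:min_characterization_Shannon_MI} and \eqref{eq:max_characterization_Shannon_MI}. For \eqref{eq:vc_AC_MI_01} your $\leq$ direction is also sound, and it is a genuinely different argument from the paper's: the paper inserts the $\max_{r_{X\mid Y}}$ inside the $\min_{\tilde{q}_{Y\mid X}}$, swaps the order via Sion's minimax theorem, and then eliminates the inner minimization by completing a Kullback--Leibler divergence against the tilted kernel $\hat{q}_{Y\mid X}(y\mid x)\propto p_{Y\mid X}(y\mid x)r_{X\mid Y}(x\mid y)^{1-\frac{1}{\alpha}}$; your per-$x$ H\"older plus Gibbs sandwich avoids the minimax swap entirely, which would be a more elementary proof if the achievability step held.

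It does not hold as written, and this is a genuine gap. Equality in your H\"older application at a fixed $x$ requires $r_{X\mid Y}(x\mid y)\,q_Y^{*}(y)\propto_{y} p_{Y\mid X}(y\mid x)^{\alpha}q_Y^{*}(y)^{1-\alpha}$, i.e.\ $r_{X\mid Y}(x\mid y)=c(x)\,p_{Y\mid X}(y\mid x)^{\alpha}q_Y^{*}(y)^{-\alpha}$ with a constant depending on $x$ \emph{only}. Your prescription ``$r^{*}_{X\mid Y}(x\mid y)\propto p_{Y\mid X}(y\mid x)^{\alpha}q_Y^{*}(y)^{-\alpha}$ normalized in $x$'' divides by $Z(y)=\sum_{x'}p_{Y\mid X}(y\mid x')^{\alpha}q_Y^{*}(y)^{-\alpha}$, a $y$-dependent factor, which destroys the required proportionality in $y$ unless $Z$ happens to be constant; indeed the $q_Y^{*}(y)^{-\alpha}$ cancels in the normalization, so your candidate $r^{*}(x\mid y)=p_{Y\mid X}(y\mid x)^{\alpha}/\sum_{x'}p_{Y\mid X}(y\mid x')^{\alpha}$ does not even depend on $q_Y^{*}$ --- a clear warning sign, since it would saturate a Sibson-type bound for uniform input rather than the Augustin bound for general $p_X$. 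The repair is to take $x$-dependent weights: set $\tilde{q}^{*}_{Y\mid X}(y\mid x):=p_{Y\mid X}(y\mid x)^{\alpha}q_Y^{*}(y)^{1-\alpha}/\sum_{y'}p_{Y\mid X}(y'\mid x)^{\alpha}q_Y^{*}(y')^{1-\alpha}$ and $r^{*}_{X\mid Y}(x\mid y):=p_X(x)\tilde{q}^{*}_{Y\mid X}(y\mid x)/q_Y^{*}(y)$, equivalently $c(x)=p_X(x)e^{-(\alpha-1)D_{\alpha}(p_{Y\mid X}(\cdot\mid x)\,||\,q_Y^{*})}$. The fixed-point property of the Augustin mean, $q_Y^{*}(y)=\sum_{x}p_X(x)\tilde{q}^{*}_{Y\mid X}(y\mid x)$, is precisely what guarantees $\sum_{x}r^{*}_{X\mid Y}(x\mid y)=1$; with this choice H\"older is an equality for every $x$, and a direct computation gives $\tilde{q}_X(x)=\sum_{y}q_Y^{*}(y)r^{*}_{X\mid Y}(x\mid y)=p_X(x)$, so the Gibbs step is tight as well. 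With that substitution your proof of \eqref{eq:vc_AC_MI_01} goes through in both regimes of $\alpha$.
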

\begin{proof}
See Appendix \ref{proof:variational_characterization_Csiszar_MI}. 
\end{proof}

\begin{remark}
It is worth mentioning that Poltyrev \cite[Thm 3.4]{poltyrev1982random} reported the equivalent characterization in \eqref{eq:vc_AC_MI_01} (see \cite[Lemma 18]{Nakiboglu:2019aa}). 
\end{remark}

\begin{theorem} \label{thm:variational_characterization_LP_MI}
Let $\alpha \in (0, 1) \cup (1, \infty)$. Then, 
\begin{align}
&I_{\alpha}^{\text{LP}}(X; Y) \notag \\ 
&= 
\begin{cases}
\min_{\tilde{q}_{X, Y}}\min_{q_{Y}} F_{\alpha}^{\text{LP}}(\tilde{q}_{X, Y}, q_{Y}), & \alpha \in (0, 1),  \\ 
\max_{\tilde{q}_{X, Y}}\max_{r_{X\mid Y}} \tilde{F}_{\alpha}^{\text{LP}}(\tilde{q}_{X, Y}, r_{X\mid Y}), & \alpha \in (1, \infty), 
\end{cases}\label{eq:vc_LP_MI_02_03}
\end{align}
where $\tilde{q}_{X, Y} = \tilde{q}_{X}\tilde{q}_{Y\mid X}$ and 
\begin{align}
&F_{\alpha}^{\text{LP}}(\tilde{q}_{X, Y}, q_{Y}) := \frac{\alpha}{1-\alpha} D(\tilde{q}_{X, Y} || \tilde{q}_{X} p_{Y\mid X}) \notag \\ 
&+ D(\tilde{q}_{X}\tilde{q}_{Y\mid X} || \tilde{q}_{X}q_{Y}) + \frac{\alpha}{1-\alpha} D(\tilde{q}_{X} || p_{X}), \label{eq:def_vc_LP_MI_02} \\ 
&\tilde{F}_{\alpha}^{\text{LP}}(\tilde{q}_{X, Y}, r_{X\mid Y}) := \frac{\alpha}{1-\alpha} D(\tilde{q}_{X, Y} || \tilde{q}_{X} p_{Y\mid X}) \notag \\ 
&+ \vE_{X, Y}^{\tilde{q}_{X, Y}}\left[\log \frac{r_{X\mid Y}(X|Y)}{\tilde{q}_{X}(X)}\right] + \frac{\alpha}{1-\alpha} D(\tilde{q}_{X} || p_{X}). \label{eq:def_vc_LP_MI_03}
\end{align}
For $\alpha \in (1, \infty)$, 
\begin{align}
&I_{\alpha}^{\text{LP}}(X; Y) 
= \max_{r_{X\mid Y}} \frac{2\alpha-1}{\alpha-1}  \notag \\ 
&\times \log \sum_{x}p_{X}(x)^{\frac{\alpha}{2\alpha-1}} \left( \sum_{y} p_{Y\mid X}(y | x) r_{X\mid Y}(x | y)^{1-\frac{1}{\alpha}} \right)^{\frac{\alpha}{2\alpha-1}}. \label{eq:vc_LP_MI_01}  
\end{align}

\end{theorem}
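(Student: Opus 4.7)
The plan is to mirror the strategy of Theorems \ref{thm:variational_characterization_Sibson_MI} and \ref{thm:variational_characterization_Csiszar_MI}: start from the existing KL-divergence characterization \eqref{eq:vc_LP_MI_KL} and combine it with Proposition \ref{prop:vc_Shannon_MI} applied to the inner Shannon MI $I(\tilde{q}_{X},\tilde{q}_{Y\mid X})$.

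For \eqref{eq:vc_LP_MI_02_03}, first I would invoke the chain rule for KL divergence to rewrite the leading term of \eqref{eq:vc_LP_MI_KL} as $\alpha D(\tilde{q}_{X, Y} || \tilde{q}_{X}p_{Y\mid X}) + \alpha D(\tilde{q}_{X} || p_{X})$, isolating a prior-mismatch penalty $D(\tilde{q}_{X} || p_{X})$. Then for $\alpha\in(0,1)$ I would use \eqref{eq:min_characterization_Shannon_MI} to replace $I(\tilde{q}_{X},\tilde{q}_{Y\mid X})$ with $\min_{q_Y}D(\tilde{q}_{X}\tilde{q}_{Y\mid X} || \tilde{q}_{X}q_{Y})$; dividing both sides by the positive factor $1-\alpha$ yields exactly $F_\alpha^{\text{LP}}$. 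For $\alpha\in(1,\infty)$ I would instead apply \eqref{eq:max_characterization_Shannon_MI} to express $I(\tilde{q}_X,\tilde{q}_{Y\mid X})$ as a maximum over reverse channels $r_{X\mid Y}$; since $1-\alpha<0$, dividing flips the outer minimum into a maximum and produces $\tilde{F}_\alpha^{\text{LP}}$.

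For the closed-over-$\tilde{q}_{X, Y}$ form \eqref{eq:vc_LP_MI_01}, I would derive it from the $\alpha\in(1,\infty)$ case just obtained by analytically eliminating $\tilde{q}_{X, Y}=\tilde{q}_{X}\tilde{q}_{Y\mid X}$. Fixing $r_{X\mid Y}$ and $\tilde{q}_X$, the per-$x$ contribution $\frac{\alpha}{1-\alpha}D(\tilde{q}_{Y\mid X}(\cdot\mid x) || p_{Y\mid X}(\cdot\mid x)) + \sum_y \tilde{q}_{Y\mid X}(y\mid x)\log r_{X\mid Y}(x\mid y)$ is concave in $\tilde{q}_{Y\mid X}(\cdot\mid x)$ for $\alpha>1$; a Lagrange-multiplier calculation gives the tilted optimum $\tilde{q}_{Y\mid X}^{*}(y\mid x)\propto p_{Y\mid X}(y\mid x)r_{X\mid Y}(x\mid y)^{1-1/\alpha}$, whose optimal value is $\frac{\alpha}{\alpha-1}\log\Phi(x)$ with $\Phi(x):=\sum_y p_{Y\mid X}(y\mid x)r_{X\mid Y}(x\mid y)^{1-1/\alpha}$. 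Substituting back and collecting the $\tilde{q}_X\log\tilde{q}_X$ contributions produces a coefficient $\frac{\alpha}{1-\alpha}-1=\frac{2\alpha-1}{1-\alpha}$, which is the origin of the factor $2\alpha-1$ in \eqref{eq:vc_LP_MI_01}. A second Lagrange step then yields $\tilde{q}_{X}^{*}(x)\propto [p_X(x)\Phi(x)]^{\alpha/(2\alpha-1)}$, and substituting back produces the stated closed-form expression.

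The hard part will be the bookkeeping of signs, coefficients, and exponents in the second Lagrange step. For $\alpha\in(1,\infty)$ both $\frac{\alpha}{1-\alpha}$ and $\frac{2\alpha-1}{1-\alpha}$ are negative, so concavity of the objective in $\tilde{q}_X$ legitimizes Lagrange multipliers, but this must be verified carefully; moreover the collapse that pulls everything under a single $\log\sum_x[\cdot]^{\alpha/(2\alpha-1)}$ relies on the identity $\frac{\alpha}{\alpha-1}\cdot\frac{\alpha-1}{2\alpha-1}=\frac{\alpha}{2\alpha-1}$ for the exponents and a careful evaluation of the Lagrange multiplier via the normalization constraint to cancel the remaining constant terms.
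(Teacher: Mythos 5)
Your proposal is correct and follows essentially the same route as the paper's Appendix C proof: start from \eqref{eq:vc_LP_MI_KL}, split $D(\tilde{q}_{X,Y} \,||\, p_{X}p_{Y\mid X})$ by the chain rule, invoke Proposition \ref{prop:vc_Shannon_MI} (the $\min_{q_Y}$ form for $\alpha\in(0,1)$, the $\max_{r_{X\mid Y}}$ form for $\alpha\in(1,\infty)$) to get \eqref{eq:vc_LP_MI_02_03}, and then eliminate $\tilde{q}_{Y\mid X}$ and $\tilde{q}_{X}$ successively via the tilted optimizers $\hat{q}_{Y\mid X}(y\mid x)\propto p_{Y\mid X}(y\mid x)r_{X\mid Y}(x\mid y)^{1-1/\alpha}$ and $\hat{q}_{X}(x)\propto \left(p_{X}(x)\Phi(x)\right)^{\alpha/(2\alpha-1)}$ to reach \eqref{eq:vc_LP_MI_01}. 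Your Lagrange-multiplier computations are just the paper's completion of the objective into a Kullback--Leibler divergence plus a constant (steps $(c)$ and $(d)$ in Appendix \ref{proof:variational_characterization_LP_MI}), and the sign and exponent bookkeeping you flag works out exactly as you predict, including the coefficient $\frac{2\alpha-1}{1-\alpha}$ and the collapse of the remaining terms under a single $\log$.
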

\begin{proof}
See Appendix \ref{proof:variational_characterization_LP_MI}. 
\end{proof}

\section{AO Algorithms} \label{sec:AO}

This section proposes novel AO algorithms for computing Augustin--Csisz{\' a}r MI, Lapidoth--Pfister MI, and $\alpha$-capacities 
based on the variational characterizations obtained in Section \ref{sec:main_result}.

\subsection{AO Algorithms for Computing the Augustin--Csisz{\' a}r MI}
Karakos \textit{et al.} \cite{4595361} proposed an AO algorithm based on the variational characterization in \eqref{eq:vc_AC_MI_02_03} for the Augustin--Csisz{\' a}r MI when $\alpha \in (0, 1)$. 
We propose an AO algorithm for $\alpha \in (1, \infty)$ with the following updating formulae:

\begin{prop} \label{prop:update_formulae_Csiszar_MI}
Let $\alpha\in (1, \infty)$. Then, the following holds:
\begin{enumerate}
\item For a fixed $\tilde{q}_{Y\mid X}$, $\tilde{F}_{\alpha}^{\text{C}}(\tilde{q}_{Y\mid X}, r_{X\mid Y})$ is maximized by 
\begin{align}
r_{X\mid Y}^{*}(x\mid y) &:= \frac{p_{X}(x)\tilde{q}_{Y\mid X}(y\mid x)}{\sum_{x} p_{X}(x)\tilde{q}_{Y\mid X}(y\mid x)}. \label{eq:C_update_r}
\end{align}
\item For a fixed $r_{X\mid Y}$, $\tilde{F}_{\alpha}^{\text{C}}(\tilde{q}_{Y\mid X}, r_{X\mid Y})$ is maximized by 
\begin{align}
\tilde{q}_{Y\mid X}^{*}(y\mid x) &:= \frac{p_{Y\mid X}(y | x) r_{X\mid Y}(x | y)^{1-\frac{1}{\alpha}}}{\sum_{y} p_{Y\mid X}(y | x) r_{X\mid Y}(x | y)^{1-\frac{1}{\alpha}}}. \label{eq:C_update_q}
\end{align}
\end{enumerate}
\end{prop}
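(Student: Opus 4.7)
The plan is to perform the two single-variable optimizations separately, each reducing to a standard constrained maximization on the probability simplex. The key observation in both parts is that once one variable is frozen the objective decouples into $|\mathcal{Y}|$ (Part 1) or $|\mathcal{X}|$ (Part 2) independent sub-problems, each of which is solved either by Gibbs' inequality or by a Lagrange multiplier.

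For Part 1, only the expectation term in $\tilde{F}_\alpha^{\text{C}}$ depends on $r_{X\mid Y}$. With $\tilde{p}_Y(y) := \sum_x p_X(x)\tilde{q}_{Y\mid X}(y\mid x)$, it can be rewritten as
\[
\sum_y \tilde{p}_Y(y)\sum_x \frac{p_X(x)\tilde{q}_{Y\mid X}(y\mid x)}{\tilde{p}_Y(y)}\log r_{X\mid Y}(x\mid y) + \mathrm{const},
\]
which separates over $y$. For every $y$ with $\tilde{p}_Y(y)>0$, Gibbs' inequality identifies the unique maximizer $r_{X\mid Y}^{*}(x\mid y) = p_X(x)\tilde{q}_{Y\mid X}(y\mid x)/\tilde{p}_Y(y)$, which is exactly \eqref{eq:C_update_r}.

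For Part 2, both terms involve $\tilde{q}_{Y\mid X}$, but the common $p_X(x)$-weighting again causes the objective to separate across $x$. For each $x$ I would set up the Lagrangian
\[
L = \sum_y \tilde{q}_{Y\mid X}(y\mid x)\!\left[\frac{\alpha}{1-\alpha}\log\frac{\tilde{q}_{Y\mid X}(y\mid x)}{p_{Y\mid X}(y\mid x)} + \log\frac{r_{X\mid Y}(x\mid y)}{p_X(x)}\right] - \lambda\sum_y \tilde{q}_{Y\mid X}(y\mid x),
\]
differentiate in $\tilde{q}_{Y\mid X}(y\mid x)$, and solve. Absorbing the $y$-independent factor $p_X(x)^{-(1-1/\alpha)}$ into the normalization yields $\tilde{q}_{Y\mid X}^{*}(y\mid x) \propto p_{Y\mid X}(y\mid x)\, r_{X\mid Y}(x\mid y)^{1-1/\alpha}$, which becomes \eqref{eq:C_update_q} after normalization.

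The one step requiring genuine care---and the reason the result reads as a maximum rather than a minimum---is verifying that these stationary points are in fact global maximizers. Because $\alpha>1$ implies $\alpha/(1-\alpha)<0$, the KL term in $\tilde{F}_\alpha^{\text{C}}$ contributes concavely in $\tilde{q}_{Y\mid X}$ while the expectation is linear, so the overall objective is concave in $\tilde{q}_{Y\mid X}$; concavity in $r_{X\mid Y}$ is immediate from the $\log$. This sign flip is exactly what distinguishes the present $\alpha\in(1,\infty)$ setting from the $\alpha\in(0,1)$ treatment of Karakos \textit{et al.}~\cite{4595361} and is why \eqref{eq:vc_AC_MI_02_03} becomes a double max.
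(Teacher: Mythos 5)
Your proof is correct. Part 1 coincides with the paper's argument: the paper simply invokes the reverse-channel characterization \eqref{eq:max_characterization_Shannon_MI} of Proposition \ref{prop:vc_Shannon_MI}, whose maximizer $r^{*}_{X\mid Y}=p_{X\mid Y}$ applied to the joint distribution $p_{X}\tilde{q}_{Y\mid X}$ is exactly your Gibbs-inequality computation (and you correctly read the intended definition of $\tilde{F}_{\alpha}^{\text{C}}$ with the logarithm, as in the appendix footnote, rather than the typo in \eqref{eq:def_vc_AC_MI_03}). In Part 2 you take a genuinely different route. The paper's one-line proof points back to Appendix \ref{proof:variational_characterization_Csiszar_MI}, where the objective is rewritten \emph{exactly} as
\begin{align*}
\tilde{F}_{\alpha}^{\text{C}}(\tilde{q}_{Y\mid X}, r_{X\mid Y}) &= H(p_{X}) + \frac{\alpha}{1-\alpha}\,\vE_{X}^{p_{X}}\left[D\bigl(\tilde{q}_{Y\mid X}(\cdot\mid X)\,||\,\hat{q}_{Y\mid X}(\cdot\mid X)\bigr)\right] \\
&\quad + \frac{\alpha}{\alpha-1}\sum_{x}p_{X}(x)\log\sum_{y}p_{Y\mid X}(y\mid x)\, r_{X\mid Y}(x\mid y)^{1-\frac{1}{\alpha}},
\end{align*}
with $\hat{q}_{Y\mid X}$ as in \eqref{eq:hat_q_Y_X}; since $\frac{\alpha}{1-\alpha}<0$ for $\alpha>1$, the divergence term is nonpositive and vanishes if and only if $\tilde{q}_{Y\mid X}=\hat{q}_{Y\mid X}$, which is \eqref{eq:C_update_q}, so global optimality is immediate with no stationarity or curvature argument. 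Your Lagrangian reaches the same point, but—as you rightly flag—stationarity alone does not suffice, and your concavity check (scaled KL with negative coefficient plus a linear term) is exactly what upgrades it to a global maximum; that check is correct, and it is in substance the convexity structure the paper only records in a footnote when justifying the minimax exchange in Theorem \ref{thm:variational_characterization_Csiszar_MI}. What each approach buys: yours is elementary and self-contained, making the concavity in each block explicit; the paper's KL-completion yields the optimality gap in closed form as a scaled conditional divergence, which is the structural fact later exploited in the convergence analysis (Lemma \ref{lem:AC_upper_bound}). One small point to make explicit in your write-up: the interior stationary point exists on the support where $p_{Y\mid X}(y\mid x)\,r_{X\mid Y}(x\mid y)^{1-\frac{1}{\alpha}}>0$, which is harmless in the alternating scheme since $r_{X\mid Y}$ is taken from \eqref{eq:C_update_r} and is positive wherever the relevant mass is.
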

\begin{proof}
This follows from the proof of Theorem \ref{thm:variational_characterization_Csiszar_MI} and Proposition \ref{prop:vc_Shannon_MI}.
\end{proof}

\subsection{AO Algorithms for Computing the Lapidoth--Pfister MI}

The Lapidoth--Pfister MI in \eqref{eq:def_LP_MI} is defined as a double minimization problem; $I_{\alpha}^{\text{LP}}(X; Y):=\min_{q_{X}}\min_{q_{Y}}D_{\alpha}(p_{X}p_{Y\mid X} || q_{X}q_{Y})$. 
From the definition and properties of the Lapidoth--Pfister MI (\cite[Prop 9 and Lemma 16]{e21080778}, \cite[Lemma 29]{8231191}), we derive an AO algorithm for the Lapidoth--Pfister MI with the following updating formulae.

\begin{cor} \label{cor:update_formulae_LP_MI}
Let $\alpha\in (0, 1) \cup (1, \infty)$ and $p_{X, Y}=p_{X}p_{Y\mid X}$. Then, the following holds:
\begin{enumerate}
\item For a fixed $q_{Y}$, $D_{\alpha}(p_{X}p_{Y\mid X} || q_{X}q_{Y})$ is minimized by 
\begin{align}
q_{X}^{*}(x) &:= \frac{\left[\sum_{y}p_{X, Y}(x, y)^{\alpha}q_{Y}(y)^{1-\alpha}\right]^{\frac{1}{\alpha}}}{\sum_{x}\left[\sum_{y}p_{X, Y}(x, y)^{\alpha}q_{Y}(y)^{1-\alpha}\right]^{\frac{1}{\alpha}}}.
\end{align}
\item For a fixed $q_{X}$, $D_{\alpha}(p_{X}p_{Y\mid X} || q_{X}q_{Y})$ is minimized by 
\begin{align}
q_{Y}^{*}(y) &:= \frac{\left[\sum_{x}p_{X, Y}(x, y)^{\alpha}q_{X}(x)^{1-\alpha}\right]^{\frac{1}{\alpha}}}{\sum_{y}\left[\sum_{x}p_{X, Y}(x, y)^{\alpha}q_{X}(x)^{1-\alpha}\right]^{\frac{1}{\alpha}}}.
\end{align}
\end{enumerate}
\end{cor}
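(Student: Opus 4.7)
The plan is to reduce each of the two single-variable optimizations to a one-dimensional Sibson-type minimization that admits a closed-form solution via Lagrange multipliers. For the first claim, fix $q_Y$ and use the definition \eqref{eq:def_Renyi_divergence} of the R\'enyi divergence to write $D_{\alpha}(p_{X}p_{Y\mid X} || q_{X}q_{Y})$ as $\frac{1}{\alpha-1}\log \sum_{x} q_{X}(x)^{1-\alpha} a(x)$, where $a(x) := \sum_{y} p_{X,Y}(x,y)^{\alpha} q_{Y}(y)^{1-\alpha}$ is a function of $x$ that does not depend on $q_{X}$.

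Since $\log$ is strictly increasing, minimizing $D_{\alpha}$ over $q_{X}$ reduces to minimizing $g(q_{X}) := \sum_{x} q_{X}(x)^{1-\alpha} a(x)$ when $\alpha \in (1, \infty)$ and to maximizing the same $g(q_{X})$ when $\alpha \in (0,1)$, in both cases subject to $\sum_{x} q_{X}(x) = 1$ and $q_{X}(x) \geq 0$. Setting up the Lagrangian and differentiating gives the stationarity condition $(1-\alpha)\, q_{X}(x)^{-\alpha} a(x) = \lambda$, from which I obtain $q_{X}(x) \propto a(x)^{1/\alpha}$. Normalizing against $\sum_{x} q_{X}(x) = 1$ yields exactly the formula stated in the corollary.

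To confirm that this stationary point is the true extremum in both regimes, I would observe that for $\alpha \in (1, \infty)$ the map $q \mapsto q^{1-\alpha}$ is strictly convex on $(0, \infty)$, so $g$ is strictly convex on the simplex and the critical point is its unique minimizer; for $\alpha \in (0,1)$ the same map is strictly concave, so $g$ is strictly concave and the critical point is its unique maximizer. In either regime this matches the correct optimization direction dictated by the sign of $\frac{1}{\alpha - 1}$, and hence the resulting $q_{X}^{*}$ minimizes $D_{\alpha}$. The second claim follows from the same argument with the roles of $X$ and $Y$ exchanged, since the functional $D_{\alpha}(p_{X}p_{Y\mid X} || q_{X}q_{Y})$ is symmetric in the pairs $(X, q_{X})$ and $(Y, q_{Y})$ once $p_{X}p_{Y\mid X}$ is viewed simply as a joint distribution on $\mathcal{X}\times\mathcal{Y}$.

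The main obstacle is essentially bookkeeping: one must track the sign of $\alpha - 1$ consistently to be sure that the unique Lagrangian critical point is the right extremum (minimum of $g$ vs.\ maximum of $g$) in each regime, and then compose this with the opposite sign of $\frac{1}{\alpha-1}$ so that $D_{\alpha}$ itself is minimized in both cases. This bookkeeping could alternatively be bypassed by directly invoking \cite[Prop 9 and Lemma 16]{e21080778} and \cite[Lemma 29]{8231191}, already cited in the excerpt, since each single-variable subproblem is an instance of the well-known Sibson-style minimization of a R\'enyi divergence over its second argument.
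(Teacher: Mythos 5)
Your proof is correct, but it takes a different route from the paper: the paper offers no derivation at all for this corollary, simply inheriting both updating formulae from the cited results \cite[Prop 9 and Lemma 16]{e21080778} and \cite[Lemma 29]{8231191}, which give the minimizing marginal of a R\'enyi divergence over one factor of a product distribution in closed form. You instead derive the formula from scratch: writing $D_{\alpha}(p_{X}p_{Y\mid X}\,||\,q_{X}q_{Y}) = \frac{1}{\alpha-1}\log\sum_{x}q_{X}(x)^{1-\alpha}a(x)$ with $a(x)=\sum_{y}p_{X,Y}(x,y)^{\alpha}q_{Y}(y)^{1-\alpha}$, tracking the sign of $\frac{1}{\alpha-1}$ to decide whether $g(q_{X})=\sum_{x}q_{X}(x)^{1-\alpha}a(x)$ must be minimized ($\alpha>1$, $g$ convex) or maximized ($\alpha\in(0,1)$, $g$ concave), and solving the stationarity condition $(1-\alpha)q_{X}(x)^{-\alpha}a(x)=\lambda$ to get $q_{X}^{*}(x)\propto a(x)^{1/\alpha}$; the symmetry argument for the second formula is also sound. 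The sign bookkeeping works out exactly as you say, and the convexity/concavity of $q\mapsto q^{1-\alpha}$ in each regime guarantees the critical point is the global extremum of the subproblem (for $\alpha>1$ one can additionally note that $q^{1-\alpha}\to\infty$ as $q\to 0^{+}$ whenever $a(x)>0$, so the minimizer is interior and the Lagrangian stationary point is legitimate; coordinates with $a(x)=0$ correctly receive zero mass under your formula). What your approach buys is a self-contained, elementary verification valid for every $\alpha\in(0,1)\cup(1,\infty)$, making visible where the $1/\alpha$-power tilting comes from; what the paper's citation approach buys is brevity and delegation of the degenerate cases (zero masses, uniqueness) to the referenced lemmas --- and, as you yourself observe in your final remark, the two routes are interchangeable here.
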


Additional AO algorithm can be derived based on the Lapidoth--Pfister MI characterization in \eqref{eq:def_vc_LP_MI_02}.
\begin{prop} \label{prop:update_formulae_LP_MI_vc}
Let $F_{\alpha}^{\text{LP}}(\tilde{q}_{X, Y}, q_{Y})$ and $\tilde{F}_{\alpha}^{\text{LP}}(\tilde{q}_{X, Y}, r_{X\mid Y})$ be functionals defined in \eqref{eq:def_vc_LP_MI_02} and \eqref{eq:def_vc_LP_MI_03}, respectively.

Then, for $\alpha \in (0, 1)$, the following holds:
\begin{enumerate}
\item For a fixed $\tilde{q}_{X, Y}$, $F_{\alpha}^{\text{LP}}(\tilde{q}_{X, Y}, q_{Y})$ is minimized by 
\begin{align}
q_{Y}^{*}(y) &:= \sum_{x} \tilde{q}_{X, Y}(x, y). \label{eq:LP_update_q_Y}
\end{align}
\item  For a fixed $q_{Y}$, $F_{\alpha}^{\text{LP}}(\tilde{q}_{X, Y}, q_{Y})$ is minimized by 
\begin{align}
\tilde{q}_{X, Y}^{*}(x, y) &= \bar{q}_{X}(x) \bar{q}_{Y\mid X}(y\mid x), \label{eq:LP_update_tilde_q_X_Y}
\end{align}
where 
\begin{align}
&\bar{q}_{X}(x) := \frac{p_{X}(x) \left( \sum_{y} p_{Y\mid X}(y\mid x)^{\alpha}q_{Y}(y)^{1-\alpha}\right)^{\frac{1}{\alpha}}}{\sum_{x}p_{X}(x) \left( \sum_{y} p_{Y\mid X}(y\mid x)^{\alpha}q_{Y}(y)^{1-\alpha}\right)^{\frac{1}{\alpha}}}, \label{eq:LP_update_vc_bar_q_X} \\ 
&\bar{q}_{Y\mid X}(y\mid x) := \frac{p_{Y\mid X}(y\mid x)^{\alpha}q_{Y}(y)^{1-\alpha}}{\sum_{y}p_{Y\mid X}(y\mid x)^{\alpha}q_{Y}(y)^{1-\alpha}}. \label{eq:LP_update_vc_bar_q_Y_X}
\end{align}
\end{enumerate}
For $\alpha \in (1, \infty)$, the following holds:
\begin{enumerate}
\item[3)] For a fixed $\tilde{q}_{X, Y}$, $\tilde{F}_{\alpha}^{\text{LP}}(\tilde{q}_{X, Y}, r_{X\mid Y})$ is maximized by 
\begin{align}
r_{X\mid Y}^{*}(x\mid y) &= \frac{\tilde{q}_{X, Y}(x, y)}{\sum_{x}\tilde{q}_{X, Y}(x, y)}. \label{eq:LP_update_vc_r_X_Y}
\end{align}
\item[4)]  For a fixed $r_{X\mid Y}$, $\tilde{F}_{\alpha}^{\text{LP}}(\tilde{q}_{X, Y}, r_{X\mid Y})$ is maximized by 
\begin{align}
q_{X, Y}^{*}(x, y) &:= \hat{q}_{X}(x)\hat{q}_{Y\mid X}(y\mid x), \label{eq:LP_update_vc_tilde_q_X_Y}
\end{align}
where 
\begin{align}
&\hat{q}_{X}(x) \notag \\ 
&:= \frac{p_{X}(x)^{\frac{\alpha}{2\alpha-1}}\left( \sum_{y}p_{Y\mid X}(y | x)r_{X\mid Y}(x|y)^{1-\frac{1}{\alpha}} \right)^{\frac{\alpha}{2\alpha-1}}}{\sum_{x}p_{X}(x)^{\frac{\alpha}{2\alpha-1}}\left(\sum_{y}p_{Y\mid X}(y|x)r_{X\mid Y}(x|y)^{1-\frac{1}{\alpha}} \right)^{\frac{\alpha}{2\alpha-1}}}, \label{eq:LP_update_vc_hat_q_X} \\ 
&\hat{q}_{Y\mid X}(y\mid x):= \frac{p_{Y\mid X}(y\mid x)r_{X\mid Y}(x\mid y)^{1-\frac{1}{\alpha}}}{\sum_{x}p_{Y\mid X}(y\mid x)r_{X\mid Y}(x\mid y)^{1-\frac{1}{\alpha}}}. \label{eq:LP_update_vc_hat_q_Y_X}
\end{align}
\end{enumerate}
\end{prop}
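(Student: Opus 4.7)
The plan is to reduce each of the four claims to the Gibbs variational principle: for any positive function $\mu$ and probability distribution $\pi$ on a common alphabet, $\sum_z \pi(z)\log\bigl(\pi(z)/\mu(z)\bigr) \geq -\log\sum_z \mu(z)$, with equality iff $\pi\propto\mu$. After combining the logarithmic terms, each part identifies the correct tilted reference measure, and the optimal value is the corresponding $-\log$ of its partition function.

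Parts 1 and 3 are immediate. In Part 1, only the second summand of $F_\alpha^{\text{LP}}$ depends on $q_Y$, and $D(\tilde q_X\tilde q_{Y\mid X}\|\tilde q_X q_Y) = I(\tilde q_X,\tilde q_{Y\mid X}) + D(\tilde q_Y\|q_Y)$, so the minimum over $q_Y$ is attained at $q_Y^\ast = \tilde q_Y$, which is \eqref{eq:LP_update_q_Y}. In Part 3, only the expectation depends on $r_{X\mid Y}$, and applying the Shannon--MI reverse-channel identity \eqref{eq:max_characterization_Shannon_MI} to the law $\tilde q_{X,Y}$ immediately yields $r_{X\mid Y}^\ast = \tilde q_{X\mid Y}$, i.e.\ \eqref{eq:LP_update_vc_r_X_Y}.

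For Parts 2 and 4, I would optimize over $\tilde q_{X,Y} = \tilde q_X\tilde q_{Y\mid X}$ in two nested stages. Fixing $\tilde q_X$, the terms involving $\tilde q_{Y\mid X}(\cdot\mid x)$ regroup, via $\tfrac{\alpha}{1-\alpha}+1=\tfrac{1}{1-\alpha}$ in Part 2 and the analogous identity in Part 4, into $\tfrac{1}{1-\alpha}$ (resp.\ $-\tfrac{\alpha}{\alpha-1}$) times a single KL divergence of $\tilde q_{Y\mid X}(\cdot\mid x)$ from a normalized reference proportional to $p_{Y\mid X}(y\mid x)^\alpha q_Y(y)^{1-\alpha}$ in Part 2 and to $p_{Y\mid X}(y\mid x)\,r_{X\mid Y}(x\mid y)^{1-1/\alpha}$ in Part 4. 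The prefactor is positive for the minimization in Part 2 (since $\alpha\in(0,1)$) and negative for the maximization in Part 4 (since $\alpha>1$), so Gibbs' inequality pins down the optimizers $\bar q_{Y\mid X}$ and $\hat q_{Y\mid X}$ in \eqref{eq:LP_update_vc_bar_q_Y_X} and \eqref{eq:LP_update_vc_hat_q_Y_X}, with the inner optimum collapsing to an explicit multiple of $\log M(x)$, where $M(x)$ is the partition function of the tilted per-$x$ measure.

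Substituting back reduces each problem to a single Gibbs optimization in $\tilde q_X$. In Part 2 the outer objective becomes $\tfrac{\alpha}{1-\alpha}\,D(\tilde q_X\|\mu)$ plus a constant, where $\mu(x)\propto p_X(x)M(x)^{1/\alpha}$, which yields $\bar q_X$ as in \eqref{eq:LP_update_vc_bar_q_X}. In Part 4 the $-\log\tilde q_X(X)$ piece of the expectation contributes an additional entropy $-\sum_x\tilde q_X(x)\log\tilde q_X(x)$; combining this with the two $\tfrac{\alpha}{\alpha-1}$-weighted divergences via $1+\tfrac{\alpha}{\alpha-1}=\tfrac{2\alpha-1}{\alpha-1}$ produces exactly the exponent $\tfrac{\alpha}{2\alpha-1}$ in $\hat q_X$ of \eqref{eq:LP_update_vc_hat_q_X}. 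The main obstacle is this last exponent manipulation together with the sign bookkeeping that selects between min and max depending on the range of $\alpha$; a convenient cross-check is that substituting the resulting optima back into $\tilde F_\alpha^{\text{LP}}$ must reproduce the closed-form value of $I_\alpha^{\text{LP}}(X;Y)$ given by \eqref{eq:vc_LP_MI_01}.
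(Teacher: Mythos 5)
Your proposal is correct and follows essentially the same route as the paper: parts 1) and 3) are read off from the Shannon--MI characterizations in Proposition \ref{prop:vc_Shannon_MI}, and parts 2) and 4) use exactly the nested tilted-measure Gibbs/KL decomposition (inner coefficient $\tfrac{1}{1-\alpha}$, resp.\ $\tfrac{\alpha}{1-\alpha}$, and outer exponents $\tfrac{1}{\alpha}$ and $\tfrac{\alpha}{2\alpha-1}$) that the paper employs in Appendix \ref{proof:update_formulae_LP_MI_vc} for \eqref{eq:LP_update_tilde_q_X_Y} and, via the proof of Theorem \ref{thm:variational_characterization_LP_MI}, for part 4). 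The only difference is cosmetic---you spell out the Gibbs steps the paper delegates to those proofs---and your per-$x$ normalization of $\hat{q}_{Y\mid X}$ over $y$ is the intended reading of \eqref{eq:LP_update_vc_hat_q_Y_X}, whose displayed denominator $\sum_{x}$ is a typo for $\sum_{y}$.
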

\begin{proof}
See Appendix \ref{proof:update_formulae_LP_MI_vc}.
\end{proof}

\subsection{AO Algorithms for Computing $\alpha$-Capacity} \label{sec:AO_alpha_capacity}

Arimoto \cite{arimoto1977} proposed an AO algorithm for computing the Sibson capacity based on the characterization in \eqref{eq:vc_Sibson_MI_01}. 
Subsequently, Arimoto \cite{BN01990060en} and Kamatsuka \textit{et al.} \cite{10619200} proposed similar AO algorithms, building on the characterizations of Arimoto and Sibson MI in Remark \ref{rem:vc_Arimoto} and \eqref{eq:vc_Sibson_MI_02}  (see \cite[Thm 1 and Table I]{10619200}). 
Kamatsuka \textit{et al.} \cite{kamatsuka2024_ISITA} derived an AO algorithm for computing the Augustin--Csisz{\' a}r capacity for $\alpha \in (1, \infty)$ based on the characterization in \eqref{eq:vc_AC_MI_02_03}. 
In \cite{kamatsuka2024_ISITA}, they also proposed another AO algorithm for the Sibson capacity by modifying the AO algorithm for computing the correct decoding exponent proposed by Jitsumatsu and Oohama \cite[Algorithm 1]{8889422}. 

In addition to these algorithms, an AO algorithm for $\alpha \in (1, \infty)$ can be derived from the characterizations in \eqref{eq:vc_Sibson_MI_03} and its proof with the following updating formulae:
\begin{prop} \label{prop:update_formulae_Sibson_03}
Let $\alpha \in (1, \infty)$ and $\tilde{F}_{\alpha}^{\text{S3}}(p_{X}, \tilde{q}_{X, Y}, r_{X\mid Y})$ be a functional defined in \eqref{eq:def_vc_Sibson_MI_03}. Then, 
\begin{enumerate}
\item For a fixed $(p_{X}, \tilde{q}_{X, Y})$, $\tilde{F}_{\alpha}^{\text{S3}}(p_{X}, \tilde{q}_{X, Y}, r_{X\mid Y})$ is maximized by 
\begin{align}
r_{X\mid Y}^{*}(x \mid y) &:= \frac{\tilde{q}_{X, Y}(x, y)}{\sum_{x}\tilde{q}_{X, Y}(x, y)}.
\end{align}
\item For a fixed $(\tilde{q}_{X, Y}, r_{X\mid Y})$, $\tilde{F}_{\alpha}^{\text{S3}}(p_{X}, \tilde{q}_{X, Y}, r_{X\mid Y})$ is maximized by 
\begin{align}
p_{X}^{*}(x) := \sum_{y}\tilde{q}_{X, Y}(x, y).
\end{align}
\item For a fixed $(p_{X}, r_{X\mid Y})$, $\tilde{F}_{\alpha}^{\text{S3}}(p_{X}, \tilde{q}_{X, Y}, r_{X\mid Y})$ is maximized by 
\begin{align}
\tilde{q}^{*}_{X, Y}(x, y) &:= \frac{p_{X}(x)^{\frac{1}{\alpha}}p_{Y\mid X}(y\mid x)r_{X\mid Y}(x\mid y)^{1-\frac{1}{\alpha}}}{\sum_{x,y}p_{X}(x)^{\frac{1}{\alpha}}p_{Y\mid X}(y\mid x)r_{X\mid Y}(x\mid y)^{1-\frac{1}{\alpha}}}.
\end{align}
\end{enumerate}
\end{prop}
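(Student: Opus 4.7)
The plan is to handle the three statements separately by isolating the dependence of $\tilde{F}_{\alpha}^{\text{S3}}(p_{X},\tilde{q}_{X,Y},r_{X\mid Y})$ on the variable being optimized, reducing each subproblem to either a Gibbs-inequality argument or a standard maximum-entropy calculation.

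For Part 1, the only summand that depends on $r_{X\mid Y}$ is $\vE_{X,Y}^{\tilde{q}_{X,Y}}[\log r_{X\mid Y}(X\mid Y)]$. Maximizing this over reverse channels is precisely the problem solved by \eqref{eq:max_characterization_Shannon_MI} in Proposition \ref{prop:vc_Shannon_MI} with the joint $\tilde{q}_{X,Y}$ playing the role of $p_{X,Y}$, so the maximizer is the conditional $\tilde{q}_{X,Y}(x,y)/\sum_{x}\tilde{q}_{X,Y}(x,y)$. For Part 2, only $\tfrac{1}{1-\alpha}D(\tilde{q}_{X}\,||\,p_{X})$ depends on $p_{X}$; because $\alpha\in(1,\infty)$ makes the coefficient negative, maximization reduces to driving this nonnegative divergence to zero, i.e.\ choosing $p_{X}^{*}=\tilde{q}_{X}=\sum_{y}\tilde{q}_{X,Y}(\cdot,y)$.

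Part 3 is the substantive step, because the marginal $\tilde{q}_{X}$ enters all three constituent terms and is self-referentially coupled to the optimization variable $\tilde{q}_{X,Y}$. My plan is to expand every divergence explicitly in $\tilde{q}_{X,Y}(x,y)$ and collect like terms. The key observation I expect to exploit is that the aggregate coefficient of $\sum_{x}\tilde{q}_{X}(x)\log\tilde{q}_{X}(x)$ equals
\begin{equation*}
-\tfrac{\alpha}{1-\alpha}-1+\tfrac{1}{1-\alpha}=0,
\end{equation*}
so the marginal-entropy contributions cancel exactly. Once this cancellation is performed, $\tilde{F}_{\alpha}^{\text{S3}}$ collapses to the canonical form $\tfrac{\alpha}{1-\alpha}\sum_{x,y}\tilde{q}(x,y)\log\tilde{q}(x,y)+\sum_{x,y}\tilde{q}(x,y)\,u(x,y)$, where $u(x,y):=-\tfrac{\alpha}{1-\alpha}\log p_{Y\mid X}(y\mid x)+\log r_{X\mid Y}(x\mid y)-\tfrac{1}{1-\alpha}\log p_{X}(x)$. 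Since $\tfrac{\alpha}{1-\alpha}<0$ for $\alpha>1$, this objective is strictly concave on the probability simplex, and a single Lagrange multiplier for the normalization constraint yields $\tilde{q}^{*}(x,y)\propto\exp\!\bigl(-\tfrac{1-\alpha}{\alpha}u(x,y)\bigr)$. Simplifying the exponent reproduces $p_{X}(x)^{1/\alpha}p_{Y\mid X}(y\mid x)\,r_{X\mid Y}(x\mid y)^{1-1/\alpha}$ up to normalization, matching the claim.

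The hard part will be the bookkeeping that exposes the vanishing coefficient in Part 3; everything else is routine once the expression collapses to a Gibbs-type maximization. As a fallback I would adopt the factorization $\tilde{q}_{X,Y}=\tilde{q}_{X}\tilde{q}_{Y\mid X}$ and first optimize $\tilde{q}_{Y\mid X}(\cdot\mid x)$ for each $x$ (producing a Gibbs distribution proportional to $p_{Y\mid X}(y\mid x)r_{X\mid Y}(x\mid y)^{1-1/\alpha}$) and then $\tilde{q}_{X}$ separately, but the direct cancellation argument is cleaner and avoids carrying the self-reference through two rounds of optimization.
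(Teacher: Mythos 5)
Your proposal is correct and takes essentially the same route as the paper's (implicit) derivation: parts 1 and 2 are read off from Proposition \ref{prop:vc_Shannon_MI} and the sign of $\tfrac{1}{1-\alpha}$, and your cancellation of the $\sum_{x}\tilde{q}_{X}(x)\log\tilde{q}_{X}(x)$ contributions in part 3 is exactly the algebra by which Appendix \ref{proof:variational_characterization_Sibson_MI} rewrites the $\tilde{q}_{X,Y}$-dependent part of $\tilde{F}_{\alpha}^{\text{S3}}$ as $\tfrac{\alpha}{1-\alpha}D(\tilde{q}_{X,Y}\,||\,\hat{q}_{X,Y})$ plus a term independent of $\tilde{q}_{X,Y}$, with $\hat{q}_{X,Y}$ in \eqref{eq:hat_q_x_y} identical to your Gibbs maximizer. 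The only cosmetic difference is that you close part 3 with strict concavity and a Lagrange multiplier where the paper invokes nonnegativity of the Kullback--Leibler divergence.
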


\begin{remark}
The AO algorithm with the above updating formulae is equivalent to the AO algorithm for computing the correct decoding exponent without a cost constraint in \cite[Algorithm 2]{8889422}. 
\end{remark}

Similarly, a novel AO algorithm for computing the Lapidoth--Pfister capacity for $\alpha \in (1, \infty)$ can be derived from the characterization of Lapidoth--Pfister MI in \eqref{eq:vc_LP_MI_02_03} and its proof using the following updating formulae:
\begin{prop} \label{prop:update_formulae_LP_03}
Let $\alpha \in (1, \infty)$ and $\tilde{F}_{\alpha}^{\text{LP}}(p_{X}, \tilde{q}_{X, Y}, r_{X\mid Y})$ be a functional defined in \eqref{eq:def_vc_LP_MI_03}\footnote{Here, we have added $p_{X}$ as an argument for the functional $\tilde{F}_{\alpha}^{\text{LP}}$.}. Then, 
\begin{enumerate}
\item For a fixed $(p_{X}, \tilde{q}_{X, Y})$, $\tilde{F}_{\alpha}^{\text{LP}}(p_{X}, \tilde{q}_{X, Y}, r_{X\mid Y})$ is maximized by 
\begin{align}
r_{X\mid Y}^{*}(x \mid y) &:= \frac{\tilde{q}_{X, Y}(x, y)}{\sum_{x}\tilde{q}_{X, Y}(x, y)}.
\end{align}
\item For a fixed $(\tilde{q}_{X, Y}, r_{X\mid Y})$, $\tilde{F}_{\alpha}^{\text{LP}}(p_{X}, \tilde{q}_{X, Y}, r_{X\mid Y})$ is maximized by 
\begin{align}
p_{X}^{*}(x) := \sum_{y}\tilde{q}_{X, Y}(x, y). \label{eq:update_p_X_LP}
\end{align}
\item For a fixed $(p_{X}, r_{X\mid Y})$, $\tilde{F}_{\alpha}^{\text{LP}}(p_{X}, \tilde{q}_{X, Y}, r_{X\mid Y})$ is maximized by 
\begin{align}
\tilde{q}^{*}_{X, Y}(x, y) &:= \hat{q}_{X}(x)\hat{q}_{Y\mid X}(y\mid x), 
\end{align}
where $\hat{q}_{X}$ and $\hat{q}_{Y\mid X}$ are defined in \eqref{eq:LP_update_vc_hat_q_X} and \eqref{eq:LP_update_vc_hat_q_Y_X}, respectively.
\end{enumerate}
\end{prop}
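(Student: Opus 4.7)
The plan is to exploit the additive decomposition of $\tilde{F}_{\alpha}^{\text{LP}}(p_{X}, \tilde{q}_{X, Y}, r_{X\mid Y})$ defined in \eqref{eq:def_vc_LP_MI_03} and reduce parts 1) and 3) to Proposition \ref{prop:update_formulae_LP_MI_vc}, so that only the update for $p_{X}$ is a genuinely new computation. Inspecting the three terms: $r_{X\mid Y}$ enters only in the middle expectation; the joint $\tilde{q}_{X, Y}$ (together with its marginal $\tilde{q}_{X}$) appears in all three terms; and $p_{X}$ appears only in the last term $\frac{\alpha}{1-\alpha} D(\tilde{q}_{X} || p_{X})$.

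For part 1), since $r_{X\mid Y}$ is present only in $\vE_{X, Y}^{\tilde{q}_{X, Y}}[\log r_{X\mid Y}(X\mid Y)/\tilde{q}_{X}(X)]$ and $p_{X}$ is inert there, the maximizer coincides with the one established in Proposition \ref{prop:update_formulae_LP_MI_vc} part 3). Similarly, for part 3), with $(p_{X}, r_{X\mid Y})$ held fixed, the $\tilde{q}_{X, Y}$-dependence of the objective matches that of Proposition \ref{prop:update_formulae_LP_MI_vc} part 4), so the maximizer is $\hat{q}_{X}\hat{q}_{Y\mid X}$ as defined in \eqref{eq:LP_update_vc_hat_q_X} and \eqref{eq:LP_update_vc_hat_q_Y_X}. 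Both facts can be quoted directly.

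For part 2), only $\frac{\alpha}{1-\alpha} D(\tilde{q}_{X} || p_{X})$ depends on $p_{X}$. Since $\alpha\in(1,\infty)$ implies $\frac{\alpha}{1-\alpha}<0$, maximizing this term over $p_{X}$ is equivalent to minimizing $D(\tilde{q}_{X} || p_{X})$. By Gibbs' inequality (equivalently, the $D(\cdot || \cdot)\geq 0$ property), the minimum value is $0$, attained uniquely at $p_{X}^{*}=\tilde{q}_{X}$, i.e. $p_{X}^{*}(x)=\sum_{y}\tilde{q}_{X, Y}(x, y)$, which is \eqref{eq:update_p_X_LP}.

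The main obstacle is part 3), which is intricate when derived from scratch but is already subsumed by Proposition \ref{prop:update_formulae_LP_MI_vc}. A self-contained derivation would parametrize $\tilde{q}_{X, Y}=\tilde{q}_{X}\tilde{q}_{Y\mid X}$, use the chain rule $D(\tilde{q}_{X}\tilde{q}_{Y\mid X} || \tilde{q}_{X}p_{Y\mid X})=\sum_{x}\tilde{q}_{X}(x)D(\tilde{q}_{Y\mid X}(\cdot\mid x) || p_{Y\mid X}(\cdot\mid x))$ to decouple $\tilde{q}_{Y\mid X}$ across $x$, and then run two nested Lagrangian optimizations: first over $\tilde{q}_{Y\mid X}(\cdot\mid x)$ for each $x$ (producing $\hat{q}_{Y\mid X}$), and then over $\tilde{q}_{X}$ (producing $\hat{q}_{X}$, whose exponent $\frac{\alpha}{2\alpha-1}$ reflects the interaction between the two $\frac{\alpha}{1-\alpha}$-weighted divergence terms). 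This is precisely the calculation that yields the reduction from \eqref{eq:vc_LP_MI_02_03} to \eqref{eq:vc_LP_MI_01} in the proof of Theorem \ref{thm:variational_characterization_LP_MI}, so quoting that argument is cleaner than reproducing it.
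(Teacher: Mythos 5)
Your proposal is correct and takes essentially the same route as the paper, whose entire proof is the citation ``This follows from Proposition \ref{prop:vc_Shannon_MI} and the proof of Proposition \ref{prop:update_formulae_LP_MI_vc}'': you reduce parts 1) and 3) to exactly those prior results (noting that $p_{X}$ is inert in the $r_{X\mid Y}$-term and that the $\tilde{q}_{X, Y}$-dependence for fixed $(p_{X}, r_{X\mid Y})$ coincides with that in Proposition \ref{prop:update_formulae_LP_MI_vc}). Your explicit treatment of part 2) --- that $p_{X}$ appears only in $\frac{\alpha}{1-\alpha}D(\tilde{q}_{X} || p_{X})$, whose coefficient is negative for $\alpha \in (1, \infty)$, so nonnegativity of the Kullback--Leibler divergence forces $p_{X}^{*} = \tilde{q}_{X}$ --- is precisely the step the paper leaves implicit, and it is correct.
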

\begin{proof}
This follows from Proposition \ref{prop:vc_Shannon_MI} and the proof of Proposition \ref{prop:update_formulae_LP_MI_vc}.
\end{proof}

\subsection{Global Convergence Properties of AO Algorithms} \label{ssec:global_convergence}
This subsection discusses the global convergence properties of each AO algorithm.  

\subsubsection{AO algorithms for computing $\alpha$-capacity}
Arimoto \cite[Thm 3]{1055640} and Kamatsuka \textit{et al.} \cite[Cor 2]{10619200} demonstrated the global convergence properties of the AO algorithms based on the characterization of the Sibson MI in \eqref{eq:vc_Sibson_MI_01} and \eqref{eq:vc_Sibson_MI_02} as well as the corresponding AO algorithms derived from the characterization of the Arimoto MI.
The modified Jitsuamtsu--Oohama algorithm proposed in \cite{kamatsuka2024_ISITA} also has this property (see \cite[Thm 2]{8889422}), while 
the property of the AO algorithm based on the characterization in \eqref{eq:vc_AC_MI_02_03} for computing the Augustin--Csisz{\' a}r capacity for $\alpha\in (1, \infty)$ remains undemonstrated.

Here, we prove that the AO algorithm for computing the Lapidoth--Pfister capacity for $\alpha \in (1, \infty)$ in Proposition \ref{prop:update_formulae_LP_03} has the global convergence property. 
First, the following lemma follows directly from Theorem \ref{thm:variational_characterization_LP_MI} and Proposition \ref{prop:update_formulae_LP_03}. 
\begin{lemma} \label{lem:LP_capacity_monotonically_incresing}
Let $\alpha \in (1, \infty)$ and $\{p_{X}^{(k)}\}_{k=0}^{\infty}$, $\{q_{X, Y}^{(k)}\}_{k=0}^{\infty}$, and $\{r_{X\mid Y}^{(k)}\}_{k=0}^{\infty}$ be sequences of distributions at the $k$th iteration of the AO algorithm. Then, the following holds:
\begin{align} 
&\tilde{F}_{\alpha}^{\text{LP}}(p_{X}^{(0)}, \tilde{q}_{X, Y}^{(0)}, r_{X\mid Y}^{(0)}) \leq \cdots  
\leq \tilde{F}_{\alpha}^{\text{LP}}(p_{X}^{(k)}, \tilde{q}_{X, Y}^{(k)}, r_{X\mid Y}^{(k)}) \notag \\ 
&\leq \tilde{F}_{\alpha}^{\text{LP}}(p_{X}^{(k+1)}, \tilde{q}_{X, Y}^{(k+1)}, r_{X\mid Y}^{(k)}) \leq \cdots \leq C_{\alpha}^{\text{LP}}.
\end{align}
\end{lemma}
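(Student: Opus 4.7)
The plan is to combine the three coordinate-wise optimality statements in Proposition \ref{prop:update_formulae_LP_03} to get a monotonic chain, and then invoke the variational characterization in Theorem \ref{thm:variational_characterization_LP_MI} to get the ceiling $C_{\alpha}^{\text{LP}}$.

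First, I would fix the iteration round $k$ and look at one full sweep. The update that produces $r_{X\mid Y}^{(k)}$ from $(p_X^{(k)}, \tilde q_{X,Y}^{(k)})$ is, by part 1 of Proposition \ref{prop:update_formulae_LP_03}, the maximizer of $\tilde F_\alpha^{\text{LP}}(p_X^{(k)}, \tilde q_{X,Y}^{(k)}, \cdot)$, so
\begin{align*}
\tilde F_\alpha^{\text{LP}}(p_X^{(k)}, \tilde q_{X,Y}^{(k)}, r_{X\mid Y}^{(k-1)}) \leq \tilde F_\alpha^{\text{LP}}(p_X^{(k)}, \tilde q_{X,Y}^{(k)}, r_{X\mid Y}^{(k)}).
\end{align*}
The same reasoning, applied to parts 2 and 3 of Proposition \ref{prop:update_formulae_LP_03}, gives
\begin{align*}
\tilde F_\alpha^{\text{LP}}(p_X^{(k)}, \tilde q_{X,Y}^{(k)}, r_{X\mid Y}^{(k)}) &\leq \tilde F_\alpha^{\text{LP}}(p_X^{(k+1)}, \tilde q_{X,Y}^{(k)}, r_{X\mid Y}^{(k)}) \\
&\leq \tilde F_\alpha^{\text{LP}}(p_X^{(k+1)}, \tilde q_{X,Y}^{(k+1)}, r_{X\mid Y}^{(k)}),
\end{align*}
depending on the convention for the updating order. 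Concatenating these inequalities across all $k$ yields the claimed monotonic chain.

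For the upper bound, I would exploit the variational characterization of the Lapidoth--Pfister MI for $\alpha \in (1, \infty)$ from Theorem \ref{thm:variational_characterization_LP_MI}, namely
\begin{align*}
I_\alpha^{\text{LP}}(X; Y) \text{ at input } p_X = \max_{\tilde q_{X,Y}}\max_{r_{X\mid Y}} \tilde F_\alpha^{\text{LP}}(\tilde q_{X,Y}, r_{X\mid Y}),
\end{align*}
where the dependence on $p_X$ is through the functional itself. Evaluating at any particular $(\tilde q_{X,Y}^{(k+1)}, r_{X\mid Y}^{(k)})$ gives
\begin{align*}
\tilde F_\alpha^{\text{LP}}(p_X^{(k+1)}, \tilde q_{X,Y}^{(k+1)}, r_{X\mid Y}^{(k)}) \leq I_\alpha^{\text{LP}}(X; Y)\big|_{p_X = p_X^{(k+1)}} \leq C_\alpha^{\text{LP}},
\end{align*}
since $C_\alpha^{\text{LP}} = \max_{p_X} I_\alpha^{\text{LP}}(X; Y)$ by definition.

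The two ingredients are essentially already in hand; I do not expect a serious obstacle beyond bookkeeping about which variable is updated at each step. The only subtlety is to make sure the chain is written in a way consistent with the updating order actually used by the algorithm (that is, which of $r_{X\mid Y}$, $p_X$, and $\tilde q_{X,Y}$ is refreshed first in a given iteration); once that order is fixed, each link in the chain is an immediate application of one part of Proposition \ref{prop:update_formulae_LP_03}, and the final inequality $\leq C_\alpha^{\text{LP}}$ is an immediate application of Theorem \ref{thm:variational_characterization_LP_MI}.
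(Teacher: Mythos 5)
Your proposal is correct and is essentially the paper's own argument: the paper proves this lemma by exactly the two ingredients you identify, namely the coordinate-wise maximality of the updating formulae in Proposition~\ref{prop:update_formulae_LP_03} (each sweep is coordinate ascent, giving the monotone chain) and the variational characterization in Theorem~\ref{thm:variational_characterization_LP_MI} combined with $C_{\alpha}^{\text{LP}} = \max_{p_{X}} I_{\alpha}^{\text{LP}}(X; Y)$ (giving the ceiling). Your bookkeeping about the update order is the right care to take, but it introduces no deviation from the paper's route.
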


Next, the approximation error at the $k$th iteration, $C_{\alpha}^{\text{LP}} - \tilde{F}_{\alpha}^{\text{C}}(p_{X}^{(k+1)}, \tilde{q}_{X, Y}^{(k+1)}, r_{X\mid Y}^{(k)})$, can be upper bounded as follows:
\begin{lemma}\label{lem:LP_capacity_upper_bound} 
Let $\alpha\in (1, \infty)$. Then, the approximation error has the following upper bound:
\begin{align}
&C_{\alpha}^{\text{LP}}- \tilde{F}_{\alpha}^{\text{LP}}(p_{X}^{(k+1)}, \tilde{q}_{X, Y}^{(k+1)}, r_{X\mid Y}^{(k)}) 
\leq \frac{\alpha}{\alpha-1} \sum_{x, y}\tilde{q}_{X, Y}^{\text{opt}}(x, y)  \notag \\ 
&\times  \log \frac{\tilde{q}_{X, Y}^{(k+1)}(x, y)}{\tilde{q}_{X, Y}^{(k)}(x, y)} + \sum_{x}\tilde{q}_{X}^{\text{opt}}(x)\log \frac{\tilde{q}_{X}^{(k+1)}(x)}{\tilde{q}_{X}^{(k)}(x)}, \label{eq:LP_capacity_upper_bound}
\end{align}
where $\tilde{q}_{X, Y}^{\text{opt}}$ be the optimal distributions that attain the Lapidoth--Pfister capacity $C_{\alpha}^{\text{LP}} = \max_{p_{X}} \max_{\tilde{q}_{X, Y}}\max_{r_{X\mid Y}} \tilde{F}_{\alpha}^{\text{LP}}(p_{X}, \tilde{q}_{X, Y}, r_{X\mid Y})$. 
\end{lemma}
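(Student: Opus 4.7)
The plan is to mimic the classical Arimoto--Blahut convergence analysis by viewing the claimed right-hand side as a telescoping potential decrease. Specifically, using the identity $\sum_{x,y} \tilde q^{\text{opt}}(x,y) \log [\tilde q^{(k+1)}(x,y)/\tilde q^{(k)}(x,y)] = D(\tilde q^{\text{opt}}\|\tilde q^{(k)}) - D(\tilde q^{\text{opt}}\|\tilde q^{(k+1)})$ (and its marginal-$X$ analogue), the bound to be established is equivalent to $\Phi^{(k)} - \Phi^{(k+1)}$, where $\Phi^{(k)} := \tfrac{\alpha}{\alpha-1} D(\tilde q^{\text{opt}}_{X,Y} \| \tilde q^{(k)}_{X,Y}) + D(\tilde q^{\text{opt}}_X \| \tilde q^{(k)}_X)$. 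Proving a one-step decrease $C_\alpha^{\text{LP}} - \tilde F_\alpha^{\text{LP}}(p^{(k+1)}, \tilde q^{(k+1)}, r^{(k)}) \le \Phi^{(k)} - \Phi^{(k+1)}$ then yields the stated lemma and, by telescoping over $k$, the global $O(1/N)$ convergence property.

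Second, I would use Theorem~\ref{thm:variational_characterization_LP_MI}, specifically \eqref{eq:vc_LP_MI_01}, to express both sides of the approximation error as log-partition functions. Since $\tilde q^{(k+1)}$ is the inner $\tilde q$-maximizer for fixed $(p^{(k+1)}, r^{(k)})$ (Proposition~\ref{prop:update_formulae_LP_03}.3), we have $\tilde F_\alpha^{\text{LP}}(p^{(k+1)}, \tilde q^{(k+1)}, r^{(k)}) = \tfrac{2\alpha-1}{\alpha-1} \log Z^{(k)}$ with $Z^{(k)} := \sum_x p^{(k+1)}(x)^{\alpha/(2\alpha-1)} G(x, r^{(k)})^{\alpha/(2\alpha-1)}$ and $G(x, r) := \sum_y p_{Y\mid X}(y\mid x) r(x\mid y)^{1-1/\alpha}$; analogously, $C_\alpha^{\text{LP}} = \tfrac{2\alpha-1}{\alpha-1} \log Z^{\text{opt}}$. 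The approximation error thus reduces to controlling $\log(Z^{\text{opt}}/Z^{(k)})$.

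Third, the explicit update formulas from Proposition~\ref{prop:update_formulae_LP_03}---namely $p^{(k+1)} = \tilde q^{(k)}_X$, $r^{(k)}(x\mid y) = \tilde q^{(k)}(x,y)/\tilde q^{(k)}_Y(y)$, and $\tilde q^{(k+1)}_X(x) = p^{(k+1)}(x)^{\alpha/(2\alpha-1)} G(x, r^{(k)})^{\alpha/(2\alpha-1)}/Z^{(k)}$---yield the pointwise identity $\log Z^{(k)} = \tfrac{\alpha}{2\alpha-1}[\log p^{(k+1)}(x) + \log G(x, r^{(k)})] - \log \tilde q^{(k+1)}_X(x)$, valid for every $x$ with $\tilde q^{(k+1)}_X(x) > 0$, and the analogous one for $\log Z^{\text{opt}}$ with $\tilde q^{\text{opt}}_X$. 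Taking expectation of both identities against $\tilde q^{\text{opt}}_X$ and subtracting, using $p^{\text{opt}} = \tilde q^{\text{opt}}_X$, converts $\log(Z^{\text{opt}}/Z^{(k)})$ into $\tfrac{\alpha}{2\alpha-1} D(\tilde q^{\text{opt}}_X \| \tilde q^{(k)}_X) + \tfrac{\alpha}{2\alpha-1} \sum_x \tilde q^{\text{opt}}_X(x) \log[G(x, r^{\text{opt}})/G(x, r^{(k)})] - D(\tilde q^{\text{opt}}_X \| \tilde q^{(k+1)}_X)$. The remaining $G$-ratio is bounded by Jensen's (log-sum) inequality applied to $G(x, r^{\text{opt}})$ with test measure $\tilde q^{\text{opt}}_{Y\mid X}(\cdot\mid x) = \tilde q^{\text{opt}}(x,\cdot)/\tilde q^{\text{opt}}_X(x)$, which achieves equality in the analogous expansion for $G(x, r^{\text{opt}})$; substituting $r^{\text{opt}}$ and $r^{(k)}$ by their update forms, the log-ratio collapses into a combination of $D(\tilde q^{\text{opt}}\|\tilde q^{(k)})$ and $D(\tilde q^{\text{opt}}_Y\|\tilde q^{(k)}_Y)$ terms with prefactor $\tfrac{\alpha-1}{\alpha}$.

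The main obstacle will be delicate coefficient bookkeeping. After multiplying through by $\tfrac{2\alpha-1}{\alpha-1}$, one uses the identities $\tfrac{2\alpha-1}{\alpha-1}\cdot\tfrac{\alpha}{2\alpha-1} = \tfrac{\alpha}{\alpha-1}$ and $\tfrac{2\alpha-1}{\alpha-1}\cdot\tfrac{\alpha-1}{2\alpha-1}\cdot\tfrac{\alpha-1}{\alpha} \cdot \tfrac{\alpha}{\alpha-1}\cdot(\cdot) = (\cdot)$ to match the prefactors $\tfrac{\alpha}{\alpha-1}$ and $1$ in the claim; this requires the marginal-$Y$ KL terms produced by the Jensen step to cancel against contributions from the data-processing inequality, and the $D(\tilde q^{\text{opt}}_X\|\tilde q^{(k+1)}_X)$ term from the exact identity for $\log Z^{(k)}$ to combine cleanly with $D(\tilde q^{\text{opt}}_X\|\tilde q^{(k)}_X)$ into the marginal-$X$ telescoping piece. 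Once this algebra lines up, the right-hand side reorganizes into exactly the claimed expression.
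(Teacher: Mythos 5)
Your proposal is correct and follows essentially the same route as the paper's proof: both express $\tilde{F}_{\alpha}^{\text{LP}}(p_{X}^{(k+1)},\tilde{q}_{X,Y}^{(k+1)},r_{X\mid Y}^{(k)})$ in the closed log-partition form \eqref{eq:vc_LP_MI_01}, convert the partition sums into pointwise identities via the update formulas of Proposition \ref{prop:update_formulae_LP_03} (the paper's steps $(b)$--$(e)$), average against $\tilde{q}_{X,Y}^{\text{opt}}$, and discard sign-definite Kullback--Leibler terms. The only differences are cosmetic: you expand $C_{\alpha}^{\text{LP}}$ through the same fixed-point log-partition identity (which requires, as you correctly note, equality of the Jensen step at the optimum, i.e., that the optimizer satisfies all three update equations), whereas the paper expands it via the KL characterization \eqref{eq:vc_LP_MI_KL} using only $p_{X}^{\text{opt}}=\tilde{q}_{X}^{\text{opt}}$; and the closing bookkeeping is simpler than you anticipate --- the leftover terms $\frac{1}{1-\alpha}\vE_{X}^{\tilde{q}_{X}^{\text{opt}}}\left[D(\tilde{q}_{Y\mid X}^{\text{opt}}(\cdot\mid X)\,\|\,\tilde{q}_{Y\mid X}^{(k)}(\cdot\mid X))\right]$ and $-D(\tilde{q}_{Y}^{\text{opt}}\,\|\,\tilde{q}_{Y}^{(k)})$ are simply dropped by nonnegativity of the KL divergence (since $\frac{1}{1-\alpha}<0$ for $\alpha>1$), with no cancellation or data-processing argument needed here (data processing enters only in the companion Lemma \ref{lem:LP_upper_bound} for $\alpha\in[1/2,1)$).
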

\begin{proof}
See Appendix \ref{proof:LP_capacity_upper_bound}. 
\end{proof}

Summing \eqref{eq:LP_capacity_upper_bound} from $k=0$ to $k=N$ 
obtains the following upper bound that is independent of $N$:
\begin{align}
&\sum_{k=0}^{N} \left( C_{\alpha}^{\text{LP}} - \tilde{F}_{\alpha}^{\text{LP}}(p_{X}^{(k+1)}, \tilde{q}_{X, Y}^{(k+1)} r_{X\mid Y}^{(k)})  \right) \\ 
&\leq \frac{\alpha}{\alpha-1}\sum_{x,y} \tilde{q}_{X, Y}^{\text{opt}}(x, y)\log \frac{\tilde{q}_{X, Y}^{(N+1)}(x, y)}{\tilde{q}_{X, Y}^{(0)}(x, y)} \notag \\
&\qquad \qquad \qquad +\sum_{x}\tilde{q}_{X}^{\text{opt}}(x)\log \frac{\tilde{q}_{X}^{(N+1)}(x)}{\tilde{q}_{X}^{(0)}(x)} \\ 
&= \frac{\alpha}{\alpha-1} \left( D(\tilde{q}_{X, Y}^{\text{opt}} || \tilde{q}_{X, Y}^{(0)}) - D(\tilde{q}_{X, Y}^{\text{opt}} || \tilde{q}_{X, Y}^{(N+1)}) \right) \notag \\
&\qquad \qquad \qquad + D(\tilde{q}_{X}^{\text{opt}} || \tilde{q}_{X}^{(0)}) - D(\tilde{q}_{X}^{\text{opt}} || \tilde{q}_{X}^{(N+1)}) \\ 
&\leq \frac{\alpha}{\alpha-1} D(\tilde{q}_{X, Y}^{\text{opt}} || \tilde{q}_{X, Y}^{(0)}) + D(\tilde{q}_{X}^{\text{opt}} || \tilde{q}_{X}^{(0)}).
\end{align}
If we combine this upper bound with Lemma \ref{lem:LP_capacity_monotonically_incresing}, then the infinite series 
$\sum_{k=0}^{\infty} \left( C_{\alpha}^{\text{LP}}- \tilde{F}_{\alpha}^{\text{LP}}(p_{X}^{(k+1)}, \tilde{q}_{X, Y}^{(k+1)}, r_{X\mid Y}^{(k)}) \right)$ converges. 
Therefore, 
$C_{\alpha}^{\text{LP}} - \tilde{F}_{\alpha}^{\text{LP}}(p_{X}^{(k+1)}, \tilde{q}_{X, Y}^{(k+1)}, r_{X\mid Y}^{(k)})\to 0$ as $k\to \infty$, which implies the convergence result of the Lapidoth--Pfister capacity.
\begin{cor} \label{cor:AC_capacity_convergenece}
Let $\alpha\in (1, \infty)$. Then, 
\begin{align}
\lim_{k\to \infty} \tilde{F}_{\alpha}^{\text{LP}}(p_{X}^{(k+1)}, \tilde{q}_{X, Y}^{(k+1)}, r_{X\mid Y}^{(k)}) = C_{\alpha}^{\text{LP}}.
\end{align}
\end{cor}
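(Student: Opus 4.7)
The plan is to read off the corollary directly from the two preceding lemmas together with the telescoping bound displayed just above the statement. The high-level idea is that Lemma \ref{lem:LP_capacity_monotonically_incresing} shows each error term $e_{k} := C_{\alpha}^{\text{LP}} - \tilde{F}_{\alpha}^{\text{LP}}(p_{X}^{(k+1)}, \tilde{q}_{X, Y}^{(k+1)}, r_{X\mid Y}^{(k)})$ is non-negative and non-increasing in $k$, while the telescoping argument derived from Lemma \ref{lem:LP_capacity_upper_bound} shows the series $\sum_{k\geq 0} e_{k}$ has bounded partial sums. These two facts together force $e_{k}\to 0$.

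More concretely, I would proceed in three short steps. First, invoke Lemma \ref{lem:LP_capacity_monotonically_incresing} to conclude $e_{k}\geq 0$ for every $k$ and that $\{\tilde{F}_{\alpha}^{\text{LP}}(p_{X}^{(k+1)}, \tilde{q}_{X, Y}^{(k+1)}, r_{X\mid Y}^{(k)})\}_{k\geq 0}$ is non-decreasing and bounded above by $C_{\alpha}^{\text{LP}}$, so the limit exists and is at most $C_{\alpha}^{\text{LP}}$. Second, apply Lemma \ref{lem:LP_capacity_upper_bound} at every index $k$ and sum from $k=0$ to $k=N$; the right-hand side telescopes to
\begin{align*}
\sum_{k=0}^{N} e_{k}
&\leq \frac{\alpha}{\alpha-1}\bigl(D(\tilde{q}_{X, Y}^{\text{opt}}\,\|\,\tilde{q}_{X, Y}^{(0)}) - D(\tilde{q}_{X, Y}^{\text{opt}}\,\|\,\tilde{q}_{X, Y}^{(N+1)})\bigr) \\
&\quad + D(\tilde{q}_{X}^{\text{opt}}\,\|\,\tilde{q}_{X}^{(0)}) - D(\tilde{q}_{X}^{\text{opt}}\,\|\,\tilde{q}_{X}^{(N+1)}) \\
&\leq \frac{\alpha}{\alpha-1} D(\tilde{q}_{X, Y}^{\text{opt}}\,\|\,\tilde{q}_{X, Y}^{(0)}) + D(\tilde{q}_{X}^{\text{opt}}\,\|\,\tilde{q}_{X}^{(0)}),
\end{align*}
where the last inequality uses the non-negativity of the KL divergence. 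Third, combine non-negativity of $e_{k}$ with this $N$-independent bound: the series $\sum_{k\geq 0} e_{k}$ converges, and a convergent series of non-negative terms must have vanishing summand, so $e_{k}\to 0$, i.e.\ $\tilde{F}_{\alpha}^{\text{LP}}(p_{X}^{(k+1)}, \tilde{q}_{X, Y}^{(k+1)}, r_{X\mid Y}^{(k)})\to C_{\alpha}^{\text{LP}}$.

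The only delicate point, and thus the main obstacle, is ensuring that the right-hand side of the telescoping bound is actually finite, which requires $D(\tilde{q}_{X, Y}^{\text{opt}}\,\|\,\tilde{q}_{X, Y}^{(0)})$ and $D(\tilde{q}_{X}^{\text{opt}}\,\|\,\tilde{q}_{X}^{(0)})$ to be finite. I would handle this by imposing (as is standard for Arimoto-type algorithms) that the initializations $\tilde{q}_{X, Y}^{(0)}$ and $\tilde{q}_{X}^{(0)}$ have full support on $\mathcal{X}\times \mathcal{Y}$ and $\mathcal{X}$, respectively; under that mild initialization assumption both divergences are finite and the argument above goes through verbatim. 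Everything else is a bookkeeping consequence of Lemmas \ref{lem:LP_capacity_monotonically_incresing} and \ref{lem:LP_capacity_upper_bound}, so no substantial new calculation is needed.
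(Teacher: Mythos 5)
Your proposal is correct and follows essentially the same route as the paper: monotonicity from Lemma \ref{lem:LP_capacity_monotonically_incresing}, the per-iteration bound of Lemma \ref{lem:LP_capacity_upper_bound} summed and telescoped to an $N$-independent constant, and then convergence of the non-negative series forcing the error terms to vanish. Your added remark about requiring full-support initializations so that $D(\tilde{q}_{X, Y}^{\text{opt}} \| \tilde{q}_{X, Y}^{(0)})$ and $D(\tilde{q}_{X}^{\text{opt}} \| \tilde{q}_{X}^{(0)})$ are finite is a sensible technical point that the paper leaves implicit.
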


\subsubsection{AO algorithm for computing Augustin--Csisz{\' a}r MI}
Augustin \cite[Lemma 34.2]{augusting_phd_thesis}, Karakos \textit{et al.} \cite[Thm 1]{4595361}, and Nakibo\v{g}ulu \cite[Lemma 13 (c)]{Nakiboglu:2019aa} proved that the AO algorithm based on \eqref{eq:vc_AC_MI_02_03} and \eqref{eq:def_vc_AC_MI_02} converges to the Augustin--Csisz{\' a}r MI for $\alpha \in (0, 1)$.
Similar to the proof of convergence of the Lapidoth--Pfister capacity, 
we show that the AO algorithm based on \eqref{eq:vc_AC_MI_02_03} and \eqref{eq:def_vc_AC_MI_03} globally converges when $\alpha \in (1, \infty)$.
Hence, it suffices to prove the following lemmas:
\begin{lemma} \label{lem:AC_monotonically_incresing}
Let $\alpha \in (1, \infty)$ and $\{\tilde{q}_{Y\mid X}^{(k)}\}_{k=0}^{\infty}$ and $\{r_{X\mid Y}^{(k)}\}_{k=0}^{\infty}$ be sequences of distributions at the $k$th iteration of the AO algorithm. Then, the following holds:
\begin{align} 
&\tilde{F}_{\alpha}^{\text{C}}(\tilde{q}_{Y\mid X}^{(0)}, r_{X\mid Y}^{(0)}) \leq \tilde{F}_{\alpha}^{\text{C}}(\tilde{q}_{Y\mid X}^{(1)}, r_{X\mid Y}^{(0)}) \leq \cdots \notag \\ 
&\leq \tilde{F}_{\alpha}^{\text{C}}(\tilde{q}_{Y\mid X}^{(k)}, r_{X\mid Y}^{(k)}) \leq \tilde{F}_{\alpha}^{\text{C}}(\tilde{q}_{Y\mid X}^{(k+1)}, r_{X\mid Y}^{(k)}) \leq \cdots \leq I_{\alpha}^{\text{C}}(X; Y). 
\end{align}
\end{lemma}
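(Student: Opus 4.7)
The plan is to establish the chain of inequalities by two standard ingredients: the one-step monotonicity inherent to alternating maximization, and the variational characterization of $I_{\alpha}^{\text{C}}(X;Y)$ in Theorem \ref{thm:variational_characterization_Csiszar_MI} which supplies the upper bound.

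First, I would unpack the AO recursion for $\alpha \in (1,\infty)$. At iteration $k$, given $(\tilde{q}_{Y\mid X}^{(k)}, r_{X\mid Y}^{(k)})$, one updates $\tilde{q}_{Y\mid X}^{(k+1)}$ via \eqref{eq:C_update_q} with $r_{X\mid Y} = r_{X\mid Y}^{(k)}$, and then $r_{X\mid Y}^{(k+1)}$ via \eqref{eq:C_update_r} with $\tilde{q}_{Y\mid X} = \tilde{q}_{Y\mid X}^{(k+1)}$. By Proposition \ref{prop:update_formulae_Csiszar_MI}, each update is a coordinatewise maximizer of the functional $\tilde{F}_{\alpha}^{\text{C}}$. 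This immediately yields
\begin{align}
\tilde{F}_{\alpha}^{\text{C}}(\tilde{q}_{Y\mid X}^{(k)}, r_{X\mid Y}^{(k)})
&\leq \tilde{F}_{\alpha}^{\text{C}}(\tilde{q}_{Y\mid X}^{(k+1)}, r_{X\mid Y}^{(k)}) \notag \\
&\leq \tilde{F}_{\alpha}^{\text{C}}(\tilde{q}_{Y\mid X}^{(k+1)}, r_{X\mid Y}^{(k+1)}), \notag
\end{align}
which is the one-step monotonicity; iterating this over $k=0,1,2,\dots$ gives the full increasing chain stated in the lemma.

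Second, for the upper bound I would invoke Theorem \ref{thm:variational_characterization_Csiszar_MI}, which asserts that for $\alpha\in(1,\infty)$,
\begin{align}
I_{\alpha}^{\text{C}}(X;Y) \;=\; \max_{\tilde{q}_{Y\mid X}}\max_{r_{X\mid Y}} \tilde{F}_{\alpha}^{\text{C}}(\tilde{q}_{Y\mid X}, r_{X\mid Y}). \notag
\end{align}
Consequently, every term in the monotone sequence is at most $I_{\alpha}^{\text{C}}(X;Y)$, closing the chain. No delicate step is involved here beyond quoting the already-proved variational identity.

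Overall, there is no serious obstacle: the argument is the canonical AO monotonicity proof, and both required ingredients (maximizers in each block and the max-max variational identity) are available verbatim from the earlier Proposition \ref{prop:update_formulae_Csiszar_MI} and Theorem \ref{thm:variational_characterization_Csiszar_MI}. The only mildly non-routine point is making sure the sign conventions are correct: since $\alpha>1$ flips the direction of the optimization (one maximizes rather than minimizes the surrogate $\tilde{F}_{\alpha}^{\text{C}}$), one should double-check that the updates in \eqref{eq:C_update_r} and \eqref{eq:C_update_q} are indeed maximizers of $\tilde{F}_{\alpha}^{\text{C}}$, which Proposition \ref{prop:update_formulae_Csiszar_MI} confirms. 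With that verified, the proof is a two-line application of alternating maximization followed by the variational upper bound.
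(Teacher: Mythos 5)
Your proposal is correct and matches the paper's own argument, which likewise derives the lemma directly from the coordinatewise maximizers in Proposition \ref{prop:update_formulae_Csiszar_MI} and the max--max characterization $I_{\alpha}^{\text{C}}(X;Y)=\max_{\tilde{q}_{Y\mid X}}\max_{r_{X\mid Y}}\tilde{F}_{\alpha}^{\text{C}}(\tilde{q}_{Y\mid X}, r_{X\mid Y})$ from Theorem \ref{thm:variational_characterization_Csiszar_MI}; your write-up merely makes explicit the alternating-maximization chain that the paper leaves implicit.
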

\begin{proof}
This follows directly from Theorem \ref{thm:variational_characterization_Csiszar_MI} and Proposition \ref{prop:update_formulae_Csiszar_MI}. 
\end{proof}

\begin{lemma}\label{lem:AC_upper_bound} 
Let $\alpha\in (1, \infty)$. Then, the approximation error has the following upper bound:
\begin{align}
&I_{\alpha}^{\text{C}}(X; Y)- \tilde{F}_{\alpha}^{\text{C}}(\tilde{q}_{Y\mid X}^{(k+1)}, r_{X\mid Y}^{(k)}) \notag \\ 
&\leq \frac{\alpha}{\alpha-1} \sum_{x, y} p_{X}(x)\tilde{q}_{Y\mid X}^{\text{opt}}(y\mid x)\log \frac{\tilde{q}_{Y\mid X}^{(k+1)}(y\mid x)}{\tilde{q}_{Y\mid X}^{(k)}(y\mid x)}, \label{eq:upper_bound}
\end{align}
where $\tilde{q}_{Y\mid X}^{\text{opt}} = \{\tilde{q}_{Y\mid X}^{\text{opt}}(\cdot | x)\}_{x\in \mathcal{X}}$ is the optimal family of distributions that achieves the maximum in \eqref{eq:vc_AC_MI_02_03} 
\end{lemma}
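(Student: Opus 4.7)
The plan is to mimic the strategy used for Lemma \ref{lem:LP_capacity_upper_bound}, exploiting the alternate characterization \eqref{eq:vc_AC_MI_01} together with the update formulae in Proposition \ref{prop:update_formulae_Csiszar_MI} to reduce the approximation error to a manageable algebraic expression.

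First, I will substitute the update \eqref{eq:C_update_q} into \eqref{eq:def_vc_AC_MI_03} to evaluate the iterate in closed form. Setting $Z_{x}^{(k)} := \sum_{y} p_{Y\mid X}(y \mid x) r_{X\mid Y}^{(k)}(x \mid y)^{1-1/\alpha}$, the identity $\log [\tilde{q}_{Y\mid X}^{(k+1)}(y \mid x) / p_{Y\mid X}(y \mid x)] = (1 - 1/\alpha)\log r_{X\mid Y}^{(k)}(x \mid y) - \log Z_{x}^{(k)}$ makes the divergence term collapse, leaving $\tilde{F}_{\alpha}^{\text{C}}(\tilde{q}_{Y\mid X}^{(k+1)}, r_{X\mid Y}^{(k)}) = H(p_{X}) + \frac{\alpha}{\alpha - 1}\sum_{x} p_{X}(x)\log Z_{x}^{(k)}$. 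By characterization \eqref{eq:vc_AC_MI_01}, $I_{\alpha}^{\text{C}}(X; Y)$ has the same form with $Z_{x}^{\text{opt}}$ built from the maximizer $r_{X\mid Y}^{\text{opt}}$, so the approximation error reduces to $\frac{\alpha}{\alpha - 1}\sum_{x} p_{X}(x) [\log Z_{x}^{\text{opt}} - \log Z_{x}^{(k)}]$.

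Second, I will rewrite each $\log Z_{x}$ as an expectation. The update \eqref{eq:C_update_q} implies the pointwise identity $\log Z_{x}^{(k)} = \log p_{Y\mid X}(y\mid x) + (1 - 1/\alpha)\log r_{X\mid Y}^{(k)}(x \mid y) - \log \tilde{q}_{Y\mid X}^{(k+1)}(y \mid x)$ valid for every $y$, with an analogous formula for $Z_{x}^{\text{opt}}$ using the optimal pair. Averaging both against $\tilde{q}_{Y\mid X}^{\text{opt}}(\cdot \mid x)$ and subtracting yields a formula for $\log Z_{x}^{\text{opt}} - \log Z_{x}^{(k)}$ involving $\log r^{\text{opt}} / r^{(k)}$ and $\log \tilde{q}^{(k+1)} / \tilde{q}^{\text{opt}}$. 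Invoking \eqref{eq:C_update_r} then rewrites $\log [r^{\text{opt}}(x\mid y) / r^{(k)}(x\mid y)]$ as $\log [\tilde{q}^{\text{opt}}(y \mid x) / \tilde{q}^{(k)}(y \mid x)] + \log [s^{(k)}(y) / s^{\text{opt}}(y)]$, where $s(y) := \sum_{x} p_{X}(x)\tilde{q}(y \mid x)$.

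Finally, I take the $p_{X}$-weighted sum over $x$. The $\log s^{(k)} / s^{\text{opt}}$ contribution collapses to $-(1 - 1/\alpha) D(s^{\text{opt}} \| s^{(k)}) \leq 0$ and may be discarded. A short rearrangement recognizes the remainder as $\sum_{x, y} p_{X}(x) \tilde{q}^{\text{opt}}(y \mid x) \log [\tilde{q}^{(k+1)}(y \mid x) / \tilde{q}^{(k)}(y \mid x)] - (1/\alpha)\sum_{x} p_{X}(x) D(\tilde{q}^{\text{opt}}(\cdot \mid x) \| \tilde{q}^{(k)}(\cdot \mid x))$; dropping this non-positive last term and multiplying by $\alpha/(\alpha - 1) > 0$ delivers \eqref{eq:upper_bound}. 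The hard part is the bookkeeping: recognizing which KL divergences become non-positive at each stage and discarding them at the right moment. The factor $1 - 1/\alpha \in (0, 1)$ for $\alpha > 1$ plays a dual role, first making the marginal-divergence contribution non-positive, and then enabling the split $(1-1/\alpha)(\cdot) = (\cdot) - (1/\alpha)(\cdot)$ that exposes the second non-positive divergence required to reach the target form of \eqref{eq:upper_bound}.
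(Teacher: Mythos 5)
Your proposal is correct and follows essentially the same route as the paper's proof: compute $\tilde{F}_{\alpha}^{\text{C}}(\tilde{q}_{Y\mid X}^{(k+1)}, r_{X\mid Y}^{(k)})$ in closed form via the update formulae, expand the normalizers pointwise, average against $p_{X}\tilde{q}_{Y\mid X}^{\text{opt}}$, and arrive at exactly the paper's three-term decomposition (main term, $\frac{1}{1-\alpha}$-weighted conditional divergence, and the $Y$-marginal divergence $-D(\tilde{\gamma}^{\text{opt}} \| \tilde{\gamma}^{(k)})$), discarding the same two nonpositive terms. The only cosmetic difference is that you represent $I_{\alpha}^{\text{C}}(X; Y)$ via \eqref{eq:vc_AC_MI_01} together with the fixed-point identities of Proposition \ref{prop:update_formulae_Csiszar_MI} at the joint optimizer, whereas the paper invokes the Kullback--Leibler characterization \eqref{eq:vc_AC_MI_KL} directly; these are algebraically equivalent.
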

\begin{proof}
See Appendix \ref{proof:AC_upper_bound}.
\end{proof}

The following convergence result can then be obtained: 
\begin{cor} \label{cor:AC_MI_convergence}
Let $\alpha\in (1, \infty)$. Then, 
\begin{align}
\lim_{k\to \infty} \tilde{F}_{\alpha}^{\text{C}}(\tilde{q}_{Y\mid X}^{(k+1)}, r_{X\mid Y}^{(k)})
&= I_{\alpha}^{\text{C}}(X; Y).
\end{align}
\end{cor}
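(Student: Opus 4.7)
The plan is to mirror exactly the argument that yielded Corollary \ref{cor:AC_capacity_convergenece} for the Lapidoth--Pfister capacity, using Lemmas \ref{lem:AC_monotonically_incresing} and \ref{lem:AC_upper_bound} as the two ingredients. By Lemma \ref{lem:AC_monotonically_incresing}, each term $I_{\alpha}^{\text{C}}(X;Y) - \tilde{F}_{\alpha}^{\text{C}}(\tilde{q}_{Y\mid X}^{(k+1)}, r_{X\mid Y}^{(k)})$ is nonnegative, so the partial sums form a monotonically nondecreasing sequence. Thus it suffices to exhibit a bound on these partial sums that is independent of the truncation length $N$.

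To produce such a bound, I would sum the inequality \eqref{eq:upper_bound} from $k = 0$ up to $k = N$. On the right-hand side the $\log$-ratios telescope: the contribution
\[
\frac{\alpha}{\alpha-1} \sum_{x,y} p_{X}(x)\tilde{q}_{Y\mid X}^{\text{opt}}(y\mid x) \sum_{k=0}^{N} \log \frac{\tilde{q}_{Y\mid X}^{(k+1)}(y\mid x)}{\tilde{q}_{Y\mid X}^{(k)}(y\mid x)}
\]
collapses to $\frac{\alpha}{\alpha-1}\sum_{x,y} p_{X}(x)\tilde{q}_{Y\mid X}^{\text{opt}}(y\mid x) \log \frac{\tilde{q}_{Y\mid X}^{(N+1)}(y\mid x)}{\tilde{q}_{Y\mid X}^{(0)}(y\mid x)}$, which can be rewritten as
\[
\frac{\alpha}{\alpha-1}\bigl( D(p_{X}\tilde{q}_{Y\mid X}^{\text{opt}} \| p_{X}\tilde{q}_{Y\mid X}^{(0)}) - D(p_{X}\tilde{q}_{Y\mid X}^{\text{opt}} \| p_{X}\tilde{q}_{Y\mid X}^{(N+1)}) \bigr).
\]
Since $\alpha > 1$ so the prefactor $\frac{\alpha}{\alpha-1}$ is positive, and the second divergence is nonnegative, this is upper bounded by $\frac{\alpha}{\alpha-1} D(p_{X}\tilde{q}_{Y\mid X}^{\text{opt}} \| p_{X}\tilde{q}_{Y\mid X}^{(0)})$, a finite constant independent of $N$ provided the initialization $\tilde{q}_{Y\mid X}^{(0)}$ gives absolute continuity.

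Combining this $N$-independent bound with the nonnegativity of each term, the infinite series $\sum_{k=0}^{\infty}\bigl(I_{\alpha}^{\text{C}}(X;Y)-\tilde{F}_{\alpha}^{\text{C}}(\tilde{q}_{Y\mid X}^{(k+1)}, r_{X\mid Y}^{(k)})\bigr)$ converges, which forces its general term to vanish, yielding the claim. The main subtlety I anticipate is ensuring that the assumption $p_{X}\tilde{q}_{Y\mid X}^{(0)} \gg p_{X}\tilde{q}_{Y\mid X}^{\text{opt}}$ holds so that $D(p_{X}\tilde{q}_{Y\mid X}^{\text{opt}} \| p_{X}\tilde{q}_{Y\mid X}^{(0)})$ is finite; the standard remedy (also implicit in the Lapidoth--Pfister case) is to initialize $\tilde{q}_{Y\mid X}^{(0)}$ with the uniform distribution on $\mathcal{Y}$ given each $x$, which is supported on all of $\mathcal{X}\times\mathcal{Y}$ and hence dominates every competitor on the finite alphabet. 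With this initialization the argument is complete.
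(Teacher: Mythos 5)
Your proposal is correct and follows essentially the same route as the paper: the paper proves Lemmas \ref{lem:AC_monotonically_incresing} and \ref{lem:AC_upper_bound} and then invokes exactly the telescoping-series argument it spelled out for the Lapidoth--Pfister capacity (sum \eqref{eq:upper_bound} over $k$, rewrite the telescoped sum as $\frac{\alpha}{\alpha-1}\bigl(D(p_{X}\tilde{q}_{Y\mid X}^{\text{opt}} \| p_{X}\tilde{q}_{Y\mid X}^{(0)}) - D(p_{X}\tilde{q}_{Y\mid X}^{\text{opt}} \| p_{X}\tilde{q}_{Y\mid X}^{(N+1)})\bigr)$, bound it independently of $N$, and conclude the nonnegative terms vanish). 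Your explicit attention to the finiteness of $D(p_{X}\tilde{q}_{Y\mid X}^{\text{opt}} \| p_{X}\tilde{q}_{Y\mid X}^{(0)})$ via a full-support initialization is a small point the paper leaves implicit, but it does not change the argument.
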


\subsubsection{AO algorithms for computing Lapidoth--Pfister MI}

The global convergence property of the AO algorithm based on Corollary \ref{cor:update_formulae_LP_MI} follows immediately from the sufficient condition for the global convergence of general AO algorithms \cite[Thm 10.5]{10.5555/1199866}.
\begin{prop} \label{prop:LP_MI_convergence}
Let $\alpha\in [1/2, 1) \cup (1, \infty)$ and $\{q_{X}^{(k)}\}_{k=0}^{\infty}$ and $\{q_{Y}^{(k)})\}_{k=0}^{\infty}$ be sequences of distributions at the $k$th iteration of the AO algorithm. Then, 
\begin{align}
\lim_{k\to \infty} D_{\alpha}(p_{X}p_{Y\mid X} || q_{X}^{(k)}q_{Y}^{(k)}) &= I_{\alpha}^{\text{LP}}(X; Y). \label{eq:convergence_LP}
\end{align}
\end{prop}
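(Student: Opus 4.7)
The plan is to directly invoke the Bezdek--Hathaway sufficient condition for AO convergence \cite[Thm 10.5]{10.5555/1199866}, which is the reference cited in the statement. That theorem guarantees convergence of AO iterates (and of the objective value) whenever the objective is continuous on a compact domain and each block-minimization subproblem admits a unique solution for every fixed value of the other block. The proof thus reduces to checking these hypotheses for the objective $F(q_X, q_Y) := D_\alpha(p_X p_{Y\mid X}\|q_X q_Y)$.

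\textbf{Step 1 (setup).} I would identify the optimization as $\min_{q_X}\min_{q_Y} F(q_X, q_Y) = I_\alpha^{\text{LP}}(X;Y)$ by \eqref{eq:def_LP_MI}, over the compact product of probability simplexes $\mathcal{P}(\mathcal{X})\times \mathcal{P}(\mathcal{Y})$. The AO iteration is precisely the alternation of the two closed-form updates in Corollary \ref{cor:update_formulae_LP_MI}.

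\textbf{Step 2 (continuity and monotonicity).} Inspecting the update rules in Corollary \ref{cor:update_formulae_LP_MI} shows that the supports of $q_X^{(k)}, q_Y^{(k)}$ remain contained in the supports of the marginals of $p_{X,Y}$, so $F$ stays finite and continuous along the trajectory. Corollary \ref{cor:update_formulae_LP_MI} also yields the monotone improvement $F(q_X^{(k+1)}, q_Y^{(k+1)}) \leq F(q_X^{(k)}, q_Y^{(k)})$ at every iteration.

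\textbf{Step 3 (uniqueness of each block minimizer).} This is where the hypothesis $\alpha\in [1/2, 1)\cup (1,\infty)$ is consumed. I would verify strict convexity of $q_X \mapsto F(q_X, q_Y)$ on $\mathcal{P}(\mathcal{X})$ for every fixed $q_Y$, and symmetrically for $q_Y\mapsto F(q_X, q_Y)$, so that the closed-form optimizer in Corollary \ref{cor:update_formulae_LP_MI} is automatically the unique block minimum. Strict convexity follows from a Hessian computation for the map $q_X \mapsto \frac{1}{\alpha-1}\log \sum_{x} q_X(x)^{1-\alpha}\bigl(\sum_{y} p_{X,Y}(x,y)^{\alpha} q_Y(y)^{1-\alpha}\bigr)$, together with a Cauchy--Schwarz estimate; the boundary $\alpha = 1/2$ is precisely the threshold at which the log-of-sum structure of $D_\alpha$ retains strict convexity in a single marginal on the simplex.

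The main obstacle is the strict-convexity verification of Step 3, in particular pinning down why $\alpha \geq 1/2$ is the sharp threshold in each coordinate; convexity (non-strict) in each block is classical, but the strictness needed to invoke \cite[Thm 10.5]{10.5555/1199866} requires the Hessian estimate above. Once uniqueness of block minimizers is in hand, all hypotheses of the Bezdek--Hathaway theorem are satisfied, and the convergence $F(q_X^{(k)}, q_Y^{(k)})\to I_\alpha^{\text{LP}}(X;Y)$ asserted in \eqref{eq:convergence_LP} follows immediately.
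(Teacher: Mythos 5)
There is a genuine gap, and it sits exactly at your Step 3. The hypotheses you propose to verify (continuity on a compact domain plus \emph{uniqueness of each block minimizer}) are not sufficient to conclude convergence of the objective value to the \emph{global} minimum $I_{\alpha}^{\text{LP}}(X;Y)$: a Bezdek--Hathaway-type condition of that kind only guarantees that the AO iterates approach the set of coordinate-wise (partial) optima, i.e., points that are minimal in $q_X$ for fixed $q_Y$ and vice versa, and such points need not be global minimizers of a function that is merely convex in each block separately. Monotone improvement (your Step 2) cannot rule out the iterates stalling at such a bi-block-stationary point. The sufficient condition in the cited result \cite[Thm 10.5]{10.5555/1199866} that actually delivers convergence to the global optimum is \emph{joint} convexity of the objective in the pair $(q_X, q_Y)$, which is what makes every coordinate-wise minimum a global one.

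This also means your reading of the hypothesis $\alpha \in [1/2,1)\cup(1,\infty)$ is off. Convexity in each marginal separately, and the unique closed-form block minimizers of Corollary \ref{cor:update_formulae_LP_MI}, hold for \emph{every} $\alpha\in(0,1)\cup(1,\infty)$ --- no $\alpha\geq 1/2$ threshold appears at the single-block level, so if your Step 3 were the true content of the assumption, the proposition would hold on all of $(0,1)\cup(1,\infty)$, which is not claimed. The condition $\alpha\geq 1/2$ is consumed elsewhere: by \cite[Lemma 15]{e21080778}, the map $(q_X,q_Y)\mapsto D_{\alpha}(p_Xp_{Y\mid X}\,||\,q_Xq_Y)$ is \emph{jointly} convex precisely for $\alpha$ in this range, and joint convexity can fail for $\alpha<1/2$. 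The paper's proof is exactly this one-line combination: the sufficient condition of \cite[Thm 10.5]{10.5555/1199866} applied with the Lapidoth--Pfister joint-convexity lemma. To repair your argument, discard the strict-convexity/Hessian program of Step 3 and replace it with the joint-convexity verification (or citation); the rest of your setup in Steps 1--2 then matches the intended proof.
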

\begin{proof}
We can prove \eqref{eq:convergence_LP} by combining the sufficient condition from \cite[Thm 10.5]{10.5555/1199866} with the fact that $(q_{X}, q_{Y}) \mapsto D_{\alpha}(p_{X}p_{Y\mid X} || q_{X}q_{Y})$ is jointly convex for $[1/2, 1) \cup (1, \infty)$ \cite[Lemma 15]{e21080778}.
\end{proof}
\begin{remark}
Recently, Tsai \textit{et al.} \cite{tsai2024linearconvergencehilbertsprojective} provided a non-asymptotic analysis of the linear convergence rate of the AO algorithm.
\end{remark}

The global convergence property of the AO algorithm based on \eqref{eq:vc_LP_MI_02_03} for $\alpha \in (1, \infty)$ directly follows from Corollary \ref{cor:AC_capacity_convergenece} by replacing $p_{X}^{(k)}$ with $p_{X}$ for all $k\in \bn$.
For $\alpha \in (0, 1)$, it can be shown similarly to Corollaries \ref{cor:AC_capacity_convergenece} and \ref{cor:AC_MI_convergence}; 
thus, it suffices to show the following lemmas:
\begin{lemma} \label{lem:LP_monotonically_decresing}
Let $\alpha \in (0, 1)$ and $\{\tilde{q}_{X, Y}^{(k)}\}_{k=0}^{\infty}$ and $\{q_{Y}^{(k)}\}_{k=0}^{\infty}$ be sequences of distributions at the $k$th iteration of the AO algorithm. Then, the following holds:
\begin{align} 
&F_{\alpha}^{\text{LP}}(\tilde{q}_{X, Y}^{(0)}, q_{Y}^{(0)}) \geq F_{\alpha}^{\text{LP}}(\tilde{q}_{X, Y}^{(1)}, q_{Y}^{(0)}) \geq \cdots \notag \\ 
&\geq F_{\alpha}^{\text{LP}}(\tilde{q}_{X, Y}^{(k)}, q_{Y}^{(k)}) \geq F_{\alpha}^{\text{LP}}(\tilde{q}_{X, Y}^{(k+1)}, q_{Y}^{(k)}) \geq \cdots \geq I_{\alpha}^{\text{LP}}(X; Y). 
\end{align}
\end{lemma}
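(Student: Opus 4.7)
The plan is that this monotonicity lemma follows almost immediately from the two updating formulae in Proposition \ref{prop:update_formulae_LP_MI_vc} (parts 1 and 2) combined with the variational characterization in Theorem \ref{thm:variational_characterization_LP_MI}, so the proof is essentially a bookkeeping argument rather than a computation.

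First, I would fix $k$ and consider the two consecutive updates at iteration $k$. By construction of the AO algorithm, $q_Y^{(k)}$ is obtained from $\tilde{q}_{X,Y}^{(k)}$ via \eqref{eq:LP_update_q_Y} and $\tilde{q}_{X,Y}^{(k+1)}$ is obtained from $q_Y^{(k)}$ via \eqref{eq:LP_update_tilde_q_X_Y}. Part 1 of Proposition \ref{prop:update_formulae_LP_MI_vc} asserts that for fixed $\tilde{q}_{X,Y}$ the map $q_Y \mapsto F_\alpha^{\text{LP}}(\tilde{q}_{X,Y}, q_Y)$ is minimized at the marginal of $\tilde{q}_{X,Y}$, hence
\begin{align*}
F_\alpha^{\text{LP}}(\tilde{q}_{X,Y}^{(k)}, q_Y^{(k)}) \leq F_\alpha^{\text{LP}}(\tilde{q}_{X,Y}^{(k)}, q_Y^{(k-1)}).
\end{align*}
Similarly, part 2 gives
\begin{align*}
F_\alpha^{\text{LP}}(\tilde{q}_{X,Y}^{(k+1)}, q_Y^{(k)}) \leq F_\alpha^{\text{LP}}(\tilde{q}_{X,Y}^{(k)}, q_Y^{(k)}).
\end{align*}
Chaining these two inequalities across successive $k$ produces exactly the advertised chain of inequalities.

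Second, for the lower bound $F_\alpha^{\text{LP}}(\tilde{q}_{X,Y}^{(k+1)}, q_Y^{(k)}) \geq I_\alpha^{\text{LP}}(X;Y)$, I would invoke the min-min characterization \eqref{eq:vc_LP_MI_02_03} from Theorem \ref{thm:variational_characterization_LP_MI}: since $I_\alpha^{\text{LP}}(X;Y)$ is defined as the infimum of $F_\alpha^{\text{LP}}$ over all admissible pairs $(\tilde{q}_{X,Y}, q_Y)$ when $\alpha \in (0,1)$, every value attained along the iterates is bounded below by it.

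There is no real obstacle here; the only things to check are that the updates stay in the feasible domain (which is immediate because \eqref{eq:LP_update_q_Y} and \eqref{eq:LP_update_tilde_q_X_Y} produce probability distributions on $\mathcal{Y}$ and $\mathcal{X}\times \mathcal{Y}$ respectively) and that the objective $F_\alpha^{\text{LP}}$ is well-defined along the sequence (absorbed into the standard convention that supports are preserved by the updates, since the initial distribution can be chosen fully supported). The argument is entirely parallel to Lemma \ref{lem:AC_monotonically_incresing} for the Augustin--Csisz{\'a}r case, with the direction of monotonicity reversed because here we are minimizing rather than maximizing.
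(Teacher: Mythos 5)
Your proof is correct and takes essentially the same approach as the paper, whose one-line proof likewise obtains the chain of inequalities directly from the two updating formulae in Proposition \ref{prop:update_formulae_LP_MI_vc} and the lower bound from the min-min characterization \eqref{eq:vc_LP_MI_02_03} in Theorem \ref{thm:variational_characterization_LP_MI}. (The paper's proof actually cites Theorem \ref{thm:variational_characterization_Csiszar_MI} and Proposition \ref{prop:update_formulae_Csiszar_MI}, but this is evidently a copy-paste slip for the Lapidoth--Pfister counterparts that you correctly used.)
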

\begin{proof}
This follows directly from Theorem \ref{thm:variational_characterization_Csiszar_MI} and Proposition \ref{prop:update_formulae_Csiszar_MI}. 
\end{proof}

\begin{lemma}\label{lem:LP_upper_bound} 
Let $\alpha\in [1/2, 1)$. Then, the approximation error has the following upper bound:
\begin{align}
&F_{\alpha}^{\text{LP}}(\tilde{q}_{X, Y}^{(k+1)}, q_{Y}^{(k)}) - I_{\alpha}^{\text{LP}}(X; Y) \notag \\ 
&\qquad \leq \frac{\alpha}{1-\alpha} \sum_{x}\tilde{q}_{X}^{\text{opt}}(x)\log \frac{\tilde{q}_{X}^{(k+1)}(x)}{\tilde{q}_{X}^{(k)}(x)}, \label{eq:LP_upper_bound}
\end{align}
where $\tilde{q}_{X}^{\text{opt}}(x) := \sum_{y}\tilde{q}_{X, Y}^{\text{opt}}(x, y)$ and $\tilde{q}_{X, Y}^{\text{opt}}$ are the optimal joint distributions that achieve the minimum in \eqref{eq:vc_LP_MI_02_03}.
\end{lemma}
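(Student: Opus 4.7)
The plan is to mirror the proofs of Lemmas~\ref{lem:LP_capacity_upper_bound} and~\ref{lem:AC_upper_bound}: first reduce both the iterate value $F_\alpha^{\text{LP}}(\tilde q^{(k+1)}, q^{(k)})$ and the minimum $I_\alpha^{\text{LP}}$ to log-partition-type expressions via the closed-form updates, then extract the error bound through a Pythagorean-like identity, and finally close the estimate by invoking the joint convexity that restricts the statement to $\alpha\in[1/2,1)$.

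First, I would substitute the update $\tilde q_{X,Y}^{(k+1)} = \bar q_X^{(k)} \bar q_{Y\mid X}^{(k)}$ from Proposition~\ref{prop:update_formulae_LP_MI_vc} into~\eqref{eq:def_vc_LP_MI_02}. Because $\bar q_{Y\mid X}^{(k)}(y\mid x)\propto p_{Y\mid X}(y\mid x)^{\alpha} q^{(k)}(y)^{1-\alpha}$ and $\bar q_X^{(k)}(x)\propto p_X(x)M_x(q^{(k)})^{1/\alpha}$, where $M_x(q):=\sum_y p_{Y\mid X}(y\mid x)^{\alpha}q(y)^{1-\alpha}$, the Kullback--Leibler terms in $F_\alpha^{\text{LP}}$ telescope and yield the closed form
\[
F_\alpha^{\text{LP}}(\tilde q^{(k+1)}, q^{(k)}) = -\tfrac{\alpha}{1-\alpha}\log \tilde Z^{(k)}, \quad \tilde Z^{(k)}:=\sum_x p_X(x)M_x(q^{(k)})^{1/\alpha}.
\]
The same reduction applied at the (fixed-point) optimum gives $I_\alpha^{\text{LP}}(X;Y)=-\tfrac{\alpha}{1-\alpha}\log \tilde Z^{\text{opt}}$ with $\tilde Z^{\text{opt}}:=\sum_x p_X(x)M_x(q^{\text{opt}})^{1/\alpha}$.

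Second, because both $\tilde q_X^{(k+1)}$ and $\tilde q_X^{\text{opt}}$ share the Gibbs-type form $p_X(x)M_x(\cdot)^{1/\alpha}/\tilde Z$, subtracting $\log \tilde q_X^{(k+1)}(x)$ from $\log \tilde q_X^{\text{opt}}(x)$ and averaging against $\tilde q_X^{\text{opt}}$ produces the Pythagorean-like identity
\[
\log\tfrac{\tilde Z^{\text{opt}}}{\tilde Z^{(k)}}=\tfrac{1}{\alpha}\sum_x \tilde q_X^{\text{opt}}(x)\log\tfrac{M_x(q^{\text{opt}})}{M_x(q^{(k)})}-D(\tilde q_X^{\text{opt}}||\tilde q_X^{(k+1)}).
\]
Substituting this into $F_\alpha^{\text{LP}}(\tilde q^{(k+1)}, q^{(k)})-I_\alpha^{\text{LP}}=\tfrac{\alpha}{1-\alpha}\log(\tilde Z^{\text{opt}}/\tilde Z^{(k)})$ and using the telescoping identity $\sum_x\tilde q_X^{\text{opt}}(x)\log(\tilde q_X^{(k+1)}/\tilde q_X^{(k)}) = D(\tilde q_X^{\text{opt}}||\tilde q_X^{(k)})-D(\tilde q_X^{\text{opt}}||\tilde q_X^{(k+1)})$ reduces~\eqref{eq:LP_upper_bound} to the single inequality
\[
\sum_x \tilde q_X^{\text{opt}}(x)\log\tfrac{M_x(q^{\text{opt}})}{M_x(q^{(k)})} \leq \alpha\,D(\tilde q_X^{\text{opt}}||\tilde q_X^{(k)}).
\]

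The hard part will be justifying this final inequality, and it is exactly here that the restriction $\alpha\in[1/2,1)$ enters. Using the identity $\log M_x(q) = -(1-\alpha)D_\alpha(p_{Y\mid X}(\cdot\mid x)||q)$, the left-hand side can be rewritten as a $\tilde q_X^{\text{opt}}$-weighted difference of conditional R\'enyi divergences. My plan is to control it via the joint convexity of $(q_X,q_Y)\mapsto D_\alpha(p_{X,Y}||q_Xq_Y)$ on $[1/2,1)\cup(1,\infty)$ established in~\cite[Lemma 15]{e21080778}---the very structural property that also underlies Proposition~\ref{prop:LP_MI_convergence}. Applying this convexity on the segment from $(\tilde q_X^{(k)},q^{(k)})$ to $(\tilde q_X^{\text{opt}},q^{\text{opt}})$ and exploiting the first-order stationarity of the latter as a fixed point of the alternating update should cause the $q_Y$-direction cross-terms to vanish, leaving precisely the bound $\alpha\,D(\tilde q_X^{\text{opt}}||\tilde q_X^{(k)})$ on the right.
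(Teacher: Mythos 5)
Your first two steps are correct, and they match the paper's own computation in Appendix~\ref{proof:LP_upper_bound} in substance: the paper likewise evaluates $F_{\alpha}^{\text{LP}}(\tilde{q}_{X,Y}^{(k+1)}, q_{Y}^{(k)})$ in the closed form $\frac{\alpha}{\alpha-1}\log\sum_{x}p_{X}(x)M_{x}(q_{Y}^{(k)})^{1/\alpha}$ and exploits the Gibbs form of the updates \eqref{eq:LP_update_vc_bar_q_X}--\eqref{eq:LP_update_vc_bar_q_Y_X}, only it does so by evaluating the partition function pointwise (eqs.\ \eqref{eq:denominator_LP}--\eqref{eq:finite_difference_LP}) rather than through your Pythagorean identity; the two bookkeepings are equivalent, and your reduction of \eqref{eq:LP_upper_bound} to the single inequality $\sum_{x}\tilde{q}_{X}^{\text{opt}}(x)\log\bigl[M_{x}(q^{\text{opt}})/M_{x}(q^{(k)})\bigr]\leq \alpha D(\tilde{q}_{X}^{\text{opt}} || \tilde{q}_{X}^{(k)})$ is exactly the residual content of the paper's steps $(f)$--$(g)$.

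The genuine gap is your final step, and the proposed route does not close it. Joint convexity of $(q_{X},q_{Y})\mapsto D_{\alpha}(p_{X,Y} || q_{X}q_{Y})$ combined with first-order stationarity of the minimizer only yields inequalities of the form $G(q^{(k)})\geq G(q^{\text{opt}})$, i.e., \emph{lower} bounds on the objective at non-optimal points, which you already know; it provides no mechanism for an \emph{upper} bound of the weighted log-ratio by the specific quantity $\alpha D(\tilde{q}_{X}^{\text{opt}} || \tilde{q}_{X}^{(k)})$ --- neither the factor $\alpha$ nor the KL term can emerge from it, and ``the cross-terms should vanish'' is asserted rather than proved. Your guess that $\alpha\in[1/2,1)$ enters via \cite[Lemma 15]{e21080778} is also not how the paper uses it (that lemma underlies Proposition~\ref{prop:LP_MI_convergence}, a different algorithm); here the threshold enters through elementary sign analysis, namely $\frac{1-2\alpha}{1-\alpha}\leq 0$, equivalently $1-\alpha\leq\alpha$. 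The actual finish is short: by the fixed-point identity $\tilde{q}_{Y\mid X}^{\text{opt}}(y\mid x)=p_{Y\mid X}(y\mid x)^{\alpha}\tilde{q}_{Y}^{\text{opt}}(y)^{1-\alpha}/M_{x}(q^{\text{opt}})$ and its analogue at iteration $k$ (cf.\ \eqref{eq:LP_c}), one has, pointwise in $y$,
\[
\log\frac{M_{x}(q^{\text{opt}})}{M_{x}(q^{(k)})} = (1-\alpha)\log\frac{\tilde{q}_{Y}^{\text{opt}}(y)}{\tilde{q}_{Y}^{(k)}(y)} + \log\frac{\tilde{q}_{Y\mid X}^{(k)}(y\mid x)}{\tilde{q}_{Y\mid X}^{\text{opt}}(y\mid x)},
\]
so averaging under $\tilde{q}_{X,Y}^{\text{opt}}$ gives $\sum_{x}\tilde{q}_{X}^{\text{opt}}\log\bigl[M_{x}(q^{\text{opt}})/M_{x}(q^{(k)})\bigr] = (1-\alpha)D(\tilde{q}_{Y}^{\text{opt}} || \tilde{q}_{Y}^{(k)}) - \vE_{X}^{\tilde{q}_{X}^{\text{opt}}}\left[D(\tilde{q}_{Y\mid X}^{\text{opt}}(\cdot\mid X) || \tilde{q}_{Y\mid X}^{(k)}(\cdot\mid X))\right]$. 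Combining the chain rule with the data-processing inequality, $D(\tilde{q}_{Y}^{\text{opt}} || \tilde{q}_{Y}^{(k)}) \leq D(\tilde{q}_{X}^{\text{opt}} || \tilde{q}_{X}^{(k)}) + \vE_{X}^{\tilde{q}_{X}^{\text{opt}}}\left[D(\tilde{q}_{Y\mid X}^{\text{opt}} || \tilde{q}_{Y\mid X}^{(k)})\right]$, the conditional-KL slack absorbs the excess and leaves $(1-\alpha)D(\tilde{q}_{X}^{\text{opt}} || \tilde{q}_{X}^{(k)}) - \alpha\,\vE\left[D(\tilde{q}_{Y\mid X}^{\text{opt}} || \tilde{q}_{Y\mid X}^{(k)})\right] \leq \alpha D(\tilde{q}_{X}^{\text{opt}} || \tilde{q}_{X}^{(k)})$, precisely because $1-\alpha\leq\alpha$ on $[1/2,1)$. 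This KL decomposition plus nonnegativity and monotonicity of the Kullback--Leibler divergence --- the paper's $(f)$ and $(g)$ --- is the ingredient your proposal is missing.
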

\begin{proof}
See Appendix \ref{proof:LP_upper_bound}.
\end{proof}

\begin{cor} \label{cor:LP_MI_convergence}
Let $\alpha\in [1/2, 1) \cup (1, \infty)$. Then, 
\begin{align}
\lim_{k\to \infty} F_{\alpha}^{\text{LP}}(\tilde{q}_{X, Y}^{(k+1)}, q_{Y}^{(k)}) &= I_{\alpha}^{\text{LP}}(X; Y), \quad \alpha \in [1/2, 1), \\
\lim_{k\to \infty} \tilde{F}_{\alpha}^{\text{LP}}(\tilde{q}_{X, Y}^{(k+1)}, r_{X\mid Y}^{(k)})&= I_{\alpha}^{\text{LP}}(X; Y),  \quad \alpha \in (1, \infty).
\end{align}
\end{cor}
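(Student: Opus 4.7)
The plan is to treat the two cases $\alpha \in (1, \infty)$ and $\alpha \in [1/2, 1)$ separately, and in each case mimic the telescoping-series argument already used for Corollaries \ref{cor:AC_capacity_convergenece} and \ref{cor:AC_MI_convergence}.

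For $\alpha \in (1, \infty)$, I would argue by direct reduction. The AO procedure for $I_{\alpha}^{\text{LP}}(X; Y)$ based on \eqref{eq:vc_LP_MI_02_03} and \eqref{eq:def_vc_LP_MI_03} alternates only between $\tilde{q}_{X,Y}$ and $r_{X\mid Y}$, and coincides with the AO procedure for $C_{\alpha}^{\text{LP}}$ from Proposition \ref{prop:update_formulae_LP_03} with the $p_X$-update \eqref{eq:update_p_X_LP} suppressed and $p_X^{(k)}$ held fixed at the given input distribution. Lemmas \ref{lem:LP_capacity_monotonically_incresing} and \ref{lem:LP_capacity_upper_bound} together with the telescoping step used for Corollary \ref{cor:AC_capacity_convergenece} only exploit the $\tilde{q}_{X, Y}$- and $r_{X\mid Y}$-updates, so they carry over verbatim when $p_X$ is frozen. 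This yields $\tilde{F}_{\alpha}^{\text{LP}}(\tilde{q}_{X, Y}^{(k+1)}, r_{X\mid Y}^{(k)}) \to I_{\alpha}^{\text{LP}}(X; Y)$.

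For $\alpha \in [1/2, 1)$, I would combine Lemmas \ref{lem:LP_monotonically_decresing} and \ref{lem:LP_upper_bound}. Summing \eqref{eq:LP_upper_bound} from $k = 0$ to $k = N$ and telescoping the log-ratios over the marginal iterates $\tilde{q}_X^{(k)}$ gives
\begin{align}
&\sum_{k=0}^{N} \left(F_{\alpha}^{\text{LP}}(\tilde{q}_{X, Y}^{(k+1)}, q_Y^{(k)}) - I_{\alpha}^{\text{LP}}(X; Y)\right) \notag \\
&\quad \leq \frac{\alpha}{1-\alpha} \sum_{x} \tilde{q}_X^{\text{opt}}(x) \log \frac{\tilde{q}_X^{(N+1)}(x)}{\tilde{q}_X^{(0)}(x)} \notag \\
&\quad = \frac{\alpha}{1-\alpha} \left( D(\tilde{q}_X^{\text{opt}} \,\|\, \tilde{q}_X^{(0)}) - D(\tilde{q}_X^{\text{opt}} \,\|\, \tilde{q}_X^{(N+1)}) \right) \notag \\
&\quad \leq \frac{\alpha}{1-\alpha} D(\tilde{q}_X^{\text{opt}} \,\|\, \tilde{q}_X^{(0)}),
\end{align}
which is finite and independent of $N$. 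Lemma \ref{lem:LP_monotonically_decresing} makes every summand nonnegative, so this infinite series of nonnegative terms converges, forcing the general term to vanish: $F_{\alpha}^{\text{LP}}(\tilde{q}_{X, Y}^{(k+1)}, q_Y^{(k)}) \to I_{\alpha}^{\text{LP}}(X; Y)$.

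The only obstacle I anticipate is ensuring $D(\tilde{q}_X^{\text{opt}} \,\|\, \tilde{q}_X^{(0)}) < \infty$, which just requires initializing with a strictly positive $\tilde{q}_X^{(0)}$ on $\mathcal{X}$; the updating formulae \eqref{eq:LP_update_vc_bar_q_X} and \eqref{eq:LP_update_vc_bar_q_Y_X} then keep all subsequent iterates strictly positive under the usual support assumptions on $p_X p_{Y\mid X}$, so the telescoping collapses cleanly and no extra verification is needed beyond what is already used in the proofs of Corollaries \ref{cor:AC_capacity_convergenece} and \ref{cor:AC_MI_convergence}.
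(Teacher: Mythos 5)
Your proposal is correct and follows essentially the same route as the paper: for $\alpha \in (1, \infty)$ the paper likewise reduces to Corollary \ref{cor:AC_capacity_convergenece} by replacing $p_{X}^{(k)}$ with the fixed $p_{X}$, and for $\alpha \in [1/2, 1)$ it combines Lemmas \ref{lem:LP_monotonically_decresing} and \ref{lem:LP_upper_bound} with exactly the telescoping-and-nonnegative-series argument you write out. The only point worth a remark is that ``carry over verbatim'' is slightly optimistic—step $(f)$ in the proof of Lemma \ref{lem:LP_capacity_upper_bound} invokes $p_{X}^{(k)} = \tilde{q}_{X}^{(k)}$, so with $p_{X}$ frozen one should redo that substitution (the fixed-$p_{X}$ terms cancel and an additional nonpositive term $-\frac{\alpha}{\alpha-1}D(\tilde{q}_{X}^{\text{opt}} \,\|\, \tilde{q}_{X}^{(k)})$ appears, leaving the same bound)—but this is a cosmetic check that the paper itself also leaves implicit.
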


\section{Comparison of AO algorithms for $\alpha$-Capacity} \label{ssec:comparison}
It is known that the Sibson, Arimoto, and Augustin--Csisz{\' a}r capacity are equivalent: $C_{\alpha}^{\text{S}} = C_{\alpha}^{\text{A}} = C_{\alpha}^{\text{C}}$ for $\alpha\in (0, 1)\cup (1, \infty)$ \cite{arimoto1977},\cite{370121},\cite{e22050526}. 
Moreover, the Lapidoth--Pfister capacity $C_{\alpha}^{\text{LP}}$ is equivalent to these capacities for $\alpha \in (1, \infty)$ \cite[Thm 4]{e22050526}, \cite[p.4]{e21100969}. 
Thus, all AO algorithms obtained so far compute equivalent quantities.
This section compares their performance through numerical examples.
From this point onward, we refer to an AO algorithm based on the variational characterization using a functional $F_{\alpha}^{\text{X}}(\cdot)$ or $\tilde{F}_{\alpha}^{\text{X}}(\cdot)$ as Algorithm \text{X}. 
Moreover, we refer to the modified AO algorithm of \cite[Algorithm 1]{8889422} as Algorithm JO.

Kamatsuka \textit{et al.} \cite[Cor 1]{10619200} proved that the AO algorithms S1, S2, and the algorithms based on the variational characterizations of the Arimoto MI are all equivalent under certain conditions on initial distributions. 
Similarly to \cite[Lemma 7 and 8]{8889422}, it can be shown that the Algorithm S1 achieves a greater increase in the objective function per update than Algorithm S3.
Thus, we numerically compared the AO algorithms S1, JO, C, and LP for $\alpha \in (1, \infty)$. 

Define a channel 
\begin{align}
p_{Y\mid X} &:= 
\begin{bmatrix}
0.259 & 0.463 & 0.278 \\ 
0.328 & 0.172 & 0.500 \\ 
0.425 & 0.225 & 0.350
\end{bmatrix}, 
\end{align}
where $(i, j)$-element of the channel matrix\footnote{The elements of the channel matrix $p_{Y\mid X}$ are randomly generated from a uniform distribution. Additional numerical examples were computed using the AO algorithms for other randomly generated channel matrices that did not appear in this paper, wherein the behavior of the algorithms is generally the same.} 
corresponds to the conditional probability $p_{Y\mid X}(j | i), i\in \mathcal{X}=\{1, 2, 3\}, j\in \mathcal{Y}=\{1, 2, 3\}$.
Table \ref{tab:alpha_capacity} shows the number of iterations $N$ and approximate values of $\alpha$-capacity  of the channel $p_{Y\mid X}$, denoted as $F^{(N)}$,  for $\alpha\in \{1.03, 1.5, 2.0, 5.0\}$ computed by each algorithm with specific initial conditions.
Here, $N$ represents the number of iterations needed to satisfy the stopping conditions. The distributions $u_{X}$ and $u_{X,Y}$ are uniform distributions on $\mathcal{X}$ and $\mathcal{X}\times \mathcal{Y}$, respectively.
Figure \ref{fig:numerical_example} illustrates transitions of the approximate values of $\alpha$-capacity $F^{(k)}$ as $k$ increases, computed by each algorithm with specific initial conditions.
Table \ref{tab:alpha_capacity} and Figure \ref{fig:numerical_example} confirm that Algorithm S1 has the fastest convergence speed.

\begin{table*}[h]
  \caption{Approximate values of $\alpha$-capacity and the number of iterations $(F^{(N)}, N)$}
  \label{tab:alpha_capacity}
  \centering
  \resizebox{1.\textwidth}{!}{
  \begin{tabular}{@{} lllll@{}}
    \toprule
    \thead{Algorithm and initial distribution}        & \thead{$\alpha=1.03$} & \thead{$\alpha=1.5$} & \thead{$\alpha=2.0$} & \thead{$\alpha=5.0$} \\ 
    \midrule
    Algorithm S1 with $p_{X}^{(0)} = u_{X}$
    & $(0.054204678, \textbf{851})$ & $(0.07617995, \textbf{738})$ & $(0.097030615, \textbf{790})$ & $(0.183426237, \textbf{20})$ \\ 
	Algorithm JO with $q_{X, Y}^{(0)}= u_{X, Y}$ 
	& $(0.054201694, 19076)$ & $(0.07617965, 2744)$ & $(0.097030139,2258)$ & $(0.183426230,113)$ \\
	Algorithm JO with $q_{X, Y}^{(0)}= u_{X} p_{Y\mid X}$ 
	& $(0.054201694, 19075)$ & $(0.07617965, 2743)$ & $(0.097030139, 2257)$ & $(0.183426230, 112)$ \\ 
	Algorithm C with $p_{X}^{(0)}\tilde{q}_{Y\mid X}^{(0)}= u_{X, Y}$ 
	& $(0.054204678, 912)$ & $(0.07617991, 1059)$ & $(0.097030445, 1347)$ & $(0.183426230, 97)$ \\ 
	Algorithm C with $p_{X}^{(0)}\tilde{q}_{Y\mid X}^{(0)} = u_{X} p_{Y\mid X}$ 
	& $(0.054204678, 874)$ & $(0.07617991, 1055)$ & $(0.097030445, 1344)$ & $(0.183426231, 95)$ \\ 
	Algorithm LP with $\tilde{q}_{X, Y}^{(0)}= u_{X, Y}$ 
	& $(0.054204763, 19540)$ & $(0.07618003, 3512)$ & $(0.097030811, 3044)$ & $(0.183426237, 177)$ \\ 
	Algorithm LP with $\tilde{q}_{X, Y}^{(0)} = u_{X} p_{Y\mid X}$ 
	& $(0.054204763, 19539)$ & $(0.07618003, 3511)$ & $(0.097030811, 3043)$ & $(0.183426237, 176)$ \\ 
    \bottomrule
  \end{tabular}
  }
\end{table*}

\begin{figure*}[t]
\centering
\begin{tabular}{@{}cc@{}}
\includegraphics[width=8cm, clip]{./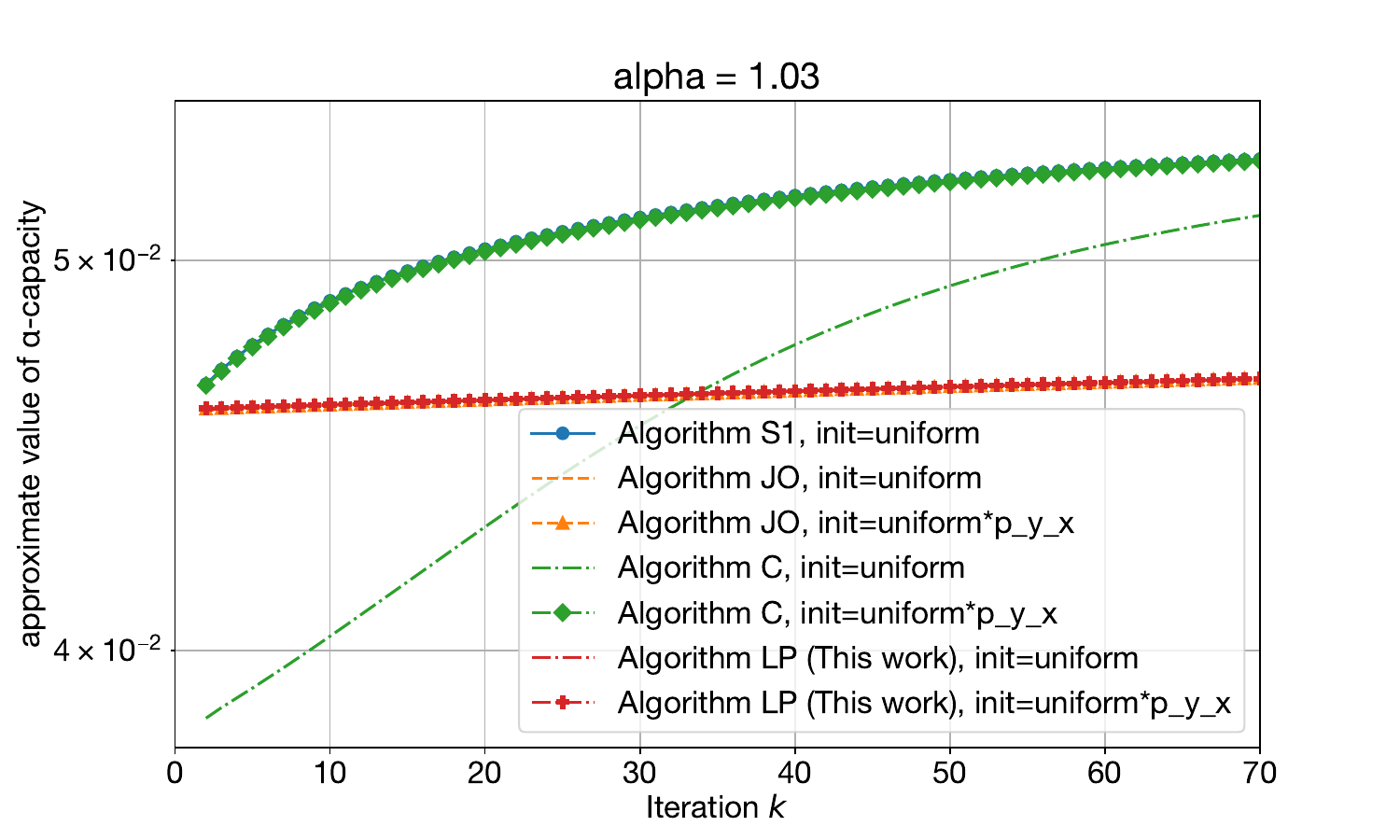} & \includegraphics[width=8cm, clip]{./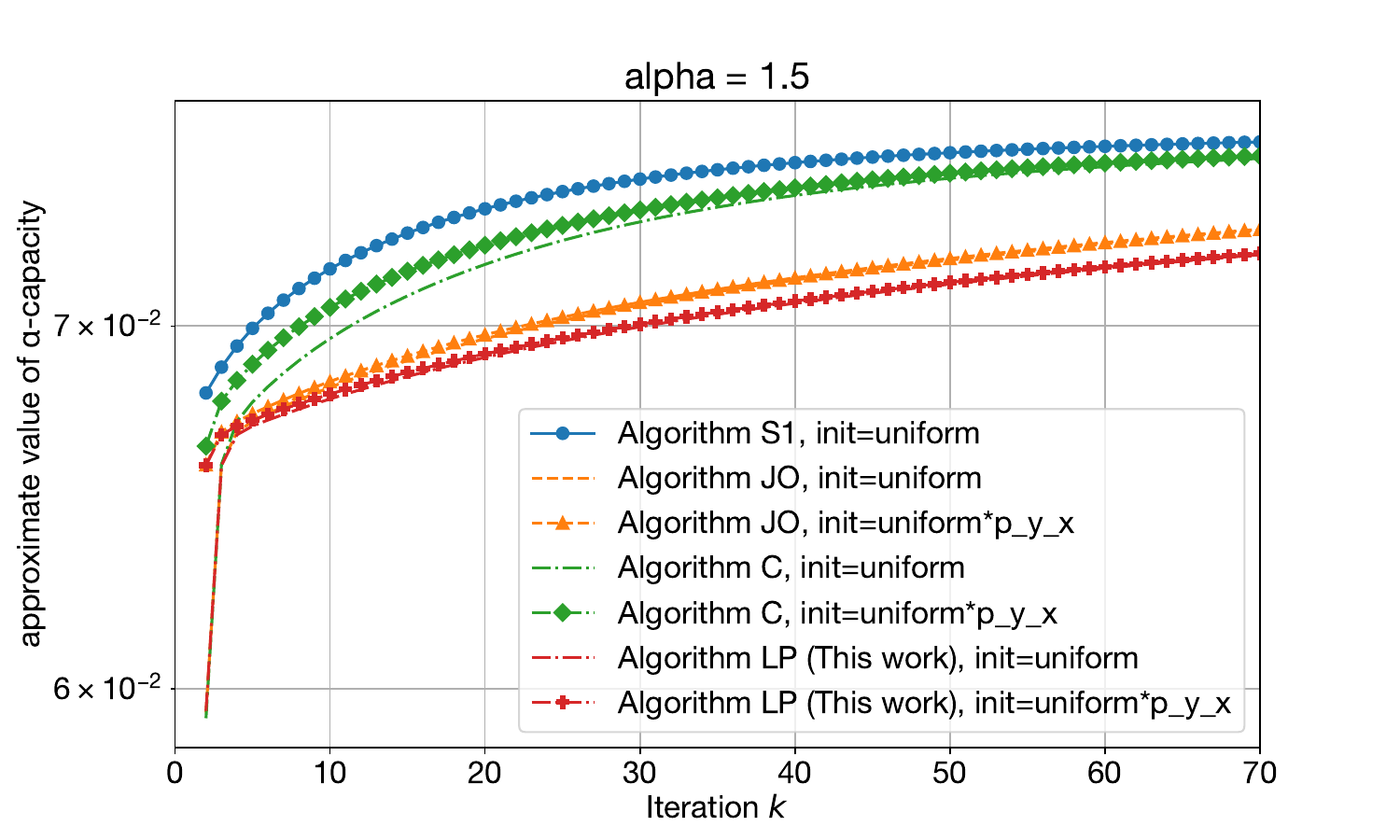} \\ 
(a) $\alpha=1.03$ & (b) $\alpha=1.5$ \\ 
\includegraphics[width=8cm, clip]{./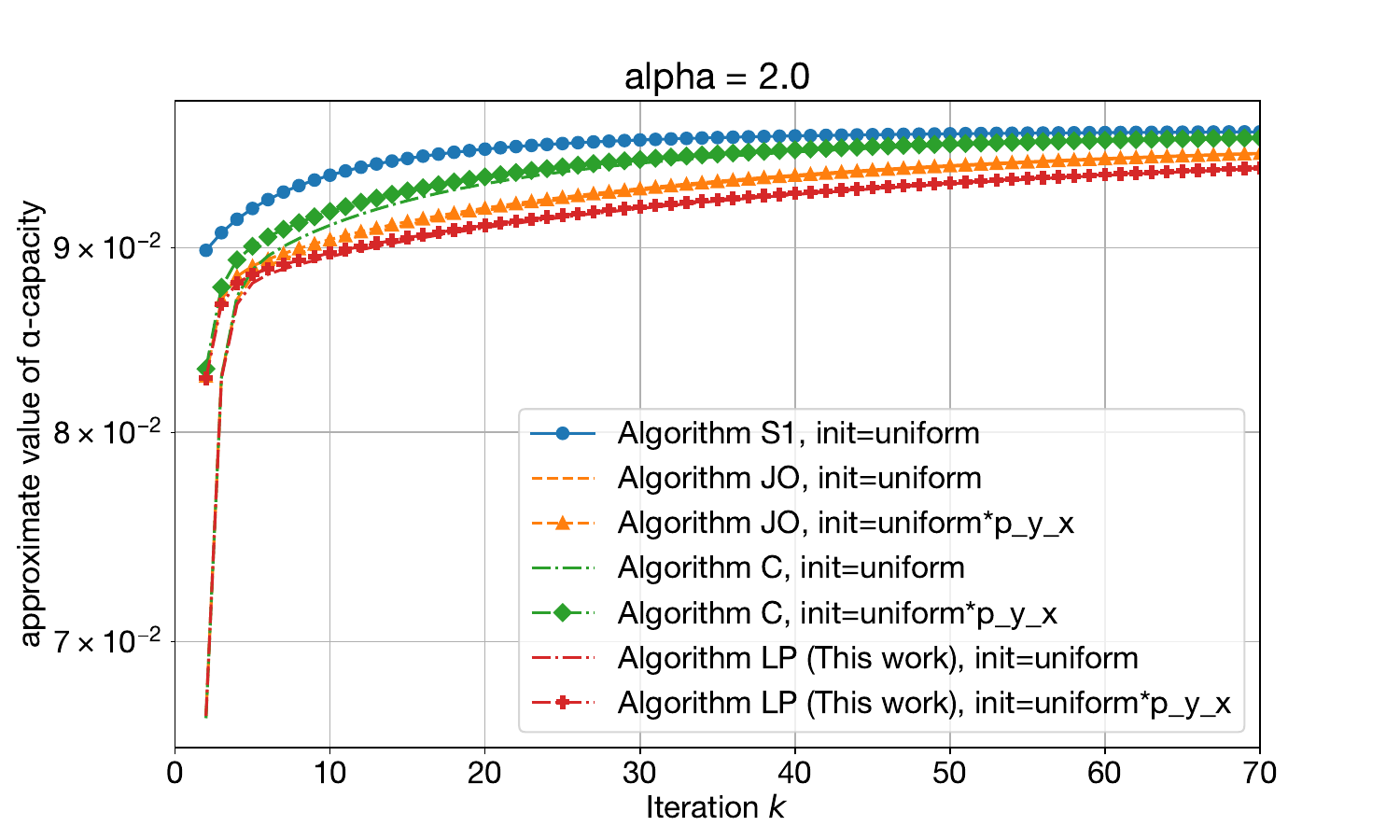} & \includegraphics[width=8cm, clip]{./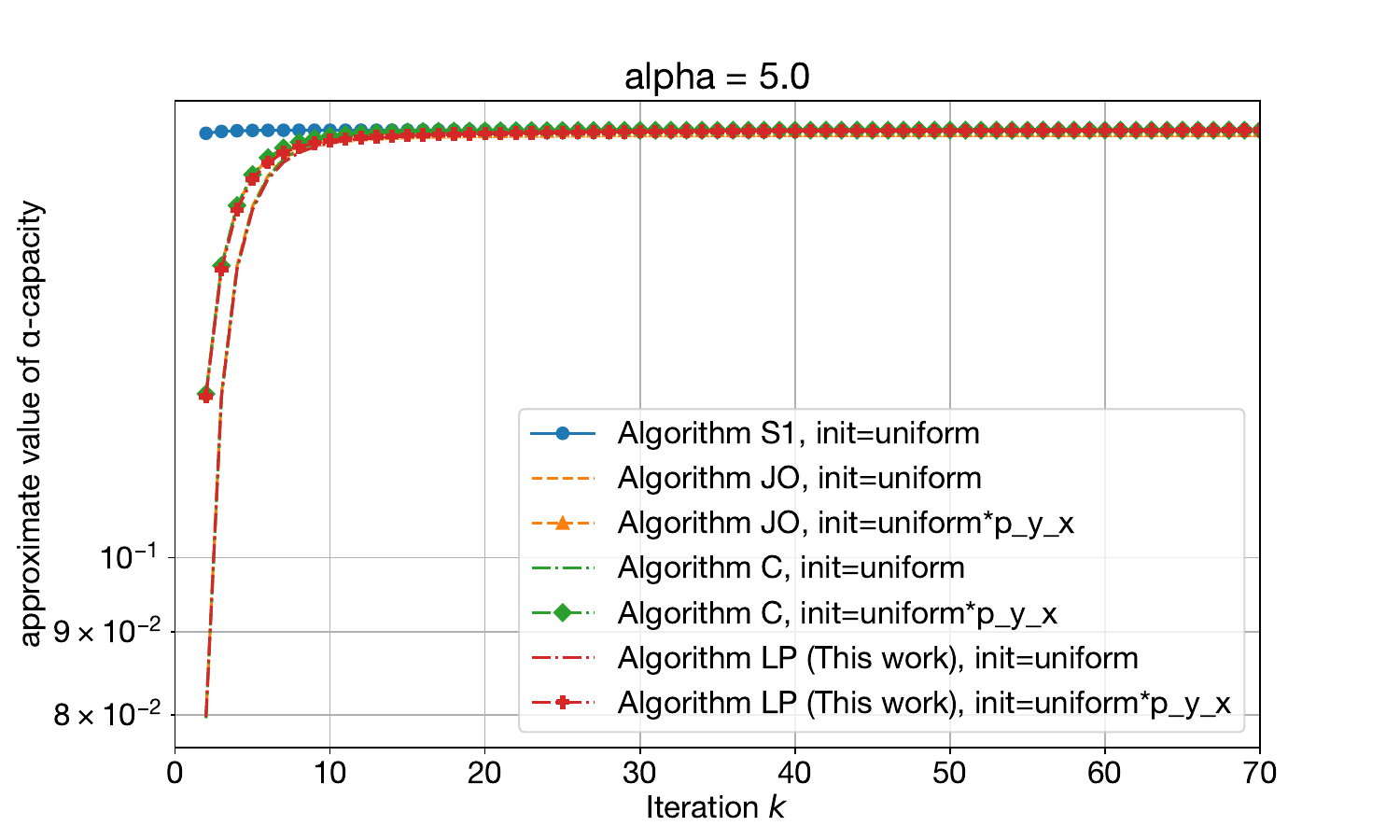} \\  
(c) $\alpha=2.0$  & (d) $\alpha=5.0$
\end{tabular}
\caption{Transitions of the approximate value of $\alpha$-capacity $F^{(k)}$ as $k$ increases for (a) $\alpha=1.03$, (b) $\alpha=1.5$, (c) $\alpha=2.0$, and (d) $\alpha=5.0$. The blue solid curve represents Algorithm S1, the orange dashed curve denotes Algorithm JO (modified Jitsumasu--Oohama algorithm), the green dash-dotted curve represents Algorithm C, and the red dash-dot curve indicates Algorithm LP.}
\label{fig:numerical_example}
\end{figure*}

\section{Conclusion}\label{sec:conclusion}
This paper provides variational characterizations of $\alpha$-MI.
Based on these characterizations, several novel AO algorithms for computing $\alpha$-MI and $\alpha$-capacity are presented. 
The numerical results show that the AO algorithm based on the Sibson MI's characterization has the fastest convergence speed.
Future work will focus on proving the nonasymptotic convergence analysis of these AO algorithms.

\section*{Acknowledgment}
The authors would like to thank Bar{\i}\c{s} Nakibo\v{g}lu and anonymous reviewers for their helpful comments.

\appendices
\section{Proof of Theorem \ref{thm:variational_characterization_Sibson_MI}} \label{proof:variational_characterization_Sibson_MI}
The characterizations in \eqref{eq:def_vc_Sibson_MI_01} and \eqref{eq:def_vc_Sibson_MI_02} were originally proved by Arimoto\cite[Thm 4]{arimoto1977} and Kamatsuka \textit{et al.} \cite[Thm 1]{10619200}, respectively. 
Here, we present a novel proof for \eqref{eq:def_vc_Sibson_MI_01} using the variational characterizations of the Shannon MI in Proposition \ref{prop:vc_Shannon_MI}. 

We define the distribution as follows:
\begin{align}
\hat{q}_{X, Y} (x, y)
&:= \frac{p_{X}(x)^{\frac{1}{\alpha}}p_{Y\mid X}(y\mid x)r_{X\mid Y}(x\mid y)^{1-\frac{1}{\alpha}}}{\sum_{x,y}p_{X}(x)^{\frac{1}{\alpha}}p_{Y\mid X}(y\mid x)r_{X\mid Y}(x\mid y)^{1-\frac{1}{\alpha}}}. \label{eq:hat_q_x_y}
\end{align}

Then, for $\alpha \in (0, 1)$, 
\begin{align}
&I_{\alpha}^{\text{S}}(X; Y) \overset{(a)}{=} \min_{\tilde{q}_{X, Y}} \Bigl\{ \frac{\alpha}{1-\alpha}D(\tilde{q}_{X, Y} || \tilde{q}_{X}p_{Y\mid X}) + I(\tilde{q}_{X}, \tilde{q}_{Y\mid X}) \notag \\ 
&\qquad \qquad \qquad \qquad + \frac{1}{1-\alpha}D(\tilde{q}_{X} || p_{X}) \Bigr\} \label{eq:S_proof_begin} \\ 
&\overset{(b)}{=} \min_{\tilde{q}_{X, Y}}\max_{r_{X\mid Y}} \Bigl\{\frac{\alpha}{1-\alpha}D(\tilde{q}_{X, Y} || \tilde{q}_{X}p_{Y\mid X})  \notag \\ 
&\qquad + \vE_{X, Y}^{\tilde{q}_{X, Y}}\left[\log \frac{r_{X\mid Y}(X\mid Y)}{\tilde{q}_{X}(X)}\right] + \frac{1}{1-\alpha}D(\tilde{q}_{X} || p_{X}) \Bigr\} \\ 
&\overset{(c)}{=} \max_{r_{X\mid Y}}\min_{\tilde{q}_{X, Y}} \Bigl\{\frac{\alpha}{1-\alpha}D(\tilde{q}_{X, Y} || \tilde{q}_{X}p_{Y\mid X})  \notag \\ 
&\qquad + \vE_{X, Y}^{\tilde{q}_{X, Y}}\left[\log \frac{r_{X\mid Y}(X\mid Y)}{\tilde{q}_{X}(X)}\right] + \frac{1}{1-\alpha}D(\tilde{q}_{X} || p_{X}) \Bigr\} \\ 
&=\max_{r_{X\mid Y}}\min_{\tilde{q}_{X, Y}} \Bigl\{D(\tilde{q}_{X,Y} || \hat{q}_{X, Y}) + F_{\alpha}^{\text{S1}}(p_{X}, r_{X\mid Y})\Bigr\} \\
&\overset{(d)}{=} \max_{r_{X\mid Y}} F_{\alpha}^{\text{S1}}(p_{X}, r_{X\mid Y}),  \label{eq:S_proof_end}
\end{align}
where $(a)$ follows from \eqref{eq:vc_Sibson_MI_KL}, 
$(b)$ follows from \eqref{eq:max_characterization_Shannon_MI}, 
$(c)$ follows from the minimax theorem (see, e.g., \cite[Thm 4.2]{pjm/1103040253})\footnote{Notably, $\tilde{F}_{\alpha}^{\text{S3}}(p_{X}, \tilde{q}_{X, Y}, r_{X\mid Y}):= \frac{\alpha}{1-\alpha}D(\tilde{q}_{X, Y} || \tilde{q}_{X}p_{Y\mid X}) + \vE_{X, Y}^{\tilde{q}_{X, Y}}\left[\log \frac{r_{X\mid Y}(X\mid Y)}{\tilde{q}_{X}(X)}\right] + \frac{1}{1-\alpha}D(\tilde{q}_{X} || p_{X}), \alpha \in (0, 1)$ is concave for $r_{X\mid Y}$ (see, e.g. \cite[Sec. 10.3.2]{10.5555/1199866}) and convex for $\tilde{q}_{X, Y}$, which can be easily verified.}, 
and $(d)$ follows from \eqref{eq:hat_q_x_y} and the properties of the Kullback--Leibler divergence (see, e.g., \cite[Thm 2.6.3]{Cover:2006:EIT:1146355}). 
For $\alpha \in (1, \infty)$, Theorem \ref{thm:variational_characterization_Sibson_MI} can be shown by replacing $\min_{\tilde{q}_{X, Y}}$ with $\max_{\tilde{q}_{X, Y}}$ in \eqref{eq:S_proof_begin}--\eqref{eq:S_proof_end}.

\section{Proof of Theorem \ref{thm:variational_characterization_Csiszar_MI}} \label{proof:variational_characterization_Csiszar_MI}
We define the distributions as follows.
\begin{align}
&\hat{q}_{Y\mid X}(y\mid x) := \frac{p_{Y\mid X}(y\mid x)r_{X\mid Y}(x\mid y)^{1-\frac{1}{\alpha}}}{\sum_{y}p_{Y\mid X}(y\mid x)r_{X\mid Y}(x\mid y)^{1-\frac{1}{\alpha}}} \label{eq:hat_q_Y_X} \\ 
&\hat{q}_{X}(x) := \frac{p_{X}(x)^{\frac{\alpha}{2\alpha-1}} \left( \sum_{y}p_{Y\mid X}(y | x)r_{X\mid Y}(x | y)^{1-\frac{1}{\alpha}} \right)^{\frac{\alpha}{2\alpha-1}}}{\sum_{x}p_{X}(x)^{\frac{\alpha}{2\alpha-1}} \left( \sum_{y}p_{Y\mid X}(y | x)r_{X\mid Y}(x | y)^{1-\frac{1}{\alpha}} \right)^{\frac{\alpha}{2\alpha-1}}}, \label{eq:update_LP_q_X}
\end{align}
Then, for $\alpha \in (0, 1)$, 
\begin{align}
&I_{\alpha}^{\text{C}}(X; Y) \notag \\ 
&\overset{(a)}{=} \min_{\tilde{q}_{Y\mid X}} \left\{I(p_{X}, \tilde{q}_{Y\mid X}) + \frac{\alpha}{1-\alpha} D(p_{X}\tilde{q}_{Y\mid X} || p_{X}p_{Y\mid X})\right\} \label{eq:C_proof_begin} \\ 
&\overset{(b)}{=} \min_{\tilde{q}_{Y\mid X}} \max_{r_{X\mid Y}} 
\Biggl\{\vE_{X, Y}^{p_{X}\tilde{q}_{Y\mid X}} \left[\log \frac{r_{X\mid Y}(X\mid Y)}{p_{X}(X)}\right] \notag \\ 
&\qquad \qquad \qquad \qquad + \frac{\alpha}{1-\alpha} D(p_{X}\tilde{q}_{Y\mid X} || p_{X}p_{Y\mid X})\Biggr\} \\ 
&\overset{(c)}{=} \max_{r_{X\mid Y}} \min_{\tilde{q}_{Y\mid X}}
\Biggl\{\vE_{X, Y}^{p_{X}\tilde{q}_{Y\mid X}} \left[\log \frac{r_{X\mid Y}(X\mid Y)}{p_{X}(X)}\right] \notag \\ 
&\qquad \qquad \qquad \qquad + \frac{\alpha}{1-\alpha} D(p_{X}\tilde{q}_{Y\mid X} || p_{X}p_{Y\mid X})\Biggr\} \\
&= \max_{r_{X\mid Y}} \min_{\tilde{q}_{Y\mid X}} \Biggl\{H(p_{X}) \notag \\\ 
&+\frac{\alpha}{1-\alpha} \vE_{X}^{p_{X}} \left[D\left(\tilde{q}_{Y\mid X}(\cdot \mid X) || \hat{q}_{Y\mid X}(\cdot \mid X)\right)\right] 
\notag \\
&+ \frac{\alpha}{\alpha-1} \sum_{x}p_{X}(x)\log \sum_{y} p_{Y\mid X}(y | x)r_{X\mid Y}(x | y)^{1-\frac{1}{\alpha}}\Biggr\} \label{eq:C_proof_end} \\
&\overset{(d)}{=} \eqref{eq:vc_AC_MI_01}, 
\end{align}
where $\hat{q}_{Y\mid X}(\cdot | x)$ is defined in \eqref{eq:hat_q_Y_X} and 
$(a)$ follows from \cite[Thm 1]{6034266}, 
$(b)$ follows from \eqref{eq:max_characterization_Shannon_MI}, 
$(c)$ follows from the minimax theorem\footnote{Notably, $\tilde{F}_{\alpha}^{\text{C}}(\tilde{q}_{Y\mid X}, r_{X\mid Y}):= \vE_{X, Y}^{p_{X}\tilde{q}_{Y\mid X}} \left[\log \frac{r_{X\mid Y}(X\mid Y)}{p_{X}(X)}\right] + \frac{\alpha}{1-\alpha} D(p_{X}\tilde{q}_{Y\mid X} || p_{X}p_{Y\mid X}), \alpha \in (0, 1)$ is concave for $r_{X\mid Y}$ and linear (hence, convex) for $\tilde{q}_{Y\mid X}$. },  
and $(d)$ follows from the properties of the Kullback--Leibler divergence (see, e.g., \cite[Thm 2.6.3]{Cover:2006:EIT:1146355}). 
Eq. \eqref{eq:vc_AC_MI_02_03} for $\alpha \in (0, 1)$ can be obtained from \eqref{eq:min_characterization_Shannon_MI} and \eqref{eq:C_proof_begin} by replacing $I(p_{X}, \tilde{q}_{Y\mid X})$ with $\min_{q_{Y}} D(p_{X}\tilde{q}_{Y\mid X} || p_{X}q_{Y})$, which was originally proved by Karakos \textit{et al.} \cite{4595361}.
For $\alpha \in (1, \infty)$, Theorem \ref{thm:variational_characterization_Csiszar_MI} can be shown by replacing $\min_{\tilde{q}_{Y\mid X}}$ with $\max_{\tilde{q}_{Y\mid X}}$ in Eqs. \eqref{eq:C_proof_begin}--\eqref{eq:C_proof_end}.

\section{Proof of Theorem \ref{thm:variational_characterization_LP_MI}} \label{proof:variational_characterization_LP_MI}
We define the distributions as follows.
\begin{align}
&\hat{q}_{Y\mid X}(x | y) := \frac{p_{Y\mid X}(y | x) r_{X\mid Y}(x | y)^{1-\frac{1}{\alpha}}}{\sum_{y}p_{Y\mid X}(y | x) r_{X\mid Y}(x | y)^{1-\frac{1}{\alpha}}}, \label{eq:update_LP_q_Y_given_X} \\ 
&\hat{q}_{X}(x) := \frac{p_{X}(x)^{\frac{\alpha}{2\alpha-1}} \left( \sum_{y}p_{Y\mid X}(y | x)r_{X\mid Y}(x | y)^{1-\frac{1}{\alpha}} \right)^{\frac{\alpha}{2\alpha-1}}}{\sum_{x}p_{X}(x)^{\frac{\alpha}{2\alpha-1}} \left( \sum_{y}p_{Y\mid X}(y | x)r_{X\mid Y}(x | y)^{1-\frac{1}{\alpha}} \right)^{\frac{\alpha}{2\alpha-1}}}. \label{eq:update_LP_q_X}
\end{align}
Then, for $\alpha \in (1, \infty)$, 
\begin{align}
&I_{\alpha}^{\text{LP}}(X; Y) 
\overset{(a)}{=} \max_{\tilde{q}_{X, Y}}\left\{ \frac{\alpha}{1-\alpha} D(\tilde{q}_{X, Y} || p_{X}p_{Y\mid X}) + I(\tilde{q}_{X}, \tilde{q}_{Y\mid X}) \right\} \label{eq:LP_proof_begin} \\
&\overset{(b)}{=} \max_{\tilde{q}_{X, Y}}\max_{r_{X\mid Y}} \Bigl\{ \frac{\alpha}{1-\alpha} D(\tilde{q}_{X, Y} || p_{X}p_{Y\mid X}) \notag \\
&\qquad \qquad \qquad + \vE_{X, Y}^{\tilde{q}_{X, Y}}\left[\log \frac{r_{X\mid Y}(X\mid Y)}{\tilde{q}_{X}(X)}\right] \Bigr\} \\ 
&= \max_{\tilde{q}_{X, Y}}\max_{r_{X\mid Y}} \Bigl\{ \frac{\alpha}{1-\alpha} D(\tilde{q}_{X, Y} || \tilde{q}_{X}p_{Y\mid X}) \notag \\ 
&\qquad + \vE_{X, Y}^{\tilde{q}_{X, Y}}\left[\log \frac{r_{X\mid Y}(X\mid Y)}{\tilde{q}_{X}(X)}\right] + D(\tilde{q}_{X} || p_{X})\Bigr\} \\
&= \max_{r_{X\mid Y}}\max_{\tilde{q}_{X}}\max_{\tilde{q}_{Y\mid X}} \Bigl\{\frac{\alpha}{1-\alpha} \vE_{X}^{\tilde{q}_{X}}\left[D(\tilde{q}_{Y\mid X}(\cdot \mid X) || \hat{q}_{Y\mid X}(\cdot \mid X))\right] \notag \\ 
& + H(\tilde{q}_{Y}) + \frac{\alpha}{1-\alpha}D(\tilde{q}_{X} || p_{X})\Bigr\} \\ 
&\overset{(c)}{=}\max_{r_{X\mid Y}}\max_{\tilde{q}_{X}} \Bigl\{\frac{2\alpha-1}{1-\alpha} D(\tilde{q}_{X} || \hat{q}_{X}) + \eqref{eq:vc_LP_MI_01}\Bigr\} \\ 
&\overset{(d)}{=} \eqref{eq:vc_LP_MI_01}, \label{eq:LP_proof_end}
\end{align}
where 
$(a)$ follows from \eqref{eq:vc_LP_MI_KL}, $(b)$ follows from \eqref{eq:max_characterization_Shannon_MI}, 
and $(c)$ and $(d)$ follow from the properties of the Kullback--Leibler divergence. 
For $\alpha \in (0, 1)$, Eq. \eqref{eq:vc_LP_MI_02_03} for $\alpha \in (0, 1)$ can be obtained by replacing $\max_{\tilde{q}_{X, Y}}$ with $\min_{\tilde{q}_{X, Y}}$ and $I(\tilde{q}_{X}, \tilde{q}_{Y\mid X})$ with $\min_{q_{Y}} D(\tilde{q}_{X}\tilde{q}_{Y\mid X} || \tilde{q}_{X}q_{Y})$ in \eqref{eq:LP_proof_begin}.

\section{Proof of Proposition \ref{prop:update_formulae_LP_MI_vc}}\label{proof:update_formulae_LP_MI_vc}
Eqs. \eqref{eq:LP_update_q_Y}, \eqref{eq:LP_update_vc_r_X_Y}, and \eqref{eq:LP_update_vc_tilde_q_X_Y} immediately follow from the proof of Theorem \ref{thm:variational_characterization_LP_MI} and Proposition \ref{prop:vc_Shannon_MI}. 
Here, we prove only \eqref{eq:LP_update_tilde_q_X_Y}. Let $\bar{q}_{X}$ and $\bar{q}_{Y\mid X}$ be distributions defined in \eqref{eq:LP_update_vc_bar_q_X} and \eqref{eq:LP_update_vc_bar_q_Y_X}, respectively. Let $\tilde{q}_{X}(x):=\sum_{y}\tilde{q}_{X, Y}(x, y)$.
Then,
\begin{align}
&F_{\alpha}^{\text{LP}}(\tilde{q}_{X, Y}, q_{Y}) = \frac{1}{1-\alpha} \vE_{X, Y}^{\tilde{q}_{X}} \left[D(\tilde{q}_{Y\mid X}(\cdot \mid X) || \bar{q}_{Y\mid X}(\cdot \mid X))\right] \notag \\ 
&+ \frac{1}{\alpha-1} \sum_{x}\tilde{q}_{X}(x)\log \sum_{y}p_{Y\mid X}(y\mid x)^{\alpha}q_{Y}(y)^{1-\alpha} \notag \\ 
&+ \frac{\alpha}{1-\alpha}D(\tilde{q}_{X} || p_{X}) \\ 
&\geq \frac{1}{\alpha-1} \sum_{x}\tilde{q}_{X}(x)\log \sum_{y}p_{Y\mid X}(y\mid x)^{\alpha}q_{Y}(y)^{1-\alpha} \notag \\ 
&+ \frac{\alpha}{1-\alpha}D(\tilde{q}_{X} || p_{X}) \label{eq:ineq_tilde_q_Y_X} \\ 
&= \frac{\alpha}{1-\alpha} D(\tilde{q}_{X} || \bar{q}_{X}) \notag \\ 
&+ \frac{\alpha}{\alpha-1}\log \sum_{x}p_{X}(x) \left( \sum_{y}p_{Y\mid X}(y\mid x)^{\alpha}q_{Y}(y)^{1-\alpha} \right)^{\frac{1}{\alpha}} \\ 
&\geq \frac{\alpha}{\alpha-1}\log \sum_{x}p_{X}(x) \left( \sum_{y}p_{Y\mid X}(y\mid x)^{\alpha}q_{Y}(y)^{1-\alpha} \right)^{\frac{1}{\alpha}}, \label{eq:ineq_tilde_q_X}
\end{align}
where the equality in \eqref{eq:ineq_tilde_q_Y_X} is achieved when $\tilde{q}_{Y\mid X} = \bar{q}_{Y\mid X}$; 
the equality in \eqref{eq:ineq_tilde_q_X} is achieved when $\tilde{q}_{X} = \bar{q}_{X}$. 

\section{Proof of Lemma \ref{lem:LP_capacity_upper_bound}} \label{proof:LP_capacity_upper_bound}
\begin{proof}
Let $\tilde{q}_{X, Y}^{\text{opt}}$ and $p_{X}^{\text{opt}}$ be the optimal distributions that attain the Lapidoth--Pfister capacity $C_{\alpha}^{\text{LP}} = \max_{p_{X}} \max_{\tilde{q}_{X, Y}}\max_{r_{X\mid Y}} \tilde{F}_{\alpha}^{\text{LP}}(p_{X}, \tilde{q}_{X, Y}, r_{X\mid Y})$. 
Let $\tilde{q}_{X, Y}^{(k)}(x, y)=\tilde{q}_{X}^{(k)}(x)\tilde{q}_{Y\mid X}^{(k)}(y|x), \tilde{q}_{Y}^{(k)}(y):=\sum_{x}\tilde{q}_{X, Y}^{(k)}(x, y)$ and $\tilde{q}_{X, Y}^{\text{opt}}(x, y) = \tilde{q}_{X}^{\text{opt}}(x)\tilde{q}_{Y\mid X}^{\text{opt}}(y|x), \tilde{q}_{Y}^{\text{opt}}(y):= \sum_{x}\tilde{q}_{X, Y}^{\text{opt}}(x, y)$.
Then, from Theorem \ref{thm:variational_characterization_LP_MI} and Proposition \ref{prop:update_formulae_LP_MI_vc}, we can obtain the following:
\begin{align}
&\tilde{F}_{\alpha}^{\text{LP}}(p_{X}^{(k+1)}, \tilde{q}_{X, Y}^{(k+1)}, r_{X\mid Y}^{(k)}) = \max_{\tilde{q}_{X, Y}} \tilde{F}_{\alpha}^{\text{LP}}(p_{X}^{(k+1)}, \tilde{q}_{X, Y}, r_{X\mid Y}^{(k)}) \\ 
&\overset{(a)}{=} \frac{2\alpha-1}{\alpha-1} \notag \\
&\times \log \sum_{x} p_{X}^{(k+1)}(x)^{\frac{\alpha}{2\alpha-1}} \left( \sum_{y} p_{Y\mid X}(y|x)r_{X\mid Y}(x|y)^{1-\frac{1}{\alpha}} \right)^{\frac{\alpha}{2\alpha-1}}, \label{eq:tilde_alpha_LP_capacity} \\
&\sum_{y}p_{Y\mid X}(y|x)r_{X\mid Y}^{(k)}(x|y)^{1-\frac{1}{\alpha}}
\overset{(b)}{=} \frac{p_{Y\mid X}(y|x)r_{X\mid Y}^{(k)}(x|y)^{1-\frac{1}{\alpha}}}{\tilde{q}_{Y\mid X}^{(k+1)}(y|x)} \\ 
&\overset{(c)}{=} \frac{p_{Y\mid X}(y|x)\tilde{q}_{X, Y}^{(k)}(x, y)^{1-\frac{1}{\alpha}}}{\tilde{q}_{Y\mid X}^{(k+1)}(y|x)}\cdot \frac{1}{\tilde{q}_{Y}^{(k)}(y)^{1-\frac{1}{\alpha}}}, \label{eq:LP_capacity_c} \\ 
&\sum_{x}p_{X}^{(k+1)}(x)^{\frac{\alpha}{2\alpha-1}} \left( \sum_{y} p_{Y\mid X}(y|x)r_{X\mid Y}^{(k)}(x|y)^{1-\frac{1}{\alpha}} \right)^{\frac{\alpha}{2\alpha-1}} \notag \\ 
&\overset{(d)}{=} \frac{p_{X}^{(k+1)}(x)^{\frac{\alpha}{2\alpha-1}} \left( \sum_{y} p_{Y\mid X}(y|x)r_{X\mid Y}^{(k)}(x|y)^{1-\frac{1}{\alpha}} \right)^{\frac{\alpha}{2\alpha-1}}}{\tilde{q}_{X}^{(k+1)}(x)} \\ 
&\overset{(e)}{=} \frac{p_{X}^{(k+1)}(x)^{\frac{\alpha}{2\alpha-1}}}{\tilde{q}_{X}^{(k+1)}(x)} \notag \\ 
&\times \left( \frac{p_{Y\mid X}(y|x)\tilde{q}_{X,Y}^{(k)}(x,y)^{1-\frac{1}{\alpha}}}{\tilde{q}_{Y\mid X}^{(k+1)}(y|x)}\cdot \frac{1}{\tilde{q}_{Y}^{(k)}(y)^{1-\frac{1}{\alpha}}} \right)^{\frac{\alpha}{2\alpha-1}} \\ 
&= \frac{p_{X}^{(k+1)}(x)^{\frac{\alpha}{2\alpha-1}}p_{Y\mid X}(y|x)^{\frac{\alpha}{2\alpha-1}}\tilde{q}_{X, Y}^{(k)}(x, y)^{\frac{\alpha-1}{2\alpha-1}}}{\tilde{q}_{X}^{(k+1)}(x)\tilde{q}_{Y\mid X}^{(k+1)}(y|x)^{\frac{\alpha}{2\alpha-1}}\tilde{q}_{Y}^{(k)}(y)^{\frac{\alpha-1}{2\alpha-1}}} \label{eq:denominator_LP_capacity}
\end{align}
for all $(x, y)\in \mathcal{X}\times \mathcal{Y}$, where $(a)$ follows from \eqref{eq:vc_LP_MI_01}, $(b)$ follows from \eqref{eq:LP_update_vc_tilde_q_X_Y} and \eqref{eq:LP_update_vc_hat_q_Y_X}, $(c)$ follows from \eqref{eq:LP_update_vc_r_X_Y}, $(d)$ follows from \eqref{eq:LP_update_vc_tilde_q_X_Y} and \eqref{eq:LP_update_vc_hat_q_X}, and $(e)$ follows from \eqref{eq:LP_capacity_c}.
We can substitute \eqref{eq:denominator_LP_capacity} into \eqref{eq:tilde_alpha_LP_capacity} to obtain 
\begin{align}
&F_{\alpha}^{\text{LP}}(p_{X}^{(k+1)}, \tilde{q}_{X, Y}^{(k+1)}, r_{X\mid Y}^{(k)}) \notag \\ 
&= \frac{2\alpha-1}{\alpha-1}\log \frac{p_{X}^{(k+1)}(x)^{\frac{\alpha}{2\alpha-1}}p_{Y\mid X}(y|x)^{\frac{\alpha}{2\alpha-1}}\tilde{q}_{X, Y}^{(k)}(x, y)^{\frac{\alpha-1}{2\alpha-1}}}{\tilde{q}_{X}^{(k+1)}(x)\tilde{q}_{Y\mid X}^{(k+1)}(y|x)^{\frac{\alpha}{2\alpha-1}}\tilde{q}_{Y}^{(k)}(y)^{\frac{\alpha-1}{2\alpha-1}}} \label{eq:finite_difference_LP_capacity}
\end{align}
for all $(x, y)\in \mathcal{X}\times \mathcal{Y}$. 
Then, we can use \eqref{eq:finite_difference_LP_capacity} to obtain the upper bound of the approximation error as follows:
\begin{align}
&C_{\alpha}^{\text{LP}} - \tilde{F}_{\alpha}^{\text{LP}}(p_{X}^{(k+1)}, \tilde{q}_{X, Y}^{(k+1)}, r_{X\mid Y}^{(k)}) \notag \\ 
&= \frac{\alpha}{1-\alpha} D(\tilde{q}_{X, Y}^{\text{opt}} || p_{X}^{\text{opt}}p_{Y\mid X}) + I(\tilde{q}_{X}^{\text{opt}}, \tilde{q}_{Y\mid X}^{\text{opt}}) \notag \\ 
&- \frac{2\alpha-1}{\alpha-1} \sum_{x, y}\tilde{q}_{X, Y}^{\text{opt}}(x, y) F_{\alpha}^{\text{LP}}(p_{X}^{(k+1)}, \tilde{q}_{X, Y}^{(k+1)}, r_{X\mid Y}^{(k)}) \\ 
&= \frac{\alpha}{1-\alpha} D(\tilde{q}_{X, Y}^{\text{opt}} || p_{X}^{\text{opt}}p_{Y\mid X}) + I(\tilde{q}_{X}^{\text{opt}}, \tilde{q}_{Y\mid X}^{\text{opt}}) -  \frac{2\alpha-1}{\alpha-1} \notag \\ 
&\times \sum_{x, y} \tilde{q}_{X, Y}^{\text{opt}}(x, y) \notag \\ 
&\times \log \frac{p_{X}^{(k+1)}(x)^{\frac{\alpha}{2\alpha-1}}p_{Y\mid X}(y|x)^{\frac{\alpha}{2\alpha-1}}\tilde{q}_{X, Y}^{(k)}(x, y)^{\frac{\alpha-1}{2\alpha-1}}}{\tilde{q}_{X}^{(k+1)}(x)\tilde{q}_{Y\mid X}^{(k+1)}(y|x)^{\frac{\alpha}{2\alpha-1}}\tilde{q}_{Y}^{(k)}(y)^{\frac{\alpha-1}{2\alpha-1}}} \\ 
&\overset{(f)}{=} \frac{\alpha}{1-\alpha}\sum_{x, y}\tilde{q}_{X, Y}^{\text{opt}}(x, y) 
\log \frac{\tilde{q}_{Y\mid X}^{\text{opt}}(y|x)^{\frac{1}{\alpha}}}{\tilde{q}_{Y\mid X}^{(k)}(y|x)^{\frac{1}{\alpha}}}\cdot \frac{\tilde{q}_{X}^{(k)}(x)\tilde{q}_{Y\mid X}^{(k)}(y|x)}{\tilde{q}_{X}^{(k+1)}(x)\tilde{q}_{Y\mid X}^{(k+1)}(y|x)} \notag \\ 
&\cdot \frac{\tilde{q}_{X}^{(k+1)}(x)^{\frac{1}{\alpha}}-1}{\tilde{q}_{X}^{(k)}(x)^{\frac{1}{\alpha}}-1}\cdot \frac{\tilde{q}_{Y}^{(k)}(y)^{\frac{1}{\alpha}-1}}{\tilde{q}_{Y}^{\text{opt}}(y)^{\frac{1}{\alpha}-1}} \\ 
&= \frac{1}{1-\alpha}\vE_{X}^{\tilde{q}_{X}^{\text{opt}}}\left[D(\tilde{q}_{Y\mid X}^{\text{opt}}(\cdot \mid X) || \tilde{q}_{Y\mid X}^{(k)}(\cdot \mid X))\right] \notag \\ 
&+ \frac{\alpha}{\alpha-1}\sum_{x, y}\tilde{q}_{X, Y}^{\text{opt}}(x,y)\log \frac{\tilde{q}_{X, Y}^{k+1}(x, y)}{\tilde{q}_{X, Y}^{(k)}(x, y)} \notag \\ 
&+ \sum_{x}\tilde{q}_{X}^{\text{opt}}(x) \log \frac{\tilde{q}_{X}^{(k+1)}(x)}{\tilde{q}_{X}^{(k)}(x)} - D(\tilde{q}_{Y}^{\text{opt}} || \tilde{q}_{Y}^{(k)}) \\ 
&\overset{(g)}{\leq} \frac{\alpha}{\alpha-1}\sum_{x, y}\tilde{q}_{X, Y}^{\text{opt}}\log \frac{\tilde{q}_{X, Y}^{k+1}(x, y)}{\tilde{q}_{X, Y}^{(k)}(x, y)} + \sum_{x}\tilde{q}_{X}^{\text{opt}}(x) \log \frac{\tilde{q}_{X}^{(k+1)}(x)}{\tilde{q}_{X}^{(k)}(x)}, 
\end{align}
where $(f)$ follows from \eqref{eq:update_p_X_LP}, i.e., $p_{X}^{\text{opt}} = \tilde{q}_{X}^{\text{opt}}$ and $p_{X}^{(k)}=\tilde{q}_{X}^{(k)}$,  
and $(g)$ follows from the nonnegativity of the Kullback--Leibler divergence and $\frac{1}{1-\alpha} < 0$ for $\alpha \in (1, \infty)$
\end{proof}

\section{Proof of Lemma \ref{lem:AC_upper_bound}} \label{proof:AC_upper_bound}
\begin{proof}
For convenience, let $\tilde{\gamma}^{(k)}(y):=\sum_{x}p_{X}(x)\tilde{q}_{Y\mid X}^{(k)}(y | x)$ and 
$\tilde{\gamma}^{\text{opt}}(y):=\sum_{x}p_{X}(x)\tilde{q}_{Y\mid X}^{\text{opt}}(y | x)$.
Then, Theorem \ref{thm:variational_characterization_Csiszar_MI} and Proposition \ref{prop:update_formulae_Csiszar_MI} can be used to obtain the following:  
\begin{align}
&\tilde{F}_{\alpha}^{\text{C}} (\tilde{q}_{Y\mid X}^{(k+1)}, r_{X\mid Y}^{(k)}) 
= \max_{\tilde{q}_{Y\mid X}} \tilde{F}_{\alpha}^{\text{C}}(\tilde{q}_{Y\mid X}, r_{X\mid Y}^{(k)}) \\
&\overset{(a)}{=} H(p_{X}) \notag \\
&+ \frac{\alpha}{\alpha-1}\sum_{x}p_{X}(x)\log \sum_{y}p_{Y\mid X}(y\mid x)r_{X\mid Y}^{(k)}(x\mid y)^{1-\frac{1}{\alpha}},  \label{eq:tilde_alpha_C} \\ 
&\sum_{y}p_{Y\mid X}(y\mid x)r_{X\mid Y}^{(k)}(x\mid y)^{1-\frac{1}{\alpha}} \notag \\
&\overset{(b)}{=} \frac{p_{Y\mid X}(y\mid x)r_{X\mid Y}^{(k)}(x\mid y)^{1-\frac{1}{\alpha}}}{\tilde{q}_{Y\mid X}^{(k+1)}(y\mid x)} \\ 
&\overset{(c)}{=} \frac{p_{Y\mid X}(y\mid x) \left( \frac{p_{X}(x)\tilde{q}_{Y\mid X}^{(k)}(y\mid x)}{\tilde{\gamma}^{(k)}(y)} \right)^{1-\frac{1}{\alpha}}}{\tilde{q}_{Y\mid X}^{(k+1)}(y\mid x)} \\ 
&= \frac{p_{Y\mid X}(y\mid x)p_{X}(x)^{1-\frac{1}{\alpha}}\tilde{q}_{Y\mid X}^{(k)}(y\mid x)^{1-\frac{1}{\alpha}}}{\tilde{q}_{Y\mid X}^{(k+1)}(y\mid x)}\times \frac{1}{\tilde{\gamma}^{(k)}(y)^{1-\frac{1}{\alpha}}}. \label{eq:denominator_C}
\end{align}
for all $(x, y)\in \mathcal{X}\times \mathcal{Y}$.
We can substitute \eqref{eq:denominator_C} into \eqref{eq:tilde_alpha_C} to obtain:
\begin{align}
&\tilde{F}_{\alpha}^{\text{C}} (\tilde{q}_{Y\mid X}^{(k+1)}, r_{X\mid Y}^{(k)}) = H(p_{X})+ \frac{\alpha}{\alpha-1} \sum_{x}p_{X}(x) \notag \\ 
& \log \frac{p_{Y\mid X}(y\mid x)p_{X}(x)^{1-\frac{1}{\alpha}}\tilde{q}_{Y\mid X}^{(k)}(y\mid x)^{1-\frac{1}{\alpha}}}{\tilde{q}_{Y\mid X}^{(k+1)}(y\mid x)}\cdot \frac{1}{\tilde{\gamma}^{(k)}(y)^{1-\frac{1}{\alpha}}} \\ 
&= \frac{\alpha}{\alpha-1}\sum_{x}p_{X}(x) \log \frac{\tilde{q}_{Y\mid X}^{(k)}(y\mid x)^{1-\frac{1}{\alpha}}}{\tilde{q}_{Y\mid X}^{(k+1)}(y\mid x)}\cdot \frac{p_{Y\mid X}(y\mid x)}{\tilde{\gamma}^{(k)}(y)^{1-\frac{1}{\alpha}}} \label{eq:finite_difference_C}
\end{align}
for all $y\in \mathcal{Y}$. 
Then, we can use \eqref{eq:finite_difference_C} to obtain the upper bound of the approximation error as follows:
\begin{align}
& I_{\alpha}^{\text{C}}(X; Y) - \tilde{F}_{\alpha}^{\text{C}}(\tilde{q}_{Y\mid X}^{(k+1)},  r_{X\mid Y}^{(k)}) \notag \\ 
&= I(p_{X}, \tilde{q}_{Y\mid X}^{\text{opt}}) + \frac{\alpha}{1-\alpha}D(p_{X}\tilde{q}_{Y\mid X}^{\text{opt}} || p_{X}p_{Y\mid X}) \notag \\
&\qquad \qquad - \sum_{y}\tilde{q}_{Y\mid X}^{\text{opt}}(y\mid x) \tilde{F}_{\alpha}^{\text{C}}(\tilde{q}_{Y\mid X}^{(k+1)},  r_{X\mid Y}^{(k)}) \\ 
&= \frac{\alpha}{1-\alpha}D(p_{X}\tilde{q}_{Y\mid X}^{\text{opt}} || p_{X}p_{Y\mid X}) + I(p_{X}, \tilde{q}_{Y\mid X}^{\text{opt}}) - \frac{\alpha}{\alpha-1} \notag \\ 
&\times \sum_{x, y}p_{X}(x)\tilde{q}_{Y\mid X}^{\text{opt}}(y\mid x) \log \frac{\tilde{q}_{Y\mid X}^{(k)}(y\mid x)^{1-\frac{1}{\alpha}}}{\tilde{q}_{Y\mid X}^{(k+1)}(y\mid x)}\cdot \frac{p_{Y\mid X}(y\mid x)}{\tilde{\gamma}^{(k)}(y)^{1-\frac{1}{\alpha}}} \\
&= \frac{\alpha}{\alpha-1}\sum_{x,y}p_{X}(x)\tilde{q}_{Y\mid X}^{\text{opt}}(y\mid x)\log \frac{\tilde{q}_{Y\mid X}^{(k+1)}(y\mid x)}{\tilde{q}_{Y\mid X}^{(k)}(y\mid x)} \notag \\ 
&+ \frac{1}{1-\alpha} \vE_{X}^{p_{X}}(x) \left[D(\tilde{q}_{Y\mid X}^{\text{opt}}(\cdot \mid X) || \tilde{q}_{Y\mid X}^{(k)}(\cdot \mid X))\right] - D(\tilde{\gamma}^{\text{opt}} || \tilde{\gamma}^{(k)}) \\ 
&\overset{(d)}{\leq} \frac{\alpha}{\alpha-1}\sum_{x,y}p_{X}(x)\tilde{q}_{Y\mid X}^{\text{opt}}(y\mid x)\log \frac{\tilde{q}_{Y\mid X}^{(k+1)}(y\mid x)}{\tilde{q}_{Y\mid X}^{(k)}(y\mid x)},
\end{align}
where $(d)$ follows from the nonnegativity of the Kullback--Leibler divergence and $\frac{1}{1-\alpha}<0$ for $\alpha\in (1, \infty)$.
\end{proof}

\section{Proof of Lemma \ref{lem:LP_upper_bound}} \label{proof:LP_upper_bound}
\begin{proof}
Let $\tilde{q}_{X, Y}^{(k)} = \tilde{q}_{X}^{(k)}\tilde{q}_{Y\mid X}^{(k)}$ and $\tilde{q}_{Y}^{(k)}(y) := \sum_{x}\tilde{q}_{X, Y}^{(k)}(x, y)$.
Then, from Theorem \ref{thm:variational_characterization_LP_MI} and Proposition \ref{prop:update_formulae_LP_MI_vc}, we can obtain the following:  
\begin{align}
&F_{\alpha}^{\text{LP}}(\tilde{q}_{X, Y}^{(k+1)}, q_{Y}^{(k)}) = \max_{\tilde{q}_{X, Y}}F_{\alpha}^{\text{LP}}(\tilde{q}_{X, Y}, q_{Y}^{(k)}) \notag \\
&\overset{(a)}{=} \frac{\alpha}{\alpha-1} \log \sum_{x} p_{X}(x) \left( \sum_{y}p_{Y\mid X}(y|x)^{\alpha}q_{Y}^{(k)}(y)^{1-\alpha} \right)^{\frac{1}{\alpha}}, \label{eq:tilde_alpha_LP} \\ 
&\sum_{y}p_{Y\mid X}(y|x)^{\alpha}q_{Y}^{k}(y)^{1-\alpha} \overset{(b)}{=} \frac{p_{Y\mid X}(y|x)^{\alpha}q_{Y}^{(k)}(y)^{1-\alpha}}{\tilde{q}_{Y\mid X}^{(k)}(y|x)} \\ 
&\overset{(c)}{=} \frac{p_{Y\mid X}(y|x)^{\alpha}\tilde{q}_{Y}^{(k)}(y)^{1-\alpha}}{\tilde{q}_{Y\mid X}^{(k)}(y|x)}, \label{eq:LP_c} \\ 
&\sum_{x}p_{X}(x)\left( \sum_{y}p_{Y\mid X}(y|x)^{\alpha}q_{Y}^{(k)}(y)^{1-\alpha} \right)^{\frac{1}{\alpha}} \notag \\ 
&\overset{(d)}{=} \frac{p_{X}(x) \left( \sum_{y} p_{Y\mid X}(y|x)^{\alpha}q_{Y}^{(k)}(y)^{1-\alpha} \right)^{\frac{1}{\alpha}}}{\tilde{q}_{X}^{(k+1)}(x)} \\ 
&\overset{(e)}{=} \frac{p_{X}(x) \left( \frac{p_{Y\mid X}(y|x)^{\alpha}\tilde{q}_{Y}^{(k)}(y)^{1-\alpha}}{\tilde{q}_{Y\mid X}^{(k)}(y|x)} \right)^{\frac{1}{\alpha}}}{\tilde{q}_{X}^{(k+1)}(x)} \\ 
&= \frac{p_{X}(x)p_{Y\mid X}(y|x)\tilde{q}_{Y}^{(k)}(y)^{\frac{1}{\alpha}-1}}{\tilde{q}_{X}^{(k+1)}(x)\tilde{q}_{Y\mid X}^{(k)}(y|x)^{\frac{1}{\alpha}}} \label{eq:denominator_LP}, 
\end{align}
where $(a)$ follows from Theorem \ref{thm:variational_characterization_LP_MI}, Proposition \ref{prop:update_formulae_LP_MI_vc}, and a simple calculation, 
$(b)$ follows from \eqref{eq:LP_update_tilde_q_X_Y} and \eqref{eq:LP_update_vc_bar_q_Y_X}; 
$(c)$ follows from \eqref{eq:LP_update_q_Y};  
$(d)$ follows from \eqref{eq:LP_update_tilde_q_X_Y} and \eqref{eq:LP_update_vc_bar_q_X};  
and $(e)$ follows from \eqref{eq:LP_c}. 

We can substitute \eqref{eq:denominator_LP} into \eqref{eq:tilde_alpha_LP} to obtain 
\begin{align}
&F_{\alpha}^{\text{LP}}(\tilde{q}_{X, Y}^{(k+1)}, q_{Y}^{(k)}) = \frac{\alpha}{\alpha-1} \log \frac{p_{X}(x)p_{Y\mid X}(y|x)\tilde{q}_{Y}^{(k)}(y)^{\frac{1}{\alpha}-1}}{\tilde{q}_{X}^{(k+1)}(x)\tilde{q}_{Y\mid X}^{(k)}(y|x)^{\frac{1}{\alpha}}} \label{eq:finite_difference_LP}
\end{align}
for all $(x, y) \in \mathcal{X}\times \mathcal{Y}$. 
Then, we can use \eqref{eq:finite_difference_LP} to obtain the upper bound of the approximation error as follows:
\begin{align}
&F_{\alpha}^{\text{LP}}(\tilde{q}_{X, Y}^{(k+1)}, q_{Y}^{(k)}) - I_{\alpha}^{\text{LP}}(X; Y) \notag \\ 
&= \sum_{x,y}\tilde{q}_{X, Y}^{\text{opt}}(x, y)F_{\alpha}^{\text{LP}}(\tilde{q}_{X, Y}^{(k+1)}, q_{Y}^{(k)}) \notag \\ 
&- \left( \frac{\alpha}{1-\alpha} D(\tilde{q}_{X, Y}^{\text{opt}} || p_{X}p_{Y\mid X}) + I(\tilde{q}_{X}^{\text{opt}}, \tilde{q}_{Y\mid X}^{\text{opt}}) \right) \\ 
&= D(\tilde{q}_{Y}^{\text{opt}} || \tilde{q}_{Y}^{(k)}) - \frac{1}{1-\alpha} D(\tilde{q}_{X, Y}^{\text{opt}} || \tilde{q}_{X}^{\text{opt}}\tilde{q}_{Y\mid X}^{(k)}) \notag \\ 
&- \frac{\alpha}{1-\alpha} D(\tilde{q}_{X}^{\text{opt}} || \tilde{q}_{X}^{(k+1)}) \\ 
&= \frac{1-2\alpha}{1-\alpha}D(\tilde{q}_{Y}^{\text{opt}} || \tilde{q}_{Y}^{(k)}) - \frac{1}{1-\alpha} D(\tilde{q}_{X, Y}^{\text{opt}} || \tilde{q}_{X}^{\text{opt}}\tilde{q}_{Y\mid X}^{(k)}) \notag \\
&+ \frac{\alpha}{1-\alpha} \left( D(\tilde{q}_{Y}^{\text{opt}} || \tilde{q}_{Y}^{(k)})-D(\tilde{q}_{X}^{\text{opt}} || \tilde{q}_{X}^{(k+1)}) \right) \\ 
&\overset{(f)}{\leq}  \frac{\alpha}{1-\alpha} \left( D(\tilde{q}_{Y}^{\text{opt}} || \tilde{q}_{Y}^{(k)})-D(\tilde{q}_{X}^{\text{opt}} || \tilde{q}_{X}^{(k+1)}) \right)  \\ 
&\overset{(g)}{\leq} \frac{\alpha}{1-\alpha} \left( D(\tilde{q}_{X}^{\text{opt}} || \tilde{q}_{X}^{(k)})-D(\tilde{q}_{X}^{\text{opt}} || \tilde{q}_{X}^{(k+1)}) \right) \\ 
&= \frac{\alpha}{1-\alpha} \sum_{x} \tilde{q}_{X}^{\text{opt}}(x) \log \frac{\tilde{q}_{X}^{(k+1)}(x)}{\tilde{q}_{X}^{(k)}(x)},
\end{align}
where $(f)$ follows from the nonnegativity of the Kullback--Leibler divergence and $\frac{1-2\alpha}{1-\alpha} \leq 0$ for $\alpha \in [1/2, 1]$ 
and $(g)$ follows from the data-processing inequality (also known as monotonicity) of the Kullback--Leibler divergence (see, e.g., \cite[Thm 2.17]{Polyanskiy_Wu_2024}).
\end{proof}


\end{document}